\newtheorem{definition}{Definition}
\newtheorem{lemma}{Lemma}
\newtheorem{proof}{Proof}
\newtheorem{observation}{Observation}
\newtheorem{corollary}{Corollary}
\DeclareMathOperator*{\argmax}{arg\,max}
\begin{document}

\title{A Participatory Budgeting based Truthful Budget-Limited Incentive Mechanism for Time-Constrained Tasks in Crowdsensing Systems}

\author{Chattu Bhargavi, Vikash Kumar Singh       
\thanks{ School of Computer Science and Engineering, VIT-AP University, Inavolu, Beside AP Secretariat, Amavarati, AP,  India}
}
\markboth{ }%
{Bhargavi \MakeLowercase{\textit{et al.}}}

\IEEEpubid{ }

\maketitle

\begin{abstract}
Crowdsensing, that is also known as crowd sensing or participatory sensing, is a method of data collection that involves gathering information from a large number of common people (or individuals), often using mobile devices, sensors, or other personal technologies. In this paper, the set-up with multiple task requesters and several task executors is considered in \emph{strategic} setting. Each task requester has multiple heterogeneous tasks and the estimated budget for the tasks. In our proposed model, the Government has the publicly known fund (or budget) and is limited. Due to limited Government fund it may not be possible for the platform to offer the funds to all the available task requesters. For that purpose, in the first tier, the voting by the city dwellers over the task requesters is carried out to decide on the subset of task requesters receiving the Government fund. Based on the city dwellers votes, the subset of task requesters is selected such that the total estimated budget of the selected task requesters is at most the Government fund. In the second tier, each task of the task requesters has start and finish times. Based on that, firstly, the tasks are distributed to distinct slots. In each slot, we have multiple task executors for executing the floated tasks. Each of the task executors report a cost (\emph{private}) for completing the floated task(s). Given the above discussed set-up, the objectives of the second tier are: (1) to schedule each task of the task requesters in the available slots, in a non-conflicting manner, and (2) to select a set of executors for the available tasks in such a way that the total incentive given to the task executors should be at most the budget for the tasks. For the discussed scenario, a truthful incentive based mechanism is designed that also takes care of budget criteria. In order to have an estimate about the number of task requesters receiving the Government's fund, the probabilistic analysis is done. Further, the theoretical analysis is done and it shows that the proposed mechanism is \emph{computationally efficient}, \emph{truthful}, \emph{budget feasible}, and \emph{individually rational}. The simulation is carried out and efficacy of the designed mechanism is compared with the state-of-the-art mechanisms. 
\end{abstract}

\begin{IEEEkeywords}
Crowdsensing, scheduling, participatory budgeting, budget feasible, incentive compatible, city dwellers.
\end{IEEEkeywords}

\section{Introduction}
\label{s:intro}
\IEEEPARstart{T}{he} term ``\emph{crowdsourcing}" was first coined in the year 2006 by Jeff Howe and Mark Robinson \cite{Howe2006}, but the roots of this concept can be found in the past centuries, as the case where the British government wants to have a measurement of the \emph{ship's longitudinal position}, or developing the well known \emph{Oxford  English dictionary} from the very initial stage \cite{Wiki_crowdsourcing}. It is the process of gathering the required data through an open call by utilizing the power of group of common people (or ``\emph{crowd workers}") \cite{NEVO2020101593, Singh_2020, JIN2020103351,fi14020049}. However, if the crowd workers are often using some sensing devices and provide the data by using those devices, then this will add a new dimension to the crowdsourcing and is termed as \emph{mobile crowdsourcing} (or \emph{mobile crowdsensing} or \emph{crowdsensing}) \cite{10443612, 8733838, Phuttharak2019ARO, s20072055,https://doi.org/10.48550/arxiv.2203.06647}.\\
\indent Over the past years, with the increased growth of the communication media and the advent of Web 2.0, it has progressed multi-fold, as the participation of crowd workers and the transfer of information in the crowdsourcing environment has become easy and economical. It is identified as having a wider range of applications such as \emph{environmental and road condition monitoring} \cite{doi:10.1177/2399808320987567, STANIEK2021554, s20195564}, \emph{healthcare} \cite{10356055, Covid-191, 8432319, Giannetsos20111295, DBLP:journals/corr/SinghM16}, \emph{natural or man-made disaster} \cite{Nagatani:2013:ERN:2421033.2421037,Poblet2014, Covid-19} and so on.\\
\indent One of the challenging aspect in mobile crowdsensing (a.k.a crowdsensing) is to have large number of individuals, using mobile devices, sensors, or other personal technologies as the task executors (a.k.a \emph{executors}) into the system in strategic setting\footnote{By strategic it is meant that the agents can gain by misreporting their private information. By private it is meant that it will be only known to him and not known to others.}. It gives rise to the question that: \emph{how to have large group of individuals using mobile devices, sensors, or other personal technologies in such systems}? One plausible solution could be to offer them some incentives (may be some monetary benefits or some social benefits). Over the past, several research in crowdsensing and crowdsourcing provides the incentives to the common individuals in strategic setting \cite{Mukhopadhyay2021, Singh_2020, Singh2019,DBLP:conf/hcomp/GoelNS14}. In \cite{Singh_2020} the proposed mechanism aims to optimize resource allocation and quality assurance in crowdsourcing scenarios, where multiple tasks need to be assigned to participating IoT devices. By utilizing combinatorial auctions, the mechanism efficiently assign tasks while considering the quality of data collected from IoT devices, ensuring reliable and high-quality crowdsourced outcomes. The paper presents experimental results demonstrating the effectiveness of the proposed mechanism in achieving improved resource allocation and quality assurance in IoT-based crowdsourcing.\\
\indent In \cite{Singh2019}, the mechanism utilizes peer evaluation to ensure high-quality contributions while adhering to budget constraints. The results obtained from simulation demonstrate the effectiveness of the proposed mechanism in achieving efficient resource allocation and incentivizing participants in IoT-based crowdsourcing scenarios. \cite{Mukhopadhyay2021} addresses the challenge of task allocation in dynamic environments where budget arrives gradually. The proposed mechanism incentivizes participants to provide truthful reports and optimizes task allocation based on the available budget. Experimental evaluations demonstrate the effectiveness of the mechanism in achieving accurate sensing outcomes while managing budget constraints and adapting to changing budget availability over time. In \cite{DBLP:conf/hcomp/GoelNS14}, an incentive compatible mechanism is proposed for the scenario where, there are a single task requester with multiple tasks and multiple task executors in strategic setting. The task requester hold some fixed budget - an amount that will be utilize as the payment (a.k.a \emph{incentive}) of the task executors that executes the tasks held by the task requester. \\
\indent From the above discussed papers and the works discussed in Section \ref{subsec:mob}, it can be seen that the set of tasks that will be floated for execution purpose is decided apriori. More formally, there exists no works in literature of crowdsourcing (more specifically mobile crowdsensing) system where the common people (or city dwellers) have flexibility to decide on the type of tasks to be floated for execution purpose. Due to this reason the crowdsensing system faces transparency issues and also a group of dissatisfied common people (or city dwellers). Now, the question is \emph{how to design a crowdsensing system with more transparency and more satisfied crowd?} The solution to the above raised query is to allow the common people to decide on the types of tasks to be floated. The city dwellers will vote on the available tasks. Further the votes will be processed and the set of tasks to be floated will be decided. Once the tasks that are to be floated are decided, after that the allocation of tasks executors to the tasks will be carried out and their payments will be decided. The involvement of city dwellers in decision making, makes the proposed crowdsensing framework more transparent and more satisfied crowd.
\begin{figure*}
        \centering
                \includegraphics[scale=0.8]{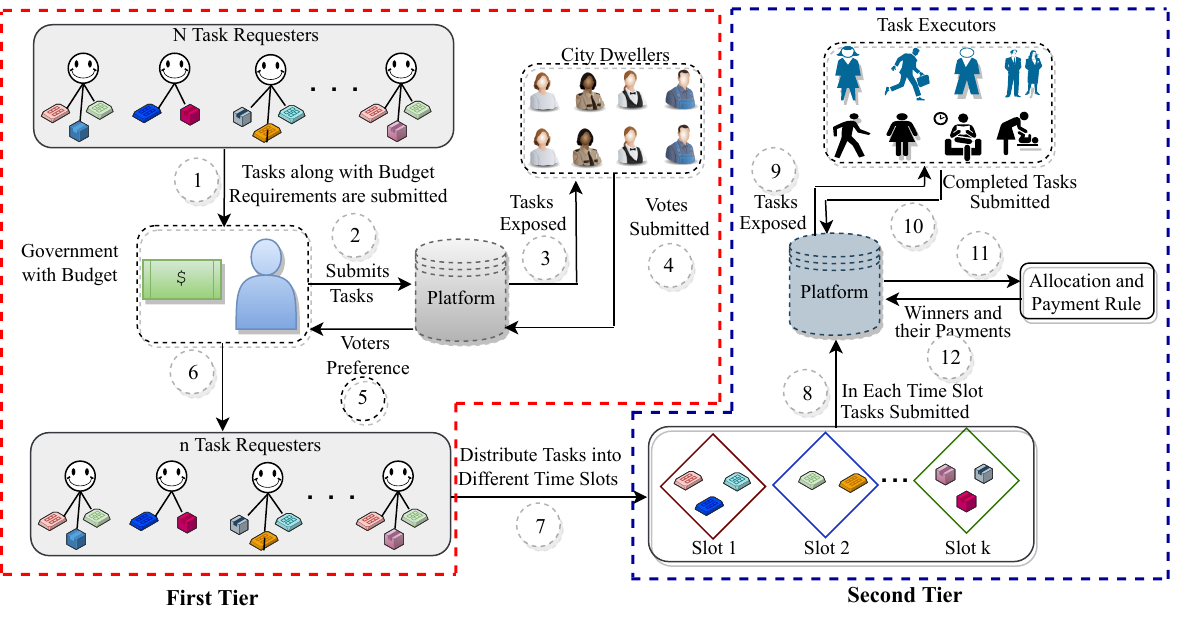}
        \caption{A Two-tiered crowdsensing framework in strategic setting}\label{fig:1aaa}
\end{figure*} 
\indent  In order to implement the above mentioned solution, in this paper, the mobile crowdsensing framework is studied as a two-tiered crowdsensing scenario as shown in FIGURE \ref{fig:1aaa}. In the first tier, there are $\mathcal{N}$ task requesters and each of them is holding multiple heterogeneous tasks that is to be completed by executors. Each task of the task requesters has a start and finish times. Also, each of the task requesters has an estimate on the budget required to get executed the endowed sensing tasks by the task executors. Now, this gives rise to the question that: \emph{who will supply the required budget (or fund) to the task requesters}? One of the plausible solutions could be the fund may be supplied by the \emph{Government} or by \emph{some big organizations} (such as Amazon, Toyota Motor, Reliance, TATA, etc.). However, it could be the case that the \emph{Government} may not have the sufficient funds to serve all the tasks revealed by the task requesters in the market. If that is the case, then the question is: \emph{which task requesters should be getting the funds and why}? In order to tackle this challenge, the \emph{Government} thought that why not to involve the city-dwellers of the city in order to decide that the available fund of the \emph{Government} should be invested for which set of tasks and based on that the subset of tasks will be getting the budget. The city-dwellers give votes on the tasks which they think that are worthwhile in their eyes. The votes of the city-dwellers are collected and processed. Based on the votes of the city dwellers and the available budget of the \emph{Government}, the subset of tasks get qualified for the funds.\\
\indent Once it is finalized that which set of task requesters will be receiving the funds from the Government, our next prime objective is to schedule the tasks into distinct slots in such a way that, all the tasks in any $i^{th}$ slot should be compatible among themselves. The reason behind distributing the tasks to different time slot is to give chance to the task executors to participate into multiple tasks execution process. Now, in each slot we have a budget associated with the available set of tasks. On the opposite side of the crowdsensing market there exists multiple task executors that reports a bid value for executing a particular task. Now in each slot the task executors execute the tasks and give back the executed tasks along with the cost to the platform. For each task, the platform selects the subset of executors from among the available ones in such  a manner that the incentive received by the executors is at most the budget allocated to the respective tasks. Similar procedure is applied in each of the available slots. Designing an incentive mechanism that takes into account the interest of common people on deciding the type of tasks to be considered given limited budget is \emph{non-trivial} and has not been done before. It is \emph{non-trivial} in the sense that the price (or payment) that is offered to the task executors depends on two components: (1) the budget allocated to the tasks of the task requesters, and (2) the bid value (that is a private information) of the task executors. In this article, an incentive  mechanism is designed that ensures the following: (1) distribution of the Government's fund to the most demanding tasks of the task requesters, (2) scheduling the tasks to the distinct slots such that in each slot the tasks are compatible, and (3) allocation of tasks to the task executors and deciding their payments. An incentive based truthful mechanism in this paper is called BUdget-Limited INcentive based mechanism for Crowdsensing system (BULINC). 

\subsection{Our contribution}
The contributions of this paper are:
\begin{enumerate}
\item Firstly, it is decided that among the available tasks which of the tasks will be getting their estimated budget from the fund raisers? For this purpose, the help from city dwellers are taken and they provide their votes over the tasks that they think are worthwhile. Based on the preferences received from the city dwellers, the fund raiser spend their available fund on the subset of tasks from among the available tasks.

\item Secondly, the set of selected tasks are segregated to $|d|$ different slots based on their starting time and finishing time. It is done to ensure that the overlapping tasks are placed in different slots and it will be helpful to the task executors (TEs) in executing multiple tasks during a course of time.

\item Thirdly, from each of the available slots, for each task, the set of executors are considered as the winners among the available executors such that the budget invested in the payment of the executors is within the budget assigned to the tasks. 

\item An incentive compatible mechanism called BULINC is proposed that also take care of the budget constraint criteria. The theoretical analysis shows that BULINC is \emph{truthful} (see Lemma \ref{lem:2}), \emph{computationally efficient} (see Lemma \ref{lem:1}), \emph{individually rational} (see Lemma \ref{lem:3}), and \emph{budget feasible} (see Lemma \ref{lem:3bf}). Also, in order to have an estimate about the number of task requesters receiving the Government's fund, the probabilistic analysis is done (see Lemmas \ref{the1}, \ref{lemma:21}, and \ref{lemma:ATO}). 

\item The simulations are performed for the two tiers independently to measure the performance of BULINC with the state-of-the-art mechanisms.  
\end{enumerate}

\subsection{Paper Structure}
 The rest of the article is organized as follows. In section \ref{sec:rw} the works related to \emph{participatory budgeting}, and \emph{crowdsensing} are presented and elaborated. Section \ref{s:prelim} illustrates the proposed system model and discuss about game-theoretic properties. BULINC is introduced and discussed in section \ref{ref:ehpppm}. In section \ref{sec:atppm} the probabilistic analysis and the theoretical analysis of BULINC is carried out in the order mentioned. The simulations are done to compare BULINC with the state-of-the-art mechanisms in section \ref{sec:ef}. The conclusion with the future directions of the work is presented in section \ref{sec13}.

\section{Related Works}\label{sec:rw}
In this section, the works related to the areas of participatory budgeting, and mobile crowdsensing are discussed one-by-one in the given order.   

\subsection{Participatory Budgeting}
As the first tier of the proposed framework is utilizing the idea of participatory budgeting; therefore the works related to participatory budgeting is discussed in this subsection. In recent years, the practice of participatory budgeting has drag the attention of policy-makers, researchers, and the scholars from around the globe \cite{10.1177/0956247817746279, 10.1177/0020852320943668, 10.1177/0020852321991208}. The participatory democracy a.k.a direct democracy has imbibed lot of attention of late as more and more big cities involving the city-dwellers directly in policy-making \cite{pateman_2012, T.roughgarden_20164, A.Goel}. In fact, there is a huge interest in participatory democracy in general (interested reader can see \cite{pateman_2012,Smith:2009:DID:1795662, 10.1177/00208523221078938} for useful review of the relevant literature). Now a days, all the studies and development in participatory democracy is done under the newly coined terminology called participatory budgeting (PB) \cite{pbp}. In this, the government have some budget with them and they asks the local residents (or city-dwellers) to think of projects/ideas to decide how effectively and efficiently the endowed budget could be utilized in the development of the city. The supplied proposal from the residents side could be \emph{placing lamp post on street}, \emph{renovation of parks}, \emph{building of schools}, or \emph{building a hospital}. \cite{pbp} describes the utilization of PB in \emph{Brazil} and also draws on the experience of 25 municipalities in \emph{Latin America} and \emph{Europe}.\\
\indent In \cite{Aaron}, the author provide the evidence of how the increased participation in the decision making has changed investment patterns in favor of sectors such as \emph{housing},  \emph{sanitation and health}, \emph{education}. In \cite{A.Goel}, they have coined the question: ``\emph{when there are projects with different costs, and fixed budget, how can the varied preferences of voters be best aggregated?''} They have investigated the current voting method in practice $i.e.$ ``$k$-approval voting", underlining its drawback and proposed a novel scheme for this setting called ``\emph{knapsack voting}".

\begin{table*} 
\caption{Summary of selected papers discussed in Subsection \ref{subsec:mob}}
\label{tab: first table}
\centering
\scalebox{0.99}{
\begin{tabular}{c c c c c c c c} 
\hline
\hline
\textbf{Papers} & \textbf{Task requesters} & \textbf{Task executors} & \textbf{Tasks} & \textbf{Budget feasible} & \textbf{Truthful} & \textbf{City dwellers involved} & \textbf{Year}\\
\hline
  \cite{Singh_2020} & Multiple & Multiple & Homogeneous  &\checkmark & \checkmark & $\times$ & 2020\\ 
  \hline
\cite{https://doi.org/10.48550/arxiv.2203.06647} & Multiple & Multiple &  Multiple homogeneous & $\times$ & \checkmark & $\times$ & 2022  \\
\hline
 \cite{Mukhopadhyay2021} & Single & Multiple & Heterogeneous &\checkmark & \checkmark & $\times$ & 2022 \\ 
  \hline
 \cite{Singh2019} & Multiple & Multiple & Heterogeneous &\checkmark & \checkmark & $\times$ & 2020 \\
  \hline
 \cite{8031314} & None & Multiple mobile \\ & & vehicles with devices  & Multiple heterogeneous  &\checkmark & \checkmark & $\times$ & 2017 \\ 
  \hline
   \cite{6848055} & None & Multiple  & Multiple heterogeneous  & $\times$ & \checkmark  & $\times$ & 2014 \\ 
  \hline
  \cite{DBLP:journals/tpds/WangGCG18} & Single & Multiple & Homogeneous &\checkmark & \checkmark & $\times$ & 2018 \\ 
  \hline
  \cite{8667429} & Multiple & Multiple  & Multiple task modules  & $\times$ & $\times$ & $\times$ & 2019\\ 
  \hline
  \cite{9877885} & None & Multiple & Multiple heterogeneous & $\times$ & $\times$ & $\times$ & 2023\\ 
  \hline
  \cite{9877939} & Multiple & Multiple & Multiple homogeneous \\
                        &            &             & and heterogeneous & $\checkmark$ & $\checkmark$ & $\times$ & 2023\\
  \hline
  \cite{9992184} & None & Multiple & Multiple heterogeneous & $\checkmark$ & $\checkmark$ & $\times$ & 2024\\ 
    \hline
    \cite{10.1504/ijwgs.2021.116536} & Multiple & Multiple & Multiple heterogeneous/ \\
                                                    &             &             & homogeneous & $\times$ & $\checkmark$ & $\times$ & 2021\\ 
    \hline
  \textbf{Proposed} & \textbf{Multiple} & \textbf{Multiple} & \textbf{Multiple heterogeneous} & $\boldsymbol{\checkmark}$ & $\boldsymbol{\checkmark}$ & $\boldsymbol{\checkmark}$ & \textbf{-}\\ 
  \hline
  \hline
\end{tabular}
}
\end{table*} 
\subsection{Mobile Crowdsensing}
\label{subsec:mob}
The researchers in the field of crowdsensing (or mobile crowdsensing) can go through the following research articles \cite{SUHAG2023102952, e1, 9701340, 10.1145/3185504, Daniel:2018:QCC:3177787.3148148, 10256160, Phuttharak2019ARO,  8570744, BHATTI2020110611, s20072055, JIN2020103351} to have an idea about the field. In MCS, the major challenge is to drag the group of common people (called \emph{task executors}) for performing the tasks. In order to meet the major challenge of MCS mentioned above, several mechanisms (or algorithms) are designed that offer incentives to the crowd workers in return of executing or completing the assigned tasks \cite{Singh_2020, https://doi.org/10.48550/arxiv.2203.06647, Singh2019, Mukhopadhyay2021, 6848055,  10.1145/3371425.3371459, DBLP:journals/tpds/WangGCG18, 8667429}.\\
\indent In \cite{Singh_2020}, a quality based truthful mechanism is designed for allocating quality task executors to each of the tasks of the task requesters. In this, the task executors asks for bundle of tasks and report cost against the reported bundle of tasks. Here, both the bundle of tasks and the costs are private information. In \cite{https://doi.org/10.48550/arxiv.2203.06647}, an ascending auction based mechanism is designed for allocating the homogeneous tasks held by several task requesters to the quality tasks executors (in this case IoT devices). Once allocated, the task executors complete the allocated tasks and receive payment in return for their completed tasks. One of the constraints that is placed on the valuation of task executors and task requesters is that the valuations follow \emph{decreasing marginal return} policy. In \cite{Mukhopadhyay2021} the proposed framework consists of a single task requester having single task and a budget, and several task executors. The full budget that is required by the task requester is not available but some part of the budget is available in the crowdsensing market. The remaining budget of task requester will be brought into the system in multiple cycles. For this scenario, a truthful mechanism is designed that also take care that the total incentives given to the task executors is within the full budget of the task requester for the tasks. In \cite{Singh2019}, an incentive compatible mechanism is proposed for hiring a subset of quality tasks executors (in this case the IoT devices) for each task. The task executors are selected in such a manner that the total incentive provided to the task executors is within or equal to the budget allotted to each task.  In \cite{8031314}, two different kinds of incentive mechanisms are proposed in IoT-based crowdsensing for traffic surveillance or environmental pollution monitoring systems. In \cite{6848055} the sensory tasks along with the location of the tasks are reported by the platform. In this set-up, the smartphone users are the task executors and execute the tasks that lie in the coverage area of respective smartphone users. For executing the tasks in their coverage the task executors charges some cost and is private entity. For the discussed crowdsensing scenario, an allocation and payment rules are determined for allocating the sensory tasks to the task executors and for deciding the payment of the task executors.\\
\indent In \cite{DBLP:journals/tpds/WangGCG18} a truthful mechanism is designed that selects the crowd workers that maintain some threshold quality while doing the assigned tasks. In this the quality of task executors are variable $i.e.$ more the task executors are executing the tasks, the more they are becoming experienced and hence the quality is increasing.  In \cite{10.1145/3371425.3371459} an experience based truthful mechanism is proposed that utilizes the idea reverse auction for selecting the crowd worker and to decide their payment. It is also proved that the proposed mechanism have four properties: \emph{truthfulness}, \emph{individual rationality}, \emph{computational efficiency}, and \emph{profitability}. In \cite{9877885} a heuristic mechanism is designed for incentivizing the task executors in the social MCS system. The problem in MCS is solved through the two stage mechanism that considers (1) influencing other task executors through social connection, and (2) the quality of the work delivered. In \cite{9877939} the set-up consists of multiple service requesters having private budgets and multiple users carrying mobile devices in strategic setting. The efficient, fair, and budget feasible mechanism is designed for both homogeneous tasks and heterogeneous tasks setting. The proposed mechanism satisfies several economic properties such as \emph{truthfulness}, \emph{individual rationality}, \emph{budget feasibility}, and \emph{fairness}. \cite{9992184} studies bi-objective optimization case of mobile crowdsensing that optimizes (1) total value function, and (2) coverage function that have budget/cost constraint. 
In \cite{8667429} the focus is on performing a specific software development task by utilizing the concept of crowdsourcing. It utilizes the active time of crowd workers to group them together and then allocate the respective group of crowd workers to the multiple tasks. In \cite{10.1504/ijwgs.2021.116536} a double auction based truthful incentive mechanism is designed for the time constrained tasks in mobile crowdsensing. As the tasks are having starting and finishing times and may overlap in time. So, first of all the tasks are distributed to different slots in a non-overlapping manner. Once distributed, an egalitarian approach is utilized to allocate the tasks in a balanced manner to the task executors and the payment made to the task executors is decided. \\
\indent From the past works in crowd sensing discussed above it can be inferred that there is no involvement of city dwellers (or common people) to decide on the subset of tasks to be floated in the market for execution purpose and is carried out in this paper.  
\section{Notation and Preliminaries}\label{s:prelim}
In this section, the crowdsensing scenario discussed in Section \ref{s:intro} is modeled using mechanism design. There are $\mathcal{N}$ task requesters and $\mathcal{M}$ task executors. The set $\boldsymbol{r} = \{\boldsymbol{r}_1, \boldsymbol{r}_2, \ldots, \boldsymbol{r}_\mathcal{N}\}$ represents the set of task requesters and the set $\boldsymbol{e} = \{\boldsymbol{e}_1, \boldsymbol{e}_2, \ldots, \boldsymbol{e}_\mathcal{M}\}$ represents the set of task executors. Each task requester $\boldsymbol{r}_i$ is endowed with a set of heterogeneous tasks and is given as $\boldsymbol{t}^i = \{\boldsymbol{t}_1^i, \boldsymbol{t}_2^i, \ldots, \boldsymbol{t}_{n_i}^i\}$. Here, $\boldsymbol{t}_{k}^i$ represents $k^{th}$ task of $i^{th}$ task requester. Each of the task requesters $\boldsymbol{r}_i$ has an estimate on the cost (or budget), that will be required for the successful completion of the tasks held by him/her (henceforth him) and is represented by $\mathcal{B}_i$. Each task $\boldsymbol{t}_j^i$ have a starting time and a finishing time. The starting and finishing times of a task $\boldsymbol{t}_j^i$ is given as $s_j^i$ and $f_j^i$ respectively. Here, $f_j^i \geq s_j^i$. The set of starting time and the set of finishing time of the tasks of the task requester $\boldsymbol{r}_i$ is given as $s^i = \{s_1^i, s_2^i, \ldots, s_{n_i}^i\}$ and $f^i = \{f_1^i, f_2^i, \ldots, f_{n_i}^i\}$ respectively. The starting time vector of the tasks of all the task requesters is given as $s = \{s^1, s^2, \ldots, s^{\mathcal{N}}\}$ and the finishing time vector of the tasks of all the task requesters is given as $f = \{f^1, f^2, \ldots, f^{\mathcal{N}}\}$. In this paper, it is considered that, initially the estimated budget $\mathcal{B}_i$ is not available to any of the task requesters $\boldsymbol{r}_i$ and is made available by the fund raisers (may be \emph{Government} or \emph{some big organization}). As already discussed the set of city-dwellers is given as $\mathbb{C} = \{\mathbb{C}_1, \mathbb{C}_2, \ldots, \mathbb{C}_p\}$ will provide their votes (or preferences) on the set of task requesters whose tasks they have identified to be worthwhile. If say the tasks of task requester $\boldsymbol{r}_j$ is preferred over the tasks of task requester $\boldsymbol{r}_k$ by the city dweller $\mathbb{C}_i$ then it is represented as $\boldsymbol{r}_j \succ_i \boldsymbol{r}_k$. The set of preference ordering of all the city dwellers is given as $\succ = \{\succ_1, \succ_2, \ldots, \succ_p\}$. The gain of city-dweller $i \in \mathbb{C}$ from the tasks of $\boldsymbol{r}_j$ is denoted by $\boldsymbol{\mu}_{i,j}$. Ideally, the goal is to select the subset of task requesters among the available task requesters whose tasks will yield the maximum gain for the city dwellers along with the constraint that the sum of the budgets associated by the selected task requesters should be at most  the Government's fund $i.e.$ $\mathcal{B}$. Mathematically,
\begin{equation}
\argmax\limits_{\boldsymbol{\mathbb{F}} \in \boldsymbol{r}} \sum\limits_{j \in \boldsymbol{\mathbb{F}}} \bigg( \frac{1}{|\mathbb{C}|} \sum\limits_{i \in \mathbb{C}} \boldsymbol{\mu}_{i,j}\bigg)~~~subject~to~~
\sum\limits_{j\in \boldsymbol{\mathbb{F}}} \mathcal{B}_j \leq \mathcal{B}
\end{equation} 
\begin{table*}[t]
\caption{Notations used}
\label{tab: first table}
\centering
\begin{tabular}{c|c}
\hline
\textbf{Symbols} & \textbf{Descriptions}\\
\hline
$\boldsymbol{r}$ &  $\boldsymbol{r} = \{\boldsymbol{r}_1, \boldsymbol{r}_2, \ldots, \boldsymbol{r}_\mathcal{N}\}$: Set of task requesters.\\
$\boldsymbol{e}$ & $\boldsymbol{e} = \{\boldsymbol{e}_1, \boldsymbol{e}_2, \ldots, \boldsymbol{e}_\mathcal{M}\}$: Set of task executors.\\
 $\boldsymbol{r}_i$ & $i^{th}$ task requester.\\
$\boldsymbol{e}^\ell$ & $\boldsymbol{e}^\ell = \{\boldsymbol{e}_1, \boldsymbol{e}_2, \ldots, \boldsymbol{e}_{m_\ell}\}$: Set of task executors in any $\ell^{th}$ slot.\\ 
$\boldsymbol{e}_i$ & $i^{th}$ task executor.\\
$\boldsymbol{t}^i$ & $\boldsymbol{t}^i$ = $\{\boldsymbol{t}_1^i, \boldsymbol{t}_2^i, \ldots, \boldsymbol{t}_{n_i}^i\}$: Set of heterogeneous tasks.\\
$\boldsymbol{t}_{k}^i$ & $k^{th}$ task of $i^{th}$ task requester.\\
$\mathcal{B}_i$ & Estimated budget of task requester $\boldsymbol{r}_i$.\\
$\mathcal{B}$ & Available budget.\\
$s^i$ & $s^i = \{s_1^i, s_2^i, \ldots, s_{n_i}^i\}$: Set of starting times for the tasks of $\boldsymbol{r}_i$ task requester.\\
 $f^i$ & $f^i = \{f_1^i, f_2^i, \ldots, f_{n_i}^i\}$: Set of finishing times for the tasks of $\boldsymbol{r}_i$ task requester.\\
$s$ & $s = \{s^1, s^2, \ldots, s^{\mathcal{N}}\}$: Starting time vectors for the tasks of all the task requesters.\\
$f$ & $f = \{f^1, f^2, \ldots, f^{\mathcal{N}}\}$: Finishing time vectors for the tasks of all the task requesters.\\
$\mathbb{C}$ & $\mathbb{C} = \{\mathbb{C}_1, \mathbb{C}_2, \ldots, \mathbb{C}_p\}$: Set of city-dwellers.\\
$\succ$ & $\succ = \{\succ_1, \succ_2, \ldots, \succ_p\}$: Set of preference ordering of all the city dwellers.\\
$\boldsymbol{\mu}_{i,j}$ & The gain of city-dweller $i \in \mathbb{C}$ from the tasks of $\boldsymbol{r}_j$.\\
 $\mathbb{F}$ & $\mathbb{F} = \{\boldsymbol{r}_1, \boldsymbol{r}_2, \ldots, \boldsymbol{r}_n\}$: Set of task requesters belongs to winning set.\\ 
$v(\mathcal{A}_{ij})$ &  Publicly known valuation for subset of task executors $\mathcal{A}_{ij} \subseteq \boldsymbol{e}^\ell$.\\
 $c_i$ &  Cost of $i^{th}$ task executor.\\
  $c$ & $c = \{c_1, c_2, \ldots, c_\mathcal{M}\}$: Cost vector of the task executors.\\
 $u_{i}$ & Utility of $i^{th}$ task executor.\\ 
 $p_i$ & Payment of $i^{th}$  task executor.\\ 
\hline
\end{tabular}
\end{table*}                   

Here,  $\mathbb{F}$ is the set of task requesters belongs to winning set and is given as $\mathbb{F} = \{\boldsymbol{r}_1, \boldsymbol{r}_2, \ldots, \boldsymbol{r}_n\}$, $\boldsymbol{\mu}_{i,j}$ = $[Government~ budget~ spent~ on~ tasks~ of~ \boldsymbol{r}_j]$. The discussed optimization problem reduces to the \emph{Knapsack} problem \cite{A.Goel} with $\boldsymbol{r}$ is the set of task requesters to be fitted into a knapsack of capacity $\mathcal{B}$.  For any task requester $\boldsymbol{r}_j$, the average utility of the city dwellers is given as $\bigg( \frac{1}{|\mathbb{C}|} \sum\limits_{i \in \mathbb{C}} \boldsymbol{\mu}_{i,j}\bigg)$. In this it is assumed that for any task requester $\boldsymbol{r}_j$  some subset of tasks may get the fund and not to all. It means fractional funding is allowed in our proposed framework.  So, the output of the first tier is the set of task requesters whose tasks will be funded by the fund raisers (may be Government).\\
\indent Once the task requesters receive the funds, in the second tier following the crowdsourcing framework, each task requester $\boldsymbol{r}_i$ submits a tuple $\big<\mathcal{B}_i,s^i,f^i,\boldsymbol{t}^i\big>$ to the platform. On receiving the tuples from the participating task requesters, firstly, for each task requester, the set of compatible and incompatible tasks are determined. The two tasks $\boldsymbol{t}_j^i$ and $\boldsymbol{t}_k^i$ are said to be incompatible if one of the following occurs: (1) $s_j^i \leq s_k^i \leq f_k^i \leq f_j^i$ or (2) $s_j^i \leq s_k^i \leq f_j^i \leq f_k^i$ or (3) $s_k^i \leq s_j^i \leq f_k^i \leq f_j^i$ or (4) $s_k^i \leq s_j^i \leq f_j^i \leq f_k^i$. In order to make the tasks compatible with each other they are distributed into $|d|$ time slots, such that in any particular time slot the available tasks are compatible. It will allow the task executors to execute multiple tasks present in different time slots. Once the tasks of task requester $\boldsymbol{r}_i$ are populated to $d_i$ different time slots, such that $d_i \ll d$, the overall available budget $\mathcal{B}_i$ of the task requester $\boldsymbol{r}_i$ is distributed equally to the tasks of $\boldsymbol{r}_i$ task requester in $d_i$ slots and is given as $\big\lfloor \frac{\mathcal{B}_i}{n_i} \big\rfloor$. In any $\ell^{th}$ slot, say, there are $m_\ell$ number of task executors and is given as $\boldsymbol{e}^\ell = \{\boldsymbol{e}_1, \boldsymbol{e}_2, \ldots, \boldsymbol{e}_{m_\ell}\}$. Each task executor $\boldsymbol{e}_i$ has a privately held cost $c_i$ (cost he will charge for executing the subset of tasks). In any $\ell^{th}$ slot, for the task $\boldsymbol{t}_j^i$ of $\boldsymbol{r}_i$ task requester, there exist a publicly known valuation $v(\mathcal{A}_{ij})$  for subset of task executors $\mathcal{A}_{ij} \subseteq \boldsymbol{e}^\ell$. It represents the social welfare acquired from $\mathcal{A}_{ij}$. It is assumed that $v(\phi) = 0$ and $v(\mathcal{A}_{ij}) \leq v(\mathcal{F}_{ij})$ for any $\mathcal{A}_{ij} \subseteq \mathcal{F}_{ij} \subseteq \boldsymbol{e}^\ell$. In this set-up it is assumed that the valuation function is submodular. By submodular function it is meant that:
\begin{equation*}
v(\mathcal{A}_{ij}) + v(\mathcal{F}_{ij}) \geq v(\mathcal{A}_{ij} \cap \mathcal{F}_{ij}) + v(\mathcal{A}_{ij} \cup \mathcal{F}_{ij})
\end{equation*}
for any $\mathcal{A}_{ij}, \mathcal{F}_{ij} \subseteq \boldsymbol{e}^\ell$. As the participating task executors are \emph{strategic}, so any $i^{th}$ task executor may report his private cost as $\hat{c}_i$ instead of $c_i$ such that $\hat{c}_i \neq c_i$ so as to maximize his utility. The cost vector of the task executors is given as $c = \{c_1, c_2, \ldots, c_\mathcal{M}\}$. The utility of any task executor $\boldsymbol{e}_i$ is given as
\begin{equation}
\label{equ:1}
u_{i} =
\begin{cases}
p_i - c_i, & \text{if $\boldsymbol{e}_i$ wins} \\
0, & \text{otherwise}
\end{cases}
\end{equation}
where, $p_i$ is the payment made to $\boldsymbol{e}_i$ task executor. For the scenario of second tier, in each slot $j$, for each task $\boldsymbol{t}_j^i$ of any task requester $\boldsymbol{r}_i$, the prime goal is to have a bunch of task executors that will help in maximizing the social welfare $i.e$ $v(\mathcal{A}_{ij})$ with the constraint that $\sum\limits_{h \in \mathcal{A}_{ij}} c_h \leq \big\lfloor \frac{\mathcal{B}_i}{n_i} \big\rfloor$. For all the tasks of the $i^{th}$ task requester we have,  
\begin{equation}
\max\limits_{\mathcal{A}_{ij} \subseteq \boldsymbol{e}^\ell}\sum_{j = 1}^{n_i} v(\mathcal{A}_{ij})~~~subject~to~~
\sum_{j = 1}^{n_i}\sum\limits_{h \in \mathcal{A}_{ij}} c_h \leq \bigg\lfloor \frac{\mathcal{B}_i}{n_i} \bigg\rfloor
\end{equation}  
\subsection{Desirable Properties}
In this paper, BULINC satisfies the following properties: 
\begin{definition}[\textbf{Truthfulness (or incentive compatibility) \cite{NNis_Pre_2007}}]
\label{def:1}
BULINC is \emph{truthful} (or incentive compatible), if the participating task executors get maximum utility by reporting their private information in truthful manner. More formally, for any task executor $\boldsymbol{e}_i$ we have $u_{i} = p_i - c_i \geq \hat{u}_i = p_i - \hat{c}_i$, where $\hat{c}_i \neq c_i$.  
\end{definition}

\begin{definition}[\textbf{Individual rationality \cite{NNis_Pre_2007}}]
\label{def:3}
BULINC is individual rational, if the participating task executors are having zero or some positive utility. In other words, $u_i = p_i - c_i \geq 0$ for any $i^{th}$ task executor.   
\end{definition}

\begin{definition}[\textbf{Budget feasibility \cite{NNis_Pre_2007}}]
\label{def:2}
BULINC is budget feasible, if the payment received by the winning task executors is at most the available budget. More formally, $\sum\limits_{h \in \mathcal{A}_{ij}} c_h \leq \big\lfloor \frac{\mathcal{B}_i}{n_i} \big\rfloor$.
\end{definition}

\begin{definition}[\textbf{Computational efficiency}]
\label{def:4}
BULINC is computationally efficient, if the budget distribution by the fund raiser, allocation of jobs in a non-overlapping manner, and the allocation and payment determination of the task executors can be done in polynomial time.
\end{definition} 

\section{BUdget-Limited INcentive based mechanism for Crowdsensing system (BULINC)}
\label{ref:ehpppm}
In this section, BULINC is discussed and presented. BULINC consists of: (1) Budget distribution mechanism, (2) Compatible tasks distribution mechanism, and (3) Allocation and pricing rule.
\subsection{\textsc{Budget Distribution Mechanism}}
\label{sub:bdm}
It helps to decide on \emph{which of the task requesters should be allocated the Government fund}? The idea of \emph{Budget Distribution Mechanism} is in parallel to knapsack voting \cite{T.roughgarden_20164, A.Goel} and is presented in Algorithm \ref{algo:10}. The inputs to Algorithm \ref{algo:10} are the task requesters, the city dwellers, and budget. In line 1 of Algorithm \ref{algo:10} the set that will contain the winning task requesters $\mathbb{F}$ is set to $\phi$. Lines 2-5 generate the preference ordering (or votes) of the city dwellers over the task requesters. The preference ordering of each of the city dwellers $\mathbb{C}_i$ is held in $\succ_i$ in line 3. In line 4 the set $\succ$ holds the vote of all the city dwellers over the task requesters. Line 6 sorts the task requesters in ascending order based on the votes received from the city dwellers. From the ordering obtained after sorting, every time $\boldsymbol{r}_i$  task requester is taken and checked whether the budget required by any task requester $\boldsymbol{r}_i$ is less than the available Government's budget or not.
\begin {algorithm}[H]
\caption{\textsc{Budget Distribution Mechanism} ($\boldsymbol{r}$, $\mathbb{C}$, $\mathcal{B}$)}
\label{algo:10}
\noindent
\begin{algorithmic}[1]
\STATE $\mathbb{F} \gets \phi$, $\succ \gets \phi$
\FOR{each $\mathbb{C}_i \in \mathbb{C}$}
\STATE $\succ_i$ $\leftarrow$ GeneratePref ($\boldsymbol{r}$) \COMMENT{GeneratePref ($\boldsymbol{r}$) generates the preference of each city dweller $\mathbb{C}_i$ over subset of task requesters.}
\STATE $\succ$ $\leftarrow$ $\succ$ $\cup$ $\succ_i$
\ENDFOR
\STATE $\hat{\boldsymbol{r}}$ $\gets$ Sort ($\boldsymbol{r}$, $\succ$) \COMMENT{Sort task requesters in ascending order of votes received from city dwellers.}
\FOR{each $\boldsymbol{r}_i \in \hat{\boldsymbol{r}}$}
\IF{$\mathcal{B}_i$ $\leq$ $\mathcal{B}$}
\STATE $\mathbb{F} \leftarrow \mathbb{F} \cup \{\boldsymbol{r}_i\}$ \COMMENT{$\mathbb{F}$ holds the set of task requesters that receives Government fund.}
\STATE $\mathcal{B}\leftarrow \mathcal{B} - \mathcal{B}_i$
\ENDIF
\ENDFOR
\STATE return $\mathbb{F}$
\end{algorithmic}
\end{algorithm}
If the budget criteria mentioned in line 8 of Algorithm \ref{algo:10} is satisfied then in that case the task requester $\boldsymbol{r}_i$ is included in the winning set in line 9. In line 10 the budget $\mathcal{B}_i$ allocated to task requester $\boldsymbol{r}_i$ is deducted from the overall available budget. The process in lines 7-12 iterates till all the task requesters present in the ordering are accessed. Finally in line 13 the winning set $\mathbb{F}$ is returned as the selected set of task requesters.

 \subsubsection{Illustration of Budget Distribution Mechanism with an Example} 
In this subsection, let us consider an example to understand Algorithm \ref{algo:10}. Suppose, we have a set of 10 city dwellers    $\mathbb{C} = \{\mathbb{C}_1, \mathbb{C}_2, \ldots, \mathbb{C}_{10}\}$ and set of 5 task requesters  $\boldsymbol{r} = \{\boldsymbol{r}_1, \boldsymbol{r}_2, \ldots, \boldsymbol{r}_5\}$ along with set of heterogeneous tasks  $\boldsymbol{t}$ = $\{ \boldsymbol{t}^1,  \boldsymbol{t}^2, \ldots,\boldsymbol{t}^5\}$. The Government's budget is taken as $\mathcal{B}$ = 100\$. Let us say the budget requirement for $\boldsymbol{r}_1$ is $\mathcal{B}_1$ =  10\$, for $\boldsymbol{r}_2$ is $\mathcal{B}_2$ = 20\$, for $\boldsymbol{r}_3$ is $\mathcal{B}_3$ = 30\$, for $\boldsymbol{r}_4$ is $\mathcal{B}_4$ = 40\$, for $\boldsymbol{r}_5$ is $\mathcal{B}_5$ = 50\$. Now, the set of city dwellers that are present in  $\mathbb{C}$ are giving the votes over the subset of task requesters present in set $\boldsymbol{r}$ such that the sum of the budget requirements of the tasks requesters in the preference ordering of city dwellers is within the Government budget $\mathcal{B} = 100\$$.  

Say, a city dweller $\mathbb{C}_1$ gives preference as $\boldsymbol{r}_4\succ_1 \boldsymbol{r}_5 \succ_1\boldsymbol{r}_1$, $\mathbb{C}_2$ gives preference as $\boldsymbol{r}_5\succ_2 \boldsymbol{r}_2\succ_2 \boldsymbol{r}_3$, $\mathbb{C}_3$ gives preference as $\boldsymbol{r}_2\succ_3\boldsymbol{r}_5\succ_3\boldsymbol{r}_1$, $\mathbb{C}_4$ gives preference as $\boldsymbol{r}_3\succ_4\boldsymbol{r}_2\succ_4\boldsymbol{r}_1\succ_4\boldsymbol{r}_4$, $\mathbb{C}_5$ gives preference as $\boldsymbol{r}_1\succ_5\boldsymbol{r}_3\succ_5\boldsymbol{r}_5$, $\mathbb{C}_6$ gives preference as $\boldsymbol{r}_5\succ_6\boldsymbol{r}_2\succ_6\boldsymbol{r}_3$, $\mathbb{C}_7$ gives preference as $\boldsymbol{r}_5\succ_7\boldsymbol{r}_4\succ_7\boldsymbol{r}_1$ and similarly $\mathbb{C}_8$, $\mathbb{C}_9$, $\mathbb{C}_{10}$  give preferences as $\boldsymbol{r}_5\succ_8\boldsymbol{r}_3\succ_8\boldsymbol{r}_2$, $\boldsymbol{r}_3\succ_9\boldsymbol{r}_5\succ_9\boldsymbol{r}_2$, and $\boldsymbol{r}_2\succ_{10}\boldsymbol{r}_5\succ_{10} \boldsymbol{r}_3$ respectively. Following line 6 of Algorithm \ref{algo:10} the task requesters are sorted in descending order of the number of votes received (the task requesters appearing at the first position in the preference ordering of the city dwellers.) i.e. $\boldsymbol{r}_5$= 4 votes, $\boldsymbol{r}_2$= 2 votes, $\boldsymbol{r}_3$= 2 votes, $\boldsymbol{r}_4$= 1 vote, $\boldsymbol{r}_1$= 1 vote. From the sorted ordering, firstly the task requester $\boldsymbol{r}_5$ is picked up and a check is made that $50\$ \leq 100\$$, as the condition is true so $\boldsymbol{r}_5$ is placed in $\mathbb{F}$ $i.e.$ $\mathbb{F}$ = $\{\boldsymbol{r}_5\}$. Now the remaining Government's budget is $50\$$. In the next iteration, $\boldsymbol{r}_2$ is picked up from the sorted ordering and a check is made that $20\$ \leq 50\$$, as the condition is true so $\boldsymbol{r}_2$ is placed in $\mathbb{F}$ $i.e.$ $\mathbb{F}$ = $\{\boldsymbol{r}_5, \boldsymbol{r}_2\}$. Now, the remaining Government budget $\mathcal{B} = 30\$$. In similar manner in the next iteration $\boldsymbol{r}_3$ is picked up and placed in $\mathbb{F}$ $i.e.$ $\mathbb{F} = \{\boldsymbol{r}_5, \boldsymbol{r}_2, \boldsymbol{r}_3\}$. The remaining Government budget $\mathcal{B} = 0\$$. Hence, the for loop in lines 7-12 of Algorithm \ref{algo:10} terminates. Using line 13 of Algorithm \ref{algo:10}, $\mathbb{F} = \{\boldsymbol{r}_5, \boldsymbol{r}_2, \boldsymbol{r}_3\}$ is returned. 

\begin{figure*}%
    \centering
    \subfloat[\centering Task requesters along with tasks and their start and finish times]{{\includegraphics[width=8.0cm]{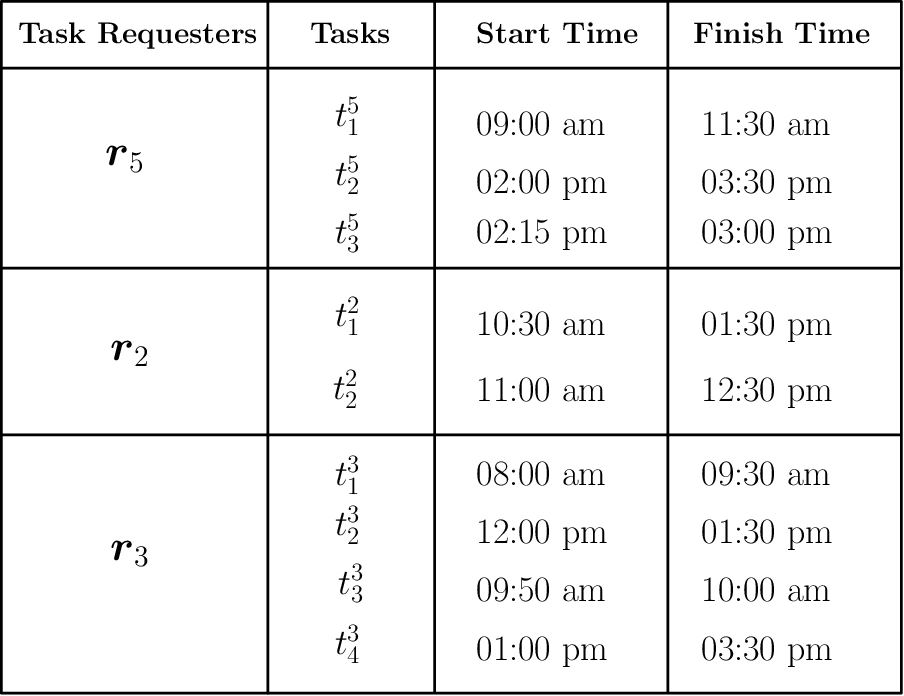} }}%
    \qquad
    \subfloat[\centering Tasks sorted in increasing order of start time and distribution of tasks in compatible slots]{{\includegraphics[width=8.0cm]{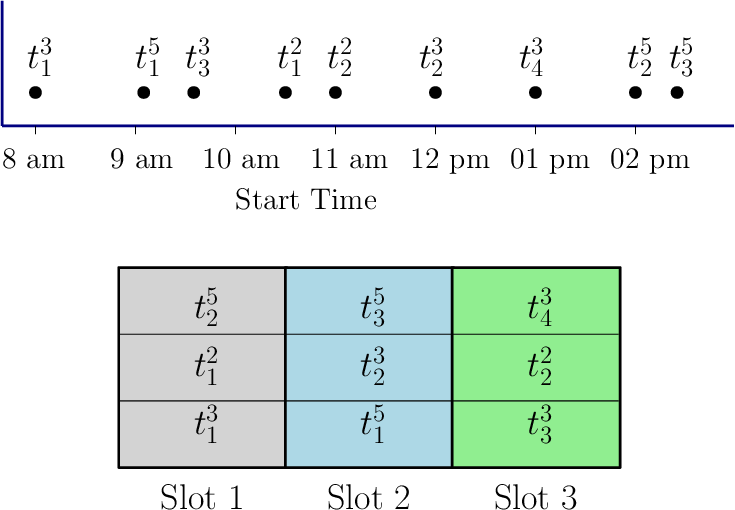} }}%
    \caption{Illustration of compatible tasks distribution mechanism}
    \label{fig:2a}
\end{figure*}

\subsection{\textsc{Compatible Tasks Distribution Mechanism}}
\label{sub:ctdm}
In this section, motivated by \emph{interval partitioning} algorithm \cite{Kleinberg:2005:AD:1051910} a mechanism is proposed. 
\begin {algorithm}[H]
\caption{\textsc{Compatible Tasks Distribution Mechanism} ($\mathbb{F}$, $s$)}
\label{algo:11}
\noindent
\begin{algorithmic}[1]
\FOR {each $\boldsymbol{r}_i \in \mathbb{F}$}
\STATE $\boldsymbol{t'}$ $\leftarrow$ $\boldsymbol{t'}$ $\cup$ $\boldsymbol{t}^i$ \COMMENT{Tasks of each task requester are held in $\boldsymbol{t'}$.}
\ENDFOR
\STATE $\hat{\boldsymbol{t}}$ $\gets$ Sort ($\boldsymbol{t'}$, $s$) \COMMENT{Sort tasks in ascending order of start time.}
\STATE $\ell \gets 1$ \COMMENT{Initialized to single slot.}
\FOR{each $\boldsymbol{t}_j^i \in \hat{\boldsymbol{t}}$}
\IF{$\boldsymbol{t}_j^i$ is compatible with all the tasks in the available slots $\ell$}
\STATE Place $\boldsymbol{t}_j^i$ in $\ell$ 
\ELSE
\STATE $\ell \gets\ell+1$ \COMMENT{Number of slots incremented by 1.}
\STATE Place task $\boldsymbol{t}_j^i$ to $\ell^{th}$ slot  
\ENDIF
\ENDFOR
\STATE return $\ell$
\end{algorithmic}
\end{algorithm} 
It will take care of distribution of tasks to some fixed number of time slots such that no two incompatible tasks will be holding the same slot. The detailing of \emph{compatible tasks distribution mechanism} is presented in Algorithm \ref{algo:11}. The inputs to Algorithm \ref{algo:11} are the task requesters captured in the winning set by Algorithm \ref{algo:10}, and the start time vectors of the tasks of all the task requesters. In lines 1-3 for loop iterates over the set $\mathbb{F}$ and the set of tasks of each task requester is held in $\boldsymbol{t}'$. In line 4 of Algorithm \ref{algo:11}, the tasks are ordered in  increasing order in terms of their start time. The variable $\ell$ keeping track of slot number and is initialized to 1. The for loop in lines 6-13 of Algorithm \ref{algo:11} the tasks in set $\hat{\boldsymbol{t}}$ is scheduled in the most suitable slot among the available slots. Finally, in line 14 a slot $\ell$ is returned.
 \subsubsection{Illustration of Compatible Tasks Distribution Mechanism with an Example}
The presented example illustrates the working of Algorithm \ref{algo:11}. The input to Algorithm \ref{algo:11} is the set of task requesters $\mathbb{F}$ = \{$\boldsymbol{r}_5$, $\boldsymbol{r}_2$, $\boldsymbol{r}_3$\} and the start time vectors of the tasks of the task requesters $\boldsymbol{r}_5$, $\boldsymbol{r}_2$, $\boldsymbol{r}_3$ depicted in FIGURE \ref{fig:2a}a. Here, task requester $\boldsymbol{r}_5$ has $\boldsymbol{t}_1^5, \boldsymbol{t}_2^5, \boldsymbol{t}_3^5$ tasks. Similarly the other two task requesters i.e. $\boldsymbol{r}_2$ and $\boldsymbol{r}_3$ are having the set of tasks  $\boldsymbol{t}_1^2, \boldsymbol{t}_2^2$, and $\boldsymbol{t}_1^3, \boldsymbol{t}_2^3, \boldsymbol{t}_3^3, \boldsymbol{t}_4^3$ respectively. Following lines 1-3 of Algorithm \ref{algo:11} in our running example we get  $\boldsymbol{t'}$ = \{$\boldsymbol{t}_1^5, \boldsymbol{t}_2^5, \boldsymbol{t}_3^5$, $\boldsymbol{t}_1^2, \boldsymbol{t}_2^2$, $\boldsymbol{t}_1^3, \boldsymbol{t}_2^3, \boldsymbol{t}_3^3, \boldsymbol{t}_4^3$\}. Following line 4 of Algorithm \ref{algo:11} the tasks in $\boldsymbol{t}'$ are sorted in increasing ordering of starting time, so we get $\hat{\boldsymbol{t}}$ = \{$\boldsymbol{t}_1^3, \boldsymbol{t}_1^5,  \boldsymbol{t}_3^3,  \boldsymbol{t}_1^2, \boldsymbol{t}_2^2, \boldsymbol{t}_2^3,  \boldsymbol{t}_4^3, \boldsymbol{t}_2^5,  \boldsymbol{t}_3^5$\}. Following lines 6-13 of Algorithm \ref{algo:11} the task $\boldsymbol{t}_1^3$ is selected from the ordering in set $\hat{\boldsymbol{t}}$ and is placed in first slot $i.e.$ $\ell = 1$. After that from the ordering task $\boldsymbol{t}_1^5$ is picked up from the sorted ordering and as task $\boldsymbol{t}_1^5$ is not compatible with task $\boldsymbol{t}_1^3$ present in slot 1, so task $\boldsymbol{t}_1^5$ is placed in slot 2 $i.e.$ $\ell = 2$. Next, task $\boldsymbol{t}_3^3$ is picked up and as task $\boldsymbol{t}_3^3$ is not compatible with $\boldsymbol{t}_1^3$ and $\boldsymbol{t}_1^5$, so it is placed in slot 3 $i.e.$ $\ell = 3$. After that task $\boldsymbol{t}_1^2$ is  selected from the ordering in set $\hat{\boldsymbol{t}}$ and is placed in first slot. Next, tasks $\boldsymbol{t}_2^2$ and $\boldsymbol{t}_2^3$ are compatible with the tasks in slot 3 and slot 2 respectively. So, tasks $\boldsymbol{t}_2^2$ and $\boldsymbol{t}_2^3$ are placed in slot 3 and slot 2 respectively. The task $\boldsymbol{t}_4^3$ is compatible with the tasks in slot 3, so it is placed in slot 3. Finally, tasks $\boldsymbol{t}_2^5$ and $\boldsymbol{t}_3^5$ are placed in slots 2 and 1 respectively. The tasks distribution to various slots is represented in FIGURE \ref{fig:2a}b.

\subsection{\textsc{Allocation and Pricing Rule}}\label{sub:apr}
The subroutine of BULINC for allocating tasks to the task executors and their payment is motivated by \cite{Singer:2014:BFM:2692359.2692366}. The inputs to Algorithm \ref{algo:1} are the task requesters in winning set returned by Algorithm \ref{algo:10}, the task executors, the number of slots returned by Algorithm \ref{algo:11}, costs the task executors, and budget. In line 1, the variable $k$ is initialized to 1, and set $\mathcal{W}$ and $p$ are initialized to $\phi$. Using lines 2-28, in each of the $\ell^{th}$ slot, for each task of the task requesters the task executors are allocated and their payment is decided. For each task $\boldsymbol{t}_j^i \in \boldsymbol{t}^i$, in line 5 the available task executors are held in $\hat{\boldsymbol{e}}$ once the task executors are ordered in ascending order of bid value.\\
\indent In lines 6-13 of Algorithm \ref{algo:11}, each time the task executor $\boldsymbol{e}_f$ is considered as a winner only when the selection criteria mentioned in line 7 is satisfied. After that the number of selected task executors $i.e.$ $k$ is increased. In the set $\mathcal{W}_j^i$ the winning task executors are pushed in for each task $\boldsymbol{t}_j^i$. In line 15 $\mathcal{W}_j^{i,i}$ is set to $\phi$. For the winning task executors determined above, the payment calculation is carried out in lines 16-19 of Algorithm \ref{algo:11}. In line 20, the winners (task executors) for all the tasks in $\boldsymbol{t}^i$ is held in $\mathcal{W}^i$. The payment of all the task executors for all the tasks in $\boldsymbol{t}^i$ is stored in line 21 in $p^i$ data structure. The winners for all the tasks of $\ell^{th}$ slot is held in $\mathcal{W}^\ell$ using line 24. In line 25, $\mathcal{W}$ holds the winners for the available tasks in the distinct slots. $p^{\ell}$ captures the task executors (as winners) in line 26. In line 27, the payment vector of the task executors (as winners) for the tasks in all the available slots is held in $p$. The sets $\mathcal{W}^\ell$ and $p^\ell$ are set to $\phi$ in line 28. Finally in line 30,  $\mathcal{W}^\ell$ and $p^\ell$ are returned.
\begin{algorithm}[H]
\caption{\textsc{Allocation and pricing rule} ($\mathbb{F}$, $\boldsymbol{e}$, $\ell$, $c$, $\mathcal{B}$)}\label{algo:1}
\begin{algorithmic}[1]
\STATE $k \gets 1$, $\mathcal{W} \gets \phi$, $p \gets \phi$.
\FOR{each slot $\ell$}
\FOR{each task requester $\boldsymbol{r}_i \in \mathbb{F}$}
\FOR{each task $\boldsymbol{t}_j^i \in \boldsymbol{t}^i$}
\STATE $\hat{\boldsymbol{e}}$ $\gets$ Sort~($\boldsymbol{e}^\ell$, $c$) \COMMENT{Sort $\boldsymbol{e}^\ell$ in increasing order of their cost and store it in $\hat{\boldsymbol{e}}$.}
\FOR{each $\boldsymbol{e}_f$ $\in$ $\hat{\boldsymbol{e}}$}
\IF{$c_f$ $\leq$ $\frac{\lfloor \mathcal{B}_i/n_i \rfloor}{k}$}
\STATE $\mathcal{W}_j^{i,f}$ $\leftarrow$ $\mathcal{W}_j^{i,f}$ $\cup$ $\{\boldsymbol{e}_f\}$ \COMMENT{Holds the winning task executors for each task $\boldsymbol{t}_j^i$.}
\STATE $k \gets k+1$
\ELSE
\STATE break;
\ENDIF
\ENDFOR
\STATE $\mathcal{W}_j^i$ $\leftarrow$ $\mathcal{W}_j^i$ $\cup$ $\mathcal{W}_j^{i,f}$ \STATE $\mathcal{W}_j^{i,f} \leftarrow \phi$
\FOR{each $\boldsymbol{e}_i$ $\in$ $\mathcal{W}_j^i$ }
\STATE $p_i$ $\gets$ $min\bigg\{\frac{\lfloor \mathcal{B}_i/n_i \rfloor}{k},~c_{k+1}\bigg\}$ 
\STATE $p_j^i \gets p_j^i \cup p_i$
\ENDFOR
\STATE $\mathcal{W}^i \gets \mathcal{W}^i \cup \mathcal{W}_j^i$  
\STATE $p^i \gets p^i \cup p_j^i$ \COMMENT{Holds the payment of all winning task executors for all the tasks in $\boldsymbol{t}^i$.}
\ENDFOR
\ENDFOR 
\STATE $\mathcal{W}^\ell$ $\gets$ $\mathcal{W}^\ell$ $\cup$ $\mathcal{W}^i$
\STATE $\mathcal{W}$ $\gets$ $\mathcal{W}$ $\cup$ $\mathcal{W}^\ell$ \COMMENT{Holds the winning task executors for all tasks in all the available slots.}
\STATE $p^\ell \gets p^\ell \cup p^i$
\STATE $p \gets p \cup p^\ell$
\STATE $\mathcal{W}^\ell \gets \phi$, $p^\ell \gets \phi$
\ENDFOR
\STATE return $\mathcal{W}$,  $p$
\end{algorithmic}
\end{algorithm}

\subsection{Illustration of Allocation and Pricing Rule with an Example}
The presented example illustrates the working of Algorithm \ref{algo:1}.  For understanding purpose let us fix a slot, say slot 1, and task $\boldsymbol{t}_1^3$ of task requester  $\boldsymbol{r}_3 \in \mathbb{F}$. Following line 5 of Algorithm \ref{algo:1} the task executors present in slot 1 is given as $\boldsymbol{e}^1$ = \{$\boldsymbol{e}_1, \boldsymbol{e}_2, \boldsymbol{e}_3, \boldsymbol{e}_4, \boldsymbol{e}_5, \boldsymbol{e}_6, \boldsymbol{e}_7, \boldsymbol{e}_8, \boldsymbol{e}_9, \boldsymbol{e}_{10}$\} is sorted in increasing order of their cost. The cost vector of the participating task executors of slot 1 is $c$ = \{3, 2, 9, 4, 3, 5, 3, 9, 10, 10\}. The sorted ordering of the task executors is given as $\hat{\boldsymbol{e}}$ = \{$\boldsymbol{e}_2, \boldsymbol{e}_5, \boldsymbol{e}_7, \boldsymbol{e}_1, \boldsymbol{e}_4, \boldsymbol{e}_6, \boldsymbol{e}_3, \boldsymbol{e}_8, \boldsymbol{e}_9, \boldsymbol{e}_{10}$\}. Following lines 6-13 of Algorithm \ref{algo:1}, the task executor $\boldsymbol{e}_2$ is picked-up and a check is made that $2$ $\leq$ $\frac{\lfloor 30/4 \rfloor}{1}$ = $2$ $\leq$ $7.5$. So, the stopping condition is true and  $\mathcal{W}_1^{3,2}$ = \{$\boldsymbol{e}_2$\}. Now, the $k$ value for next iteration will be incremented by $1$ and it will be $k=2$. In the next iteration the check for $\boldsymbol{e}_5$ is done that whether $3$ $\leq$ $\frac{\lfloor 30/4 \rfloor}{2}$ = $3$ $\leq$ $3.75$. So, the stopping condition is true and $\mathcal{W}_1^{3,5}$ = \{$\boldsymbol{e}_5$\}. Now, the $k$ value for next iteration will be incremented by $1$ and it will be $k=3$. In the next iteration $\boldsymbol{e}_7$ is picked up from the sorted ordering and a check is made $3$ $\leq$ $\frac{\lfloor 30/4 \rfloor}{3}$ = $3$ $\leq$ $2.5$. The stopping condition is not satisfied and no further task executors will be hired for task $\boldsymbol{t}_1^3$. So, $\mathcal{W}_1^3$ = \{$\boldsymbol{e}_2$, $\boldsymbol{e}_5$\}. $\mathcal{W}_1^{3,2} = \mathcal{W}_1^{3,5}$ =  $\phi$. Using lines 16-19 the payment of $\boldsymbol{e}_2$ and $\boldsymbol{e}_5$ is calculated. The payment  $p_2$ = $min\bigg\{\frac{\lfloor 30/4 \rfloor}{2}, 3\bigg\}$ = $min\bigg\{\frac{\lfloor {7.5} \rfloor}{2}, {3}\bigg\}$ = $min\bigg\{ {3.75}, {3}\bigg\} =3$. The payment  $p_5$ = $min\bigg\{\frac{\lfloor 30/4 \rfloor}{2}, 3\bigg\}$ = $min\bigg\{\frac{\lfloor {7.5} \rfloor}{2}, {3}\bigg\}$ = $min\bigg\{ {3.75}, {3}\bigg\} =3$. The total payment made is 3+3=6 $\leq$ 7.5.  In a similar manner, the task executors for other tasks of $\boldsymbol{r}_2$, $\boldsymbol{r}_3$, and $\boldsymbol{r}_5$ task requesters will be decided and are given as $\mathcal{W}_1^5$ = \{$\boldsymbol{e}_1$, $\boldsymbol{e}_2$, $\boldsymbol{e}_6$, $\boldsymbol{e}_8$\}, $\mathcal{W}_3^3$ = \{$\boldsymbol{e}_4$, $\boldsymbol{e}_1$, $\boldsymbol{e}_7$\}, $\mathcal{W}_1^2$ = \{$\boldsymbol{e}_9$, $\boldsymbol{e}_3$\},  $\mathcal{W}_2^2$ = \{$\boldsymbol{e}_{10}$, $\boldsymbol{e}_8$\},  $\mathcal{W}_2^3$ = \{$\boldsymbol{e}_1$, $\boldsymbol{e}_5$, $\boldsymbol{e}_4$\}, $\mathcal{W}_4^3$ = \{$\boldsymbol{e}_7$, $\boldsymbol{e}_3$\}, $\mathcal{W}_2^5$ = \{$\boldsymbol{e}_8$, $\boldsymbol{e}_6$, $\boldsymbol{e}_3$, $\boldsymbol{e}_1$,$\boldsymbol{e}_4$\}, $\mathcal{W}_3^5$ = \{$\boldsymbol{e}_{10}$, $\boldsymbol{e}_8$, $\boldsymbol{e}_6$\}. The payment of the winners are $p_1^5$ =\{4.15, 4.15, 4.15,4.15\}, $p_3^3$=\{2, 2, 2\}, $p_1^2$=\{5, 5\}, $p_2^2$=\{5, 5\},  $p_2^3$=\{2.5, 2.5, 2.5\},  $p_4^3$=\{3, 3\},  $p_2^5$=\{3, 3, 3, 3, 3\},  $p_3^5$=\{5, 5, 5\}.

\section{Analysis of BULINC}
\label{sec:atppm}
The theoretical analysis followed by the probabilistic analysis of BULINC is done in this section. Lemma \ref{lem:1} shows that BULINC is computationally efficient. It means that BULINC runs in poly time. In Lemma \ref{the1} it is estimated that how many task requesters among the available ones are receiving the Government fund and is given as $\frac{\mathcal{N}}{\lambda}$. Here, $\mathcal{N}$ as mentioned in Section \ref{s:prelim} is the number of task requesters, and $\frac{1}{\lambda}$ is the probability that any $i^{th}$ task requester receives the Government fund. The probability that at least $\frac{3\mathcal{N}}{\lambda}$ task requesters are getting fund for their tasks from the Government is at most $\frac{1}{3}$ and is shown in Lemma \ref{lemma:21}. In Lemma \ref{lemma:ATO} it is shown that the probability that the Government's fund is received by at least one of the task requesters is given as $1 - \frac{1}{e^{\mathcal{N}\lceil \ln \mathcal{N} \rceil}}$. On an average, the number of tasks placed in any $j^{th}$ slot by BULINC is given as $\frac{\sum_{i=1}^{\mathcal{N} n_i}}{|d|}$ and is proved in Lemma \ref{lemma:8}. Lemma \ref{lem:2} shows that BULINC is truthful. From the definition of \emph{truthfulness} given in Section \ref{s:prelim} it means that the participating task executors can maximize their utility by reporting their private information in \emph{truthful} manner and not by misreporting it. In Lemma \ref{lem:3} it is shown that BULINC is individually rational. It means that the participating task executors will have zero or positive utility in case of BULINC. Lemma \ref{lem:3bf} shows that BULINC is budget feasible. It means that the payment received by the task executors in case of BULINC is at most the available budget.  

\begin{lemma}
\label{lem:1}
BULINC is computationally efficient. 
\end{lemma}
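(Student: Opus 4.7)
The plan is to argue computational efficiency by breaking BULINC into its three subroutines (Algorithms~\ref{algo:10},~\ref{algo:11}, and~\ref{algo:1}) and bounding the running time of each in terms of the input sizes $\mathcal{N}$, $\mathcal{M}$, $|\mathbb{C}|$, the number of slots $|d|$, and the total number of tasks $T = \sum_{i=1}^{\mathcal{N}} n_i$. Since the overall mechanism is a sequential composition of the three subroutines, showing that each runs in polynomial time will immediately yield the lemma.

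First I would analyze the \textsc{Budget Distribution Mechanism}. The preference-elicitation loop in lines 2--5 runs $|\mathbb{C}|$ times and each call to \textsc{GeneratePref} produces an ordering over at most $\mathcal{N}$ task requesters, so this phase costs $O(|\mathbb{C}|\,\mathcal{N})$. Aggregating votes and sorting in line 6 costs $O(\mathcal{N} \log \mathcal{N})$ after an $O(|\mathbb{C}|\,\mathcal{N})$ tally, and the greedy budget-feasibility loop in lines 7--12 is $O(\mathcal{N})$. Hence Algorithm~\ref{algo:10} runs in $O(|\mathbb{C}|\,\mathcal{N} + \mathcal{N} \log \mathcal{N})$ time.

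Next I would handle the \textsc{Compatible Tasks Distribution Mechanism}. Collecting tasks in lines 1--3 is $O(T)$, and sorting by start time in line 4 is $O(T \log T)$. The main loop in lines 6--13 processes each task once; the only subtlety is the compatibility test in line 7, which in the worst case compares the current task against the last-placed task of every existing slot. Since the number of slots never exceeds $T$, this gives $O(T^2)$ in the worst case, and a tighter bound $O(T\,|d|)$ using the final slot count. Either way the cost is polynomial.

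Finally I would tackle the \textsc{Allocation and Pricing Rule}. The four nested loops iterate over slots, winning task requesters, their tasks, and the at most $\mathcal{M}$ task executors present in the current slot, with a sort in line 5 costing $O(\mathcal{M} \log \mathcal{M})$ per task. The payment loop in lines 16--19 is bounded by $\mathcal{M}$. Thus Algorithm~\ref{algo:1} runs in $O\bigl(|d|\cdot T \cdot \mathcal{M} \log \mathcal{M}\bigr)$ time. The main obstacle I anticipate is being careful about what ``per-slot'' quantities such as $m_\ell$ and the per-requester task count $n_i$ really mean, so that the bookkeeping sums telescope correctly into a clean polynomial bound; once that is settled, summing the three subroutine bounds gives a polynomial overall running time and the lemma follows.
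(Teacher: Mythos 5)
Your proposal is correct and takes essentially the same route as the paper's own proof: decompose BULINC into Algorithms~\ref{algo:10}, \ref{algo:11}, and \ref{algo:1}, bound each by a polynomial, and conclude by sequential composition. Your bookkeeping is in fact slightly more careful than the paper's (which reports $O(\mathcal{N}\lg\mathcal{N})$, $O\bigl((\sum_{i=1}^{\mathcal{N}} n_i)^2\bigr)$, and a cubic-type bound for Algorithm~\ref{algo:1}, omitting the $O(|\mathbb{C}|\,\mathcal{N})$ vote-elicitation cost and the per-task sort), but the substance of the argument is identical.
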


\begin{proof}
In order to show that BULINC is computationally efficient, it is sufficient to show that Algorithm \ref{algo:10}, Algorithm \ref{algo:11}, and Algorithm \ref{algo:1} are computationally efficient. The time taken by Algorithm \ref{algo:10} is $O(\mathcal{N}\lg\mathcal{N})$. $O((\sum_{i=1}^{\mathcal{N}} n_i)^2)$ is the computation time of Algorithm \ref{algo:11}. Algorithm \ref{algo:1} is bounded above by $O(\ell(n(n_im_\ell + n_i|\mathcal{W}_i^j|)))$.  If $n_i$, $m_\ell$, and $|\mathcal{W}_i^j|$ is a function of $n$, then the computation time of Algorithm \ref{algo:1} will be $O(n^3)$. \\
\indent As the Algorithm \ref{algo:10}, Algorithm \ref{algo:11}, and Algorithm \ref{algo:1} are computationally efficient, so as the BULINC.
    
\end{proof}

\begin{lemma}
\label{the1}
The expected number of task requesters receiving funds for their tasks from the Government is given as $\frac{{\mathcal{N}}}{\lambda}$. More formally, 
E[$\mathbb{Z}^\ast$] = $\frac{{\mathcal{N}}}{\lambda}$.
\end{lemma}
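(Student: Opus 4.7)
The plan is to prove this by a standard linearity-of-expectation argument using indicator random variables, which is the canonical approach for computing expected cardinalities of randomly determined subsets.

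First, I would introduce, for each task requester $\boldsymbol{r}_i$ with $i \in \{1,2,\ldots,\mathcal{N}\}$, an indicator random variable
\begin{equation*}
X_i = \begin{cases} 1, & \text{if } \boldsymbol{r}_i \text{ receives funds from the Government,} \\ 0, & \text{otherwise.} \end{cases}
\end{equation*}
By the definition given just before the lemma statement, $\Pr[X_i = 1] = \frac{1}{\lambda}$ for every $i$, and hence $E[X_i] = \frac{1}{\lambda}$. The total number of task requesters that receive the fund can then be written as $\mathbb{Z}^{\ast} = \sum_{i=1}^{\mathcal{N}} X_i$, because $\mathbb{Z}^{\ast}$ simply counts how many of the $\mathcal{N}$ task requesters are funded in the output $\mathbb{F}$ of Algorithm~\ref{algo:10}.

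Next, I would apply linearity of expectation, which does not require the $X_i$ to be independent (note that the selection in Algorithm~\ref{algo:10} couples the requesters through the budget constraint, so independence is not obvious and should be avoided). This gives
\begin{equation*}
E[\mathbb{Z}^{\ast}] \;=\; E\!\left[\sum_{i=1}^{\mathcal{N}} X_i\right] \;=\; \sum_{i=1}^{\mathcal{N}} E[X_i] \;=\; \sum_{i=1}^{\mathcal{N}} \frac{1}{\lambda} \;=\; \frac{\mathcal{N}}{\lambda},
\end{equation*}
which is exactly the claimed identity.

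There is essentially no main obstacle here beyond cleanly setting up the probability space: the lemma is a one-line computation once the indicators are defined, and the only subtlety is noticing that even though the events $\{X_i = 1\}$ are correlated through the shared budget $\mathcal{B}$ and the voting process, linearity of expectation still applies. I would conclude the proof by remarking that the $X_i$'s need not be independent, so no product-form argument is needed, and the result follows.
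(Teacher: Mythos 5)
Your proposal is correct and follows essentially the same route as the paper's own proof: defining an indicator random variable per task requester with success probability $\frac{1}{\lambda}$, writing $\mathbb{Z}^{\ast}$ as their sum, and applying linearity of expectation. Your added remark that the indicators need not be independent is a nice clarification, but the argument itself is the same.
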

\begin{proof}
The proof determines that in \emph{expectation how many task requesters out of $\mathcal{N}$ task requesters are getting funds for their tasks from the Government}? The total number of task requesters receiving funds from the Government is determined using the random variable $\mathbb{Z}^*$. So, the number of task requesters receiving funds from the Government is given as E[$\mathbb{Z}^\ast$]. In our setting, the sample space for any task requester $\boldsymbol{r}_i$ is given as $\mathcal{S}$ = \{$\boldsymbol{r}_i$ receives fund from the Government, $\boldsymbol{r}_i$ does not receive fund from the Government\}, with $Pr\{ \boldsymbol{r}_i$ receives fund from the Government $\} = \frac{1 }{\lambda}$, and $Pr\{ \boldsymbol{r}_i$ does not receives fund from the Government$\}$ = $1-\frac{1 }{\lambda}$. The random variable $\mathbb{Z}_i^\ast$ is given as: $\mathbb{Z}_i^* = I\{ \boldsymbol{r}_i$ receives fund from the Government\}. So, $\mathbb{Z}_i^*$ will be 1 if $\boldsymbol{r}_i$ gets fund from the Government, otherwise it will be 0. The expected number of times a task requester $\boldsymbol{r}_i$ receiving the funds from the Government for their tasks is just equal to  $E[\mathbb{Z}_i^*]$.
\begin{equation*}
 E[\mathbb{Z}_i^*] =  E[I\{\boldsymbol{r}_i~receives~fund~from~the~Government\}] 
  \end{equation*}
From the definition of indicator random variable \cite{Coreman_2009}, we have \\
  \begin{equation*}
 E[\mathbb{Z}_i^*] = 1 \cdot Pr\{\mathbb{Z}_i^* = 1\} + 0 \cdot Pr\{\mathbb{Z}_i^* = 0\}
 \end{equation*}
 \begin{equation*}
\hspace*{-14mm} = 1 \cdot \frac{1 }{\lambda} + 0 \cdot \bigg(1 - \frac{1}{\lambda}\bigg)
\end{equation*}
\begin{equation*}
\hspace*{-36mm} = 1 \cdot \frac{1}{\lambda}
\end{equation*}
\begin{equation}
\label{equ:t1}
\hspace*{-40mm} = \frac{1}{\lambda}
\end{equation}
The number of task requesters receiving funds for their tasks from the Government is given as: 
\begin{equation}
\label{equ:t2}
\hspace*{1.5mm} \mathbb{Z}^* = \sum_{i=1}^{\mathcal{N}}{\mathbb{Z}_i^*}
\end{equation}
After taking expectation both side of equation \ref{equ:t2} we get 
\begin{equation*}
\hspace*{1.5mm} E[\mathbb{Z}^*] = E\bigg[\sum_{i=1}^{\mathcal{N}}{\mathbb{Z}_i^*\bigg]}
\end{equation*}
From linearity of expectation, we get 
\begin{equation}
\label{equ:t3}
\hspace*{1.5mm} E[\mathbb{Z}^*] = \sum_{i=1}^{\mathcal{N}}{E[\mathbb{Z}_i^*]}
\end{equation}
Put the calculated value of $E[\mathbb{Z}_i^*]$ from equation \ref{equ:t1} to equation \ref{equ:t3}, we get
\begin{equation*}
E[\mathbb{Z}^*]=  \sum_{i=1}^{\mathcal{N}}\frac{1}{\lambda}
\end{equation*}
\begin{equation*}
\hspace*{12mm}=  \frac{1 }{\lambda}\sum_{i=1}^{\mathcal{N}}{1}
\end{equation*}
\begin{equation*}
\hspace*{10mm} = \frac{1 }{\lambda} \cdot {\mathcal{N}}
\end{equation*}
\begin{equation*}
\hspace*{6mm} =  \frac{{\mathcal{N}}}{\lambda}
\end{equation*}
Hence proved.
\end{proof}

\begin{observation}
\label{th:obs}
If the probability of receiving the funds from the Government for $\boldsymbol{r}_i$ task requester is $\frac{1}{5}$ i.e. $\lambda = 5$ then one-fifth of the total available task requesters will be getting the Government fund. On the other hand, if the probability of receiving the fund from the Government for $\boldsymbol{r}_i$ task requester is $\frac{1}{2}$  i.e. $\lambda = 2$ then half of the total available task requesters will be receiving the funds from the Government. It means that higher the probability that a task requester $\boldsymbol{r}_i$ receives a fund from the Government, higher will be the number of task requesters receiving the funds from the Government.
\end{observation}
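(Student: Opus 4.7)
The plan is to treat Observation \ref{th:obs} as a direct corollary of Lemma \ref{the1}, which already establishes the closed-form expression $E[\mathbb{Z}^\ast] = \frac{\mathcal{N}}{\lambda}$ for the expected number of task requesters that receive funds. Since the observation merely instantiates this formula at two specific values of $\lambda$ and then remarks on monotonicity, no new probabilistic machinery is needed; the proof reduces to substitution followed by a one-line analytic comment.

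First, I would invoke Lemma \ref{the1} to assert that the expected number of funded task requesters is $\frac{\mathcal{N}}{\lambda}$ for every admissible value of $\lambda$. Next, I would substitute $\lambda = 5$, obtaining $E[\mathbb{Z}^\ast] = \frac{\mathcal{N}}{5}$, which corresponds to one-fifth of the available task requesters being funded in expectation; and then substitute $\lambda = 2$, obtaining $E[\mathbb{Z}^\ast] = \frac{\mathcal{N}}{2}$, i.e.\ half of the task requesters in expectation. These two substitutions verbatim realise the two concrete claims in the observation.

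Finally, I would justify the qualitative monotonicity claim by noting that, with $\mathcal{N}$ fixed, the map $\lambda \mapsto \frac{\mathcal{N}}{\lambda}$ is strictly decreasing on $(0,\infty)$; equivalently, the map $q \mapsto \mathcal{N} q$ is strictly increasing in the per-requester funding probability $q = \frac{1}{\lambda}$. Hence a larger $q$ (smaller $\lambda$) forces a strictly larger expected count $E[\mathbb{Z}^\ast]$, which is precisely the concluding sentence of the observation.

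I expect essentially no technical obstacle here, since all the probabilistic work, including the use of indicator random variables and linearity of expectation, has already been discharged inside Lemma \ref{the1}. The only care required is to state explicitly that the underlying Bernoulli model and the value of $\lambda$ in the observation match the hypotheses of Lemma \ref{the1}, so that the substitution is legitimate; beyond that, the argument is a one-step application plus a monotonicity remark.
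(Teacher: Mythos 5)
Your proposal is correct and matches the paper's treatment exactly: the paper offers no separate proof for this observation, presenting it as an immediate instantiation of Lemma \ref{the1} ($E[\mathbb{Z}^\ast] = \frac{\mathcal{N}}{\lambda}$) at $\lambda = 5$ and $\lambda = 2$, together with the same monotonicity remark you make. Your only addition---explicitly noting that the claims hold \emph{in expectation} and that the Bernoulli model of Lemma \ref{the1} must apply---is a point of rigor the paper glosses over, not a different approach.
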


\hspace*{5mm}
\begin{lemma}
\label{lemma:21}
In BULINC, we have
\begin{equation*}
Pr\bigg \{\mathbb{Z}^\ast \geq \frac{3\mathcal{N}}{\lambda}\bigg\} \leq \frac{1}{3}
\end{equation*}
\end{lemma}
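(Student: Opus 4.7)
The plan is to recognize this bound as a direct application of Markov's inequality to the non-negative random variable $\mathbb{Z}^\ast$, whose expectation was just computed in Lemma \ref{the1}.

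First I would note that $\mathbb{Z}^\ast = \sum_{i=1}^{\mathcal{N}} \mathbb{Z}_i^\ast$ is a sum of indicator random variables and therefore non-negative, so Markov's inequality applies. Recall Markov's inequality: for any non-negative random variable $X$ and any $a > 0$, $\Pr\{X \geq a\} \leq E[X]/a$.

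Next I would invoke Lemma \ref{the1} to substitute $E[\mathbb{Z}^\ast] = \mathcal{N}/\lambda$, and set the threshold $a = 3\mathcal{N}/\lambda$. This yields
\begin{equation*}
\Pr\bigg\{\mathbb{Z}^\ast \geq \frac{3\mathcal{N}}{\lambda}\bigg\} \leq \frac{E[\mathbb{Z}^\ast]}{3\mathcal{N}/\lambda} = \frac{\mathcal{N}/\lambda}{3\mathcal{N}/\lambda} = \frac{1}{3},
\end{equation*}
which is the claimed bound.

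There is no real obstacle here; the statement is essentially a corollary of Lemma \ref{the1} combined with Markov's inequality. The only item worth double-checking is that $\mathbb{Z}^\ast$ is genuinely non-negative (it is, being a count of qualifying task requesters), so the Markov hypothesis is legitimately satisfied. I would keep the write-up to a couple of lines, explicitly citing Lemma \ref{the1} for the expectation and Markov's inequality for the tail bound.
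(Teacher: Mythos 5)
Your proposal is correct and is essentially the same argument as the paper's: the paper simply proves Markov's inequality inline (defining an indicator $I$ for the event $\mathbb{Z}^\ast \geq 3\mathcal{N}/\lambda$, observing $I \leq \mathbb{Z}^\ast/(3\mathcal{N}/\lambda)$, and taking expectations) before substituting $E[\mathbb{Z}^\ast] = \mathcal{N}/\lambda$ from Lemma \ref{the1}, whereas you invoke Markov's inequality as a known result. The conclusion and the role of Lemma \ref{the1} are identical in both write-ups.
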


\begin{proof}
From lemma \ref{the1}, we have already defined $\mathbb{Z}^\ast$ as the total number of task requesters receiving funds from the Government for their tasks. It is given as $I$ = \{Number of task requesters receives fund from the Government\} 
 \begin{equation}
 \label{equ:1}
   I = \begin{cases} 1   ,\quad \text{if $\mathbb{Z}^\ast \geq \frac{3\mathcal{N}}{\lambda}$}  \\
     0, \quad \text{otherwise }
  \end{cases}
  \end{equation}
  From equation \ref{equ:1}, it can be written as 
  \begin{equation*}
      \mathbb{Z}^\ast \geq \frac{3\mathcal{N}}{\lambda}
  \end{equation*}
   \begin{equation*}
      \frac{\mathbb{Z}^\ast}{\big(\frac{3\mathcal{N}}{\lambda}\big)} \geq 1
  \end{equation*}
 \begin{equation}
 \label{equ:2}
      \frac{\mathbb{Z}^\ast}{\big(\frac{3\mathcal{N}}{\lambda}\big)} \geq I
  \end{equation}
  Taking expectation both side of equation \ref{equ:2}, we get
   \begin{equation*}
      E\bigg[\frac{\mathbb{Z}^\ast}{\big(\frac{3\mathcal{N}}{\lambda}\big)} \bigg] \geq E[I]  
  \end{equation*}
  or
    \begin{equation*}
     E[I] \leq  E\bigg[\frac{\mathbb{Z}^\ast}{\big(\frac{3\mathcal{N}}{\lambda}\big)} \bigg]
  \end{equation*}
   \begin{equation*}
   \hspace*{9mm} = \frac{\lambda}{3\mathcal{N}} \cdot E[\mathbb{Z}^\ast]
  \end{equation*}
 \begin{equation*}
E[I] \leq  \frac{\lambda}{3\mathcal{N}} \cdot E[\mathbb{Z}^\ast]
  \end{equation*}
 \begin{equation}
 \label{equ:3}
 Pr\bigg \{\mathbb{Z}^\ast \geq \frac{3\mathcal{N}}{\lambda}\bigg\} \cdot 1 \leq \frac{\lambda}{3\mathcal{N}} \cdot E[\mathbb{Z}^\ast]
  \end{equation}
Substituting the value of $E[\mathbb{Z}^\ast]$ from Lemma \ref{the1} in equation \ref{equ:3} above, we get 
  \begin{equation}
 \label{equ:4}
 Pr\bigg \{\mathbb{Z}^\ast \geq \frac{3\mathcal{N}}{\lambda}\bigg\} \cdot 1\leq \frac{\lambda}{3\mathcal{N}} \cdot \frac{{\mathcal{N}}}{\lambda}
  \end{equation}
  we can rewrite equation \ref{equ:4} as: 
   \begin{equation*}
 Pr\bigg \{\mathbb{Z}^\ast \geq \frac{3\mathcal{N}}{\lambda}\bigg\}\leq \frac{1}{3}
  \end{equation*}
  Hence proved
\end{proof}
\vspace*{2mm}
\begin{lemma}
\label{lemma:ATO}
The probability that the Government's fund is received by at least one of the task requesters is bounded above by $1 - \frac{1}{e^{\mathcal{N}\lceil\ln \mathcal{N}\rceil}}$. 
\end{lemma}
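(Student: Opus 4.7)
The natural approach is to reduce the statement to its complement: bound the probability that \emph{no} task requester receives the Government's fund, and subtract from one. Let $Y_i$ denote the event that task requester $\boldsymbol{r}_i$ receives the fund, so by Lemma \ref{the1} we have $P(Y_i) = \tfrac{1}{\lambda}$ and $P(Y_i^{c}) = 1 - \tfrac{1}{\lambda}$. Treating $Y_1,\ldots,Y_{\mathcal{N}}$ as mutually independent across the $\mathcal{N}$ task requesters---the same per-requester sampling view that underlies the indicator-variable argument in Lemma \ref{the1}---the probability that no task requester receives the fund is
\begin{equation*}
P\Bigl(\bigcap_{i=1}^{\mathcal{N}} Y_i^{c}\Bigr) = \Bigl(1 - \tfrac{1}{\lambda}\Bigr)^{\mathcal{N}},
\end{equation*}
so that $P(\text{at least one receives the fund}) = 1 - (1 - \tfrac{1}{\lambda})^{\mathcal{N}}$.

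The second step is to lower-bound $(1 - \tfrac{1}{\lambda})^{\mathcal{N}}$ by $e^{-\mathcal{N}\lceil \ln \mathcal{N}\rceil}$. Since $\lceil \ln \mathcal{N}\rceil$ grows unboundedly in $\mathcal{N}$, the target bound is quite loose, and it suffices to apply the per-factor inequality $1 - \tfrac{1}{\lambda} \geq e^{-\lceil \ln \mathcal{N}\rceil}$ (equivalently $\tfrac{1}{\lambda} \leq 1 - e^{-\lceil \ln \mathcal{N}\rceil}$), which holds in the regime of interest, specifically whenever $\lambda \geq \mathcal{N}/(\mathcal{N}-1)$---a condition automatically satisfied once $\lambda > 1$ and $\mathcal{N}$ is not trivial. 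Raising to the $\mathcal{N}$-th power yields $(1 - \tfrac{1}{\lambda})^{\mathcal{N}} \geq e^{-\mathcal{N}\lceil \ln \mathcal{N}\rceil}$, and subtracting from $1$ delivers
\begin{equation*}
P(\text{at least one receives the fund}) \leq 1 - \frac{1}{e^{\mathcal{N}\lceil \ln \mathcal{N}\rceil}},
\end{equation*}
as claimed.

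\textbf{Main obstacle.} The delicate point is picking the correct direction of the exponential inequality. Because the statement is an \emph{upper} bound on $1 - (1-\tfrac{1}{\lambda})^{\mathcal{N}}$, what is needed is a \emph{lower} bound on $(1-\tfrac{1}{\lambda})^{\mathcal{N}}$, which is opposite to the reflex estimate $(1-x)^n \leq e^{-nx}$ that is used in most Chernoff-style arguments and in Lemma \ref{lemma:21} above. Verifying that the crude lower bound $1 - \tfrac{1}{\lambda} \geq e^{-\lceil \ln \mathcal{N}\rceil}$ really does hold for the relevant range of $\lambda$ and $\mathcal{N}$---rather than inadvertently invoking the wrong-direction estimate---is the only real subtlety; the rest of the argument is bookkeeping on independence and a one-line exponentiation.
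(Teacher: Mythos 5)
Your proof is correct, and it is in fact more careful than the paper's own argument at exactly the point you single out as the main obstacle. Both you and the paper start identically: by independence across requesters, the probability that no task requester is funded is $\left(1-\tfrac{1}{\lambda}\right)^{\mathcal{N}}$, and the target probability is its complement. The paper then applies the reflex estimate $\left(1-\tfrac{1}{\lambda}\right)^{\mathcal{N}} \leq e^{-\mathcal{N}/\lambda}$, which is an \emph{upper} bound on the complement and therefore yields only the \emph{lower} bound $Pr\{Z^\ast \geq 1\} \geq 1 - e^{-\mathcal{N}/\lambda}$; nevertheless the paper writes the conclusion with ``$\leq$'', silently flipping the inequality direction. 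It then substitutes $\lambda = \lceil \ln \mathcal{N}\rceil$ and reports the exponent as $\mathcal{N}\lceil\ln\mathcal{N}\rceil$ rather than $\mathcal{N}/\lceil\ln\mathcal{N}\rceil$, a second slip. Your route repairs both defects: you correctly observe that an upper bound on the union probability requires a \emph{lower} bound on $\left(1-\tfrac{1}{\lambda}\right)^{\mathcal{N}}$, and your per-factor estimate $1-\tfrac{1}{\lambda} \geq \tfrac{1}{\mathcal{N}} = e^{-\ln\mathcal{N}} \geq e^{-\lceil\ln\mathcal{N}\rceil}$, valid whenever $\lambda \geq \mathcal{N}/(\mathcal{N}-1)$, legitimately yields the claimed bound --- and does so for every such $\lambda$, not only for the paper's special choice $\lambda = \lceil\ln\mathcal{N}\rceil$ (which also makes the $\lambda$-free statement of the lemma coherent). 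The one caveat worth making explicit in your write-up is that the hypothesis $\lambda \geq \mathcal{N}/(\mathcal{N}-1)$ is genuinely needed and not merely ``the regime of interest'': if $\lambda$ is close enough to $1$, the complement probability drops below $e^{-\mathcal{N}\lceil\ln\mathcal{N}\rceil}$ and the stated upper bound fails, so the lemma as written is false without some such restriction --- a fact the paper's (flawed) derivation conceals.
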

\begin{proof}
In this lemma, the objective is to determine that \emph{what is the probability that at least one task requester among the available ones will receive the Government fund}? In order to complete the proof, the result obtained in Lemma \ref{the1} will be quite handy. Distributing the Government's fund to any $\boldsymbol{r}_i$ task requester is independent of distributing the funds to other task requesters in the set $\boldsymbol{r}$. The probability that $\boldsymbol{r}_i$ task requester has not received any fund from the Government is given as:
\begin{equation*}
 Pr\{Z^\ast < 1\} = \bigg(1-\frac{1}{\lambda}\bigg) \times \bigg(1-\frac{1}{\lambda}\bigg) \times \ldots~\mathcal{N}~times
\end{equation*}
\begin{equation}
\label{equ:lm}
 = \bigg(1-\frac{1}{\lambda}\bigg)^\mathcal{N}
\end{equation}
The inequality $1 + \mathcal{N} \leq e^{\mathcal{N}}$ gives us:
\begin{equation*}
 Pr\{Z^\ast < 1\} \leq e^{-\mathcal{N} \cdot \frac{1}{\lambda}}
\end{equation*}
\begin{equation}
\label{eq:lm1}
 = \frac{1}{e^{\big(\frac{\mathcal{N}}{\lambda}\big)}}
\end{equation}
From equation \ref{eq:lm1}, the probability that  the Government's fund is received by at least one of the task requesters is given as:
\begin{equation}
\label{eq:lm2}
 Pr\{Z^\ast \geq 1\} \leq 1 - \frac{1}{e^{\big(\frac{\mathcal{N}}{\lambda}\big)}}
\end{equation}
Now, considering the value of $\lambda = \lceil \ln \mathcal{N} \rceil$ and substituting it in equation \ref{eq:lm2}, we get
\begin{equation*}
 Pr\{Z^\ast \geq 1\} \leq 1 - \frac{1}{e^{\mathcal{N}\lceil \ln \mathcal{N} \rceil}}
\end{equation*}
Hence proved.
\end{proof}

\begin{corollary}
Let us say the number of available task requester is $\mathcal{N} = 100$, then we get
\begin{equation*}
 Pr\{Z^\ast \geq 1\} \leq 1 - \frac{1}{e^{100 \lceil \ln 100 \rceil}}
\end{equation*}
\begin{equation*}
= 1 - \frac{1}{0.2172}
\end{equation*}
\begin{equation*}
\hspace*{-8.5mm} = 0.783
\end{equation*}
So, it can be seen from above calculation that the probability that  the Government's fund is received by at least one of the task requesters out of 100 task requesters is very high.
\end{corollary}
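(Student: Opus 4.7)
The corollary is a direct numerical specialization of Lemma \ref{lemma:ATO}, so my plan is purely to substitute $\mathcal{N} = 100$ into the upper bound already proved there; no new analytical idea is required. In effect, Lemma \ref{lemma:ATO} has done all the work, and the corollary just exhibits a concrete instance to give the reader a feel for how close to $1$ the bound becomes for a moderately sized population of task requesters.

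First I would evaluate $\lceil \ln \mathcal{N} \rceil$ at $\mathcal{N} = 100$. Using $\ln 100 \approx 4.605$, this rounds up to $5$, which is exactly the choice $\lambda = \lceil \ln \mathcal{N} \rceil$ prescribed at the end of the proof of Lemma \ref{lemma:ATO}. Second, I would plug $\mathcal{N} = 100$ together with this value into the bound $Pr\{Z^\ast \geq 1\} \leq 1 - \frac{1}{e^{\mathcal{N}\lceil \ln \mathcal{N}\rceil}}$, collapsing the right-hand side to a single numerical constant. Third, I would simplify to a decimal and conclude that the resulting quantity is close to $1$, confirming the qualitative claim that the government's fund reaches at least one task requester with high probability.

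The only thing that really needs care is bookkeeping of the exponent, since Lemma \ref{lemma:ATO} passes between the forms $e^{\mathcal{N}/\lambda}$ and $e^{\mathcal{N}\lambda}$ once $\lambda = \lceil \ln \mathcal{N}\rceil$ is substituted; the two produce very different numerical constants (roughly $e^{-20}$ versus $e^{-500}$), so I would match the chosen form to the displayed target value before committing to a final decimal. Once that consistency check is made, the verification reduces to a one-line arithmetic calculation and there is no deeper obstacle beyond it.
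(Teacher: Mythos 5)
Your overall reading is right: the paper's own ``proof'' of this corollary is nothing more than substituting $\mathcal{N}=100$ (hence $\lceil \ln 100\rceil = 5$) into the bound of Lemma \ref{lemma:ATO}, which is exactly what you propose. The gap is in your final step --- ``match the chosen form to the displayed target value before committing to a final decimal'' --- because that consistency check cannot succeed: the displayed target is irreconcilable with \emph{both} forms of the exponent. With the exponent as printed in the lemma, $e^{\mathcal{N}\lceil\ln\mathcal{N}\rceil} = e^{500}$, so $1 - 1/e^{500}$ differs from $1$ by roughly $10^{-218}$; with the form that actually follows from the lemma's derivation, $e^{\mathcal{N}/\lambda} = e^{20}$, so $1 - 1/e^{20} \approx 1 - 2.1\times 10^{-9}$. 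Neither reading produces $0.2172$, and the printed chain is internally inconsistent on its own terms, since $1 - \frac{1}{0.2172} \approx -3.60$, not $0.783$ (what was evidently computed is $1 - 0.2172 \approx 0.783$). A complete verification therefore cannot ``confirm'' the paper's numbers; it must instead conclude that the correct specialization is essentially $1$, which, incidentally, supports the corollary's qualitative claim of ``very high probability'' even more strongly than $0.783$ does.

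A second issue that your deference to the lemma hides is the direction of the inequality. From the lemma's own derivation, $Pr\{Z^\ast < 1\} = (1-1/\lambda)^{\mathcal{N}} \leq e^{-\mathcal{N}/\lambda}$, hence $Pr\{Z^\ast \geq 1\} \geq 1 - e^{-\mathcal{N}/\lambda}$: the quantity in question is a \emph{lower} bound on the probability, not an upper bound as both the lemma and the corollary assert. This matters for the conclusion you are asked to reach: an upper bound of the form $Pr\{Z^\ast \geq 1\} \leq 1 - \varepsilon$ can never justify the claim that the probability is very high, whereas the lower-bound reading does so immediately. So the one-line substitution you propose is the right skeleton and matches the paper's route, but a faithful execution must fix both the exponent and the inequality direction, and must reject the corollary's printed arithmetic rather than attempt to reproduce it.
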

\vspace*{2mm}
\begin{lemma}
\label{lemma:8}
In expectation the number of tasks placed in any $d_j$ slot is given as $\frac{\sum\limits_{i=1}^{\mathcal{N}} n_i}{|d|}$, where $n_i$ is the number of tasks held by $\boldsymbol{r}_i$ task requester and $|d|$ is the number of slots. More formally, 
\begin{equation*}
E[Y] = \Bigg(\frac{\sum\limits_{i=1}^{\mathcal{N}} n_i}{|d|}\Bigg) 
\end{equation*} 
where, the indicator random variable $Y$ holds the total number of tasks allocated to any slot $d_j$.  
\end{lemma}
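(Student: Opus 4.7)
The plan is to mirror the indicator-random-variable strategy already used in the proof of Lemma~\ref{the1}. For a fixed slot $d_j$ and for every task $\boldsymbol{t}_k^i$ of every task requester $\boldsymbol{r}_i \in \mathbb{F}$, I would introduce the indicator
\begin{equation*}
Y_{ik} \;=\; I\{\boldsymbol{t}_k^i \text{ is placed in slot } d_j\},
\end{equation*}
so that the total count of tasks landing in $d_j$ can be written as $Y = \sum_{i=1}^{\mathcal{N}} \sum_{k=1}^{n_i} Y_{ik}$.

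Since each task ultimately occupies exactly one of the $|d|$ slots produced by Algorithm~\ref{algo:11}, and absent any structural reason to prefer one slot over another, the natural modelling assumption is $\Pr\{Y_{ik}=1\} = \tfrac{1}{|d|}$ and $\Pr\{Y_{ik}=0\} = 1 - \tfrac{1}{|d|}$. From the definition of an indicator random variable this yields $E[Y_{ik}] = 1\cdot\tfrac{1}{|d|} + 0\cdot(1-\tfrac{1}{|d|}) = \tfrac{1}{|d|}$, in direct analogy with equation~\eqref{equ:t1}.

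With the per-task expectation in hand, I would invoke linearity of expectation (exactly as in equations~\eqref{equ:t2}--\eqref{equ:t3}) to obtain
\begin{equation*}
E[Y] \;=\; \sum_{i=1}^{\mathcal{N}} \sum_{k=1}^{n_i} E[Y_{ik}] \;=\; \sum_{i=1}^{\mathcal{N}} \frac{n_i}{|d|} \;=\; \frac{\sum_{i=1}^{\mathcal{N}} n_i}{|d|},
\end{equation*}
which is the claimed identity.

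The main subtlety, and the place where a careful reader is most likely to push back, is the justification of the uniform per-slot probability $\tfrac{1}{|d|}$. Algorithm~\ref{algo:11} is deterministic: it sorts tasks by start time and greedily drops each one into the first compatible slot, so nothing is random in the execution itself. The lemma is therefore best read as an \emph{average-case} statement over inputs in which no slot is structurally privileged, and the argument rests on treating the index of the slot that ultimately receives a given task as uniformly distributed across the $|d|$ slots created by the interval-partitioning procedure. I would flag this modelling assumption explicitly at the start of the proof, after which linearity of expectation closes the argument in a single line.
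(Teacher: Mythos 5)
Your proposal matches the paper's own proof essentially step for step: the paper likewise models the assignment as a balls-and-bins process in which each task lands in slot $d_j$ with probability $\frac{1}{|d|}$, computes the per-task indicator expectation, and applies linearity of expectation, merely writing your double sum in two stages ($Y_j = \sum_{k=1}^{n_i} Y_k^j$ per requester, then $Y = \sum_{i=1}^{\mathcal{N}} Y_j$). Your explicit flagging of the uniform-placement probability as a modelling idealization is apt, since Algorithm \ref{algo:11} is deterministic and the paper asserts this uniformity without further justification.
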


\begin{proof}
It can be proved by considering the event of scheduling of tasks to $|d|$ slots as balls and bins problem \cite{Coreman_2009}. When the tasks are scheduled to $|d|$ different slots it can be thought of as tossing a ball (in our case scheduling a task) randomly to a bin (in our case a slot). From the construction of Algorithm \ref{algo:11}, placing  a task to any slot $d_j$ is independent of placing some other tasks to the slots. As the event of placing a task to any slot $d_j \in d$ is independent and any slot out of $|d|$ slots has equal probability to be chosen. So, the probability that the task is placed in $d_j$ slot is $\frac{1}{|d|}$. The probability $\big(1 - \frac{1}{|d|}\big)$ is a task will not be placed in $d_j$ slot. From the definition of random variable, the expected value of indicator random variable is said to be equal to the probability of placing a task $\boldsymbol{t}_k^i$ to any $j^{th}$ slot and is given as:
\begin{equation*}
E[Y_{k}^j] = 1 \cdot \bigg(\frac{1}{|d|}\bigg) + 0 \cdot \bigg(1 - \frac{1}{|d|}\bigg)
\end{equation*} 

\begin{equation*}
 = 1 \cdot \bigg(\frac{1}{|d|}\bigg) 
\end{equation*}

\begin{equation}
\label{label:345}
 = \frac{1}{|d|} 
\end{equation}
$Y_j$ captures the number of tasks out of $n_i$ tasks allocated to any $d_j$ slot. Further considering the random variable $Y_j$, we get
\begin{equation}
\label{label:sd}
Y_j = \sum_{k=1}^{n_i} Y_k^j
\end{equation} 
Take expectation on both side of equation \ref{label:sd}, we get
\begin{equation}
\label{label:sd2}
E[Y_j] = E\bigg[\sum_{k=1}^{n_i} Y_k^j\bigg]
\end{equation} 
Linearity of expectation gives,
\begin{equation}
\label{label:sd3}
E[Y_j] = \sum_{k=1}^{n_i} E[Y_k^j]
\end{equation} 
Substituting the value of $E[Y_k^j]$ from equation \ref{label:345} to equation \ref{label:sd3}, we get
\begin{equation*}
E[Y_j] = \sum_{k=1}^{n_i} \frac{1}{|d|}
\end{equation*} 
\begin{equation}
\label{label:sd31}
 = \frac{n_i}{|d|}
\end{equation}  
In our case, we are interested in the random variable $Y$ and is given as:

\begin{equation}
\label{label:456}
 Y = \sum_{i=1}^{\mathcal{N}} Y_j
\end{equation}
Taking expectation both side of equation \ref{label:456}, we get
\begin{equation}
\label{label:4561}
 E[Y] = E\big[\sum_{i=1}^{\mathcal{N}} Y_j\big]
\end{equation}
By linearity of expectation, we get
\begin{equation}
\label{label:4561}
 E[Y] = \sum_{i=1}^{\mathcal{N}} E[Y_j]
\end{equation}
Substituting the value of $E[Y_j]$ from equation \ref{label:sd31} to equation \ref{label:4561}, we get
\begin{equation*}
 E[Y] = \Bigg(\frac{\sum_{i=1}^{\mathcal{N}} n_i}{|d|}\Bigg)
\end{equation*}
 Hence proved.
\end{proof}
\vspace*{2mm}
\begin{lemma}
\label{lem:2}
 BULINC is truthful.
\end{lemma}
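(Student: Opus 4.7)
The plan is to invoke the standard characterization of truthful single-parameter mechanisms: a mechanism is truthful if and only if (i) its allocation rule is monotone non-increasing in the reported cost, and (ii) each winning agent is paid exactly their critical (threshold) bid, that is, the supremum of bids at which they would still be selected. I would verify these two properties for the allocation and pricing rule presented in Algorithm~\ref{algo:1}, applied to a fixed slot $\ell$ and a fixed task $\boldsymbol{t}_j^i$ (since the mechanism decouples across slots and tasks, truthfulness of each subproblem implies truthfulness overall).

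First, for monotonicity, I would fix all other executors' bids and argue that if $\boldsymbol{e}_f$ wins at a truthful cost $c_f$, then $\boldsymbol{e}_f$ also wins at any lower reported cost $\hat c_f < c_f$. The reasoning is that lowering one's bid can only move $\boldsymbol{e}_f$ earlier (or leave them in place) in the cost-sorted list $\hat{\boldsymbol{e}}$, and the greedy selection condition $c_f \leq \lfloor \mathcal{B}_i/n_i\rfloor / k$ is easier to satisfy at an earlier position $k$ and a lower cost, so $\boldsymbol{e}_f$ continues to pass the acceptance test. Moreover, any executor previously accepted before $\boldsymbol{e}_f$ is still accepted (their own bids and the threshold at their position are unchanged), so the process does not terminate prematurely above $\boldsymbol{e}_f$. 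This establishes monotonicity.

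Next, for the critical payment, I would show that the quantity $p_f = \min\bigl\{\lfloor \mathcal{B}_i/n_i\rfloor / k,\; c_{k+1}\bigr\}$ assigned in line~17 is precisely the threshold bid of $\boldsymbol{e}_f$. Concretely, if $\boldsymbol{e}_f$ were to report anything strictly greater than this minimum, then either (a) the budget-per-position condition $\hat c_f \leq \lfloor \mathcal{B}_i/n_i\rfloor / k$ would fail, so the greedy loop breaks and $\boldsymbol{e}_f$ loses, or (b) the bid would exceed $c_{k+1}$, causing $\boldsymbol{e}_f$ to be re-sorted behind the $(k{+}1)$-st executor, who then occupies position $k$ with a smaller threshold, cascading $\boldsymbol{e}_f$'s position to one where the greedy condition fails. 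Either way, any bid above $p_f$ loses, while any bid at or below $p_f$ still wins by the monotonicity step; hence $p_f$ is the critical value.

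Finally, I would close with the standard Myerson-style case split: if $\boldsymbol{e}_f$ wins truthfully, then $u_f = p_f - c_f \geq 0$ and any deviation either keeps $p_f$ unchanged (yielding the same utility) or causes $\boldsymbol{e}_f$ to lose (yielding utility $0 \leq p_f - c_f$); if $\boldsymbol{e}_f$ loses truthfully, then $c_f$ exceeds $p_f$, and any underbid that induces a win would fetch a payment of at most $p_f < c_f$, giving non-positive utility, while any overbid keeps $\boldsymbol{e}_f$ losing with utility $0$. I expect the main obstacle to be the second step: carefully arguing that a deviating bid cannot produce a strictly higher payment, because the position index $k$ assigned to $\boldsymbol{e}_f$ depends on both the cost-sorted order and the running accept/reject decisions, so the dependence of $p_f$ on $\hat c_f$ must be traced through the greedy loop rather than read off by a single-line comparison.
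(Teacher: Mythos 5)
Your treatment of the allocation and pricing rule (Algorithm~\ref{algo:1}) is sound and, in fact, more careful than the paper's own argument for that component: the paper simply asserts in its case analysis that a winner who changes her bid and still wins receives the same payment, which silently relies on the threshold-payment property that your second step sets out to verify (monotonicity plus the claim that $\min\{\lfloor \mathcal{B}_i/n_i\rfloor/k,\,c_{k+1}\}$ is the critical bid). Your case split resolving the ``cascading'' worry does go through: if $c_{k+1}\leq \lfloor \mathcal{B}_i/n_i\rfloor/k$, a deviation above $c_{k+1}$ pushes the deviator to a position where the per-position budget test fails; if $c_{k+1}> \lfloor \mathcal{B}_i/n_i\rfloor/k$, any bid above $\lfloor \mathcal{B}_i/n_i\rfloor/k$ fails the test at position $k$ itself. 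So for the second tier your Myerson-style route is a legitimate (and tighter) alternative to the paper's direct argument.

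The gap is in scope: the paper's proof does not consider Algorithm~\ref{algo:1} alone sufficient. It explicitly decomposes the claim as ``budget distribution mechanism (Algorithm~\ref{algo:10}) and allocation and payment rule (Algorithm~\ref{algo:1}) are truthful,'' because the first tier also involves privately held information --- the city dwellers' votes --- and the paper argues (informally) that a city dweller who misreports her preference ordering only shifts Government funds away from the task requesters she values, lowering her utility. Your proposal never mentions the voting tier, so as a proof that \emph{BULINC} (the whole two-tier mechanism) is truthful, it covers only half of what the paper's proof covers. In your defense, the paper's formal Definition~\ref{def:1} states truthfulness purely in terms of task executors' utilities, so your proof matches the letter of the definition; but it does not match the lemma as the paper itself interprets and proves it, and a complete substitute proof would need to add the (even if informal) argument that truthful voting is optimal for each city dweller in Algorithm~\ref{algo:10}.
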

\begin{proof}
To prove that BULINC is truthful, it is sufficient to prove that budget distribution mechanism (Algorithm \ref{algo:10}) and allocation and payment rule (Algorithm \ref{algo:1}) are truthful. It is due to the reason that budget distribution mechanism has private votes and allocation and payment rule has private bid values.\\
\indent In order to prove that budget distribution mechanism is \emph{truthful}, we have to show that by misreporting the votes the city dwellers will not gain. If any city dweller misreport his/her (henceforth her) votes (or preference ordering) then it is obvious that the Government funds will be shifted from the tasks that you want to be funded to the tasks that you do not want to be funded. It means that the tasks preferred by her will get less funds or may not get any funds and that will lead to lower utility value than the case when she is reporting her true votes. So, the best way for the city dwellers to maximize utility in budget distribution mechanism is to report truthfully their votes. Hence, the budget distribution mechanism is \emph{truthful}.\\
\indent In \emph{allocation and payment rule} (Algorithm \ref{algo:1}) let us fix a task executor $\boldsymbol{e}_i$ and slot $d_i$. Let us say the task executor $\boldsymbol{e}_i$ was in the winning set when reporting truthfully. Now, let us say that $\boldsymbol{e}_i$ reports the cost $\hat{c}_i$ such that $\hat{c}_i < c_i$. In this case, the task executor $\boldsymbol{e}_i$ will still win as it will appear early in the sorted ordering of task executors. The utility of $\boldsymbol{e}_i$ will be $\hat{u}_i = p_i - c_i = u_i$. On the other hand, if the task executor $\boldsymbol{e}_i$ reports the cost $\hat{c}_i$ such that $\hat{c}_i > c_i$. In such case, the two situations can happen. One situation could be that due to increase in reported cost the task executors may no longer belong to the wining set and is a loser. In such case, the utility of task executor $\boldsymbol{e}_i$ will be $\hat{u}_i = 0$. Another situation could be that with the increase in cost the task executor $\boldsymbol{e}_i$ can still belong to the winning set then her utility will be $\hat{u}_i = p_i - c_i = u_i$. So,  in both the situation the task executor $\boldsymbol{e}_i$ has not been benefited by misreporting her valuation. Hence Algorithm \ref{algo:1} is truthful.\\
\indent From the above discussion it can be inferred that Algorithm \ref{algo:10} and Algorithm \ref{algo:1} are truthful and so as the BULINC.             
\end{proof}
\begin{lemma}
\label{lem:3}
 BULINC is individually rational.
\end{lemma}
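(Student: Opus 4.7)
The plan is to verify the inequality $u_i = p_i - c_i \geq 0$ for every task executor $\boldsymbol{e}_i$, as required by Definition \ref{def:3}. Since individual rationality only concerns the task executors' utilities (the city dwellers and task requesters receive no monetary payoff in the sense of Equation \ref{equ:1}), it is enough to analyze the output of Algorithm \ref{algo:1}. I would split the argument into the losing case and the winning case, fix an arbitrary slot $\ell$, task requester $\boldsymbol{r}_i \in \mathbb{F}$, and task $\boldsymbol{t}_j^i$, and then reason executor-by-executor.

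For any task executor $\boldsymbol{e}_f$ that is not selected into $\mathcal{W}_j^i$, line 1 of Equation \ref{equ:1} does not apply, so $u_f = 0$ by the second branch of the utility definition, and individual rationality holds trivially. The interesting case is a winning executor. First I would use line 5 of Algorithm \ref{algo:1}, which sorts $\boldsymbol{e}^\ell$ in non-decreasing order of reported cost into $\hat{\boldsymbol{e}}$; hence if $\boldsymbol{e}_f$ is the $k^{th}$ executor added to $\mathcal{W}_j^i$, its cost $c_f$ satisfies $c_f \leq c_{k+1}$, where $c_{k+1}$ denotes the cost of the next executor in $\hat{\boldsymbol{e}}$. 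Second, the selection test in line 7 forces $c_f \leq \frac{\lfloor \mathcal{B}_i / n_i \rfloor}{k}$ at the moment $\boldsymbol{e}_f$ is admitted.

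Now I would combine the two inequalities above with the payment formula in line 17, $p_f = \min\Bigl\{\frac{\lfloor \mathcal{B}_i / n_i \rfloor}{k},\, c_{k+1}\Bigr\}$. Since both arguments of the minimum are at least $c_f$, the payment itself satisfies $p_f \geq c_f$, giving $u_f = p_f - c_f \geq 0$. Because the slot, task requester, task, and executor were arbitrary, the inequality holds uniformly, and the allocation--payment subroutine is individually rational. Since Algorithms \ref{algo:10} and \ref{algo:11} do not assign any payments (they only select task requesters and partition tasks into slots) and therefore cannot violate individual rationality of the executors, it follows that BULINC as a whole is individually rational.

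I do not foresee a major obstacle; the only subtlety is bookkeeping around the index $k$ and the phantom cost $c_{k+1}$ when the final iteration exhausts the sorted list. In that boundary case I would observe that the break in line 11 is triggered by $c_{k+1} > \frac{\lfloor \mathcal{B}_i / n_i \rfloor}{k}$, so the $\min$ in the payment still dominates $c_f$, and the argument above goes through unchanged.
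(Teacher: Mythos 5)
Your proposal is correct and follows essentially the same route as the paper's own proof: reduce individual rationality to the allocation and pricing rule, then use the line~7 admission check (cost at most the budget share $\lfloor \mathcal{B}_i/n_i\rfloor/k$) together with the sorted ordering (cost at most $c_{k+1}$) to conclude that both arguments of the $\min$ in the payment dominate the winner's cost, hence $p_i \geq c_i$ and $u_i \geq 0$. Your additions --- the explicit losing case, the remark that Algorithms~\ref{algo:10} and~\ref{algo:11} assign no payments, and the boundary case --- are harmless elaborations of the same argument rather than a different approach.
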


\begin{proof}
To prove that BULINC is individually rational it is to be shown that allocation and payment rule of BULINC is individually rational. Let us fix a task executor $\boldsymbol{e}_i$ and the slot $d_i$. For each of the task executors $\boldsymbol{e}_i$ the check in line 7 of Algorithm \ref{algo:1} guarantees that one of the component of payment rule $i.e.$ $ \frac{\lfloor\mathcal{B}_i/n_i \rfloor}{k}$ is always less than or equal to the cost of the task executor $\boldsymbol{e}_i$. Talking about the other component of the payment rule $i.e.$ $c_{k+1}$ it is always greater than or equal to $c_i$ $\forall i = 1, 2, \ldots, k$. As it is clear that the the reported bid value is always less than or equal to the payment $p_i$ of task executor $\boldsymbol{e}_i$. So, the utility $u_i$ of task executor $\boldsymbol{e}_i$ is always non-negative $i.e.$ $u_i \geq 0$. Hence, BULINC is individually rational.   
\end{proof}

\begin{lemma}
\label{lem:3bf}
BULINC is budget feasible.
\end{lemma}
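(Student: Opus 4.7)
The plan is to reduce budget feasibility to a per-task guarantee inside the \emph{Allocation and Pricing Rule} (Algorithm \ref{algo:1}), since this is the only place where money is actually paid out to strategic agents; the first tier (Algorithm \ref{algo:10}) is already constructed to respect the Government budget $\mathcal{B}$ in line 8, so the outer bookkeeping $\sum_{i\in\mathbb{F}}\mathcal{B}_i\le\mathcal{B}$ falls out of the while-condition of that algorithm directly. Consequently, the heart of the lemma is showing that for every task $\boldsymbol{t}_j^i$ of every selected requester $\boldsymbol{r}_i\in\mathbb{F}$, the total payment handed to the winning executors is at most the per-task allowance $\lfloor \mathcal{B}_i/n_i\rfloor$.

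First, I would fix a slot $\ell$, a requester $\boldsymbol{r}_i\in\mathbb{F}$, and a task $\boldsymbol{t}_j^i$, and let $K=|\mathcal{W}_j^i|$ denote the number of winners produced by the inner loop (lines 6--13). By construction, each winner $\boldsymbol{e}_h\in\mathcal{W}_j^i$ is assigned a payment of the form
\begin{equation*}
p_h \;=\; \min\!\left\{\frac{\lfloor \mathcal{B}_i/n_i\rfloor}{K},\; c_{K+1}\right\} \;\le\; \frac{\lfloor \mathcal{B}_i/n_i\rfloor}{K},
\end{equation*}
where $c_{K+1}$ is the next executor's cost in the sorted order (or $+\infty$ if there is none). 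Summing over the $K$ winners gives
\begin{equation*}
\sum_{h\in\mathcal{W}_j^i} p_h \;\le\; K\cdot \frac{\lfloor \mathcal{B}_i/n_i\rfloor}{K} \;=\; \lfloor \mathcal{B}_i/n_i\rfloor,
\end{equation*}
which is exactly the per-task budget allotted to $\boldsymbol{t}_j^i$ as described in Section \ref{s:prelim}.

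Next I would aggregate across the $n_i$ tasks of requester $\boldsymbol{r}_i$ to obtain
\begin{equation*}
\sum_{j=1}^{n_i}\sum_{h\in\mathcal{W}_j^i} p_h \;\le\; n_i\cdot \lfloor \mathcal{B}_i/n_i\rfloor \;\le\; \mathcal{B}_i,
\end{equation*}
and then aggregate across $\boldsymbol{r}_i\in\mathbb{F}$, invoking the loop invariant of Algorithm \ref{algo:10} (the deduction $\mathcal{B}\leftarrow\mathcal{B}-\mathcal{B}_i$ in line 10 guarded by the check in line 8) to conclude $\sum_{i\in\mathbb{F}}\mathcal{B}_i\le\mathcal{B}$. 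Combining the two bounds yields the global inequality that the total incentive paid out over all slots, tasks and requesters is at most the Government fund $\mathcal{B}$, which meets Definition \ref{def:2}.

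The only genuine obstacle is pinning down the counter $k$ used in the pricing formula: in Algorithm \ref{algo:1} the variable $k$ is incremented on each acceptance, and the payment rule refers to this same $k$, so one must verify that the value used in line 17 corresponds to the number of winners $K=|\mathcal{W}_j^i|$ (or equivalently, that the per-winner bound $\lfloor\mathcal{B}_i/n_i\rfloor/K$ is the one that applies). Once this bookkeeping is established the rest is a single summation, so no auxiliary probabilistic or combinatorial machinery is needed beyond what Algorithms \ref{algo:10} and \ref{algo:1} already supply.
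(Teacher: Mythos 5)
Your proposal is correct and takes essentially the same route as the paper: the core step --- each of the $K$ winners for task $\boldsymbol{t}_j^i$ is paid at most $\frac{\lfloor \mathcal{B}_i/n_i\rfloor}{K}$, so the per-task payout is bounded by $K \cdot \frac{\lfloor \mathcal{B}_i/n_i\rfloor}{K} = \lfloor \mathcal{B}_i/n_i\rfloor$ --- is exactly the paper's argument, and your flag about reconciling the incremented counter $k$ with the winner count $|\mathcal{W}_j^i|$ matches how the paper resolves it (defining $k$ as the largest index satisfying the stopping criterion). Your further aggregation across the $n_i$ tasks ($n_i\lfloor\mathcal{B}_i/n_i\rfloor \le \mathcal{B}_i$) and across requesters in $\mathbb{F}$ via the check in line 8 of Algorithm \ref{algo:10} makes explicit what the paper waves off with ``in similar manner it will be true for all the tasks,'' and in fact establishes a global bound against $\mathcal{B}$ that is slightly stronger than the per-task condition in Definition \ref{def:2}.
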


\begin{proof}
To prove that BULINC is budget feasible it is to be shown that allocation and payment rule of BULINC is budget feasible. Let us fix a task executor $\boldsymbol{e}_i$, the task $\boldsymbol{t}_j^i$ of task requester $\boldsymbol{r}_i$, and the slot $d_i$. From the construction of BULINC it can be seen that the maximum payment that any winning task executor $\boldsymbol{e}_i$ receives is $ \frac{\lfloor\mathcal{B}_i/n_i \rfloor}{k}$, where $k$ is the largest index in the sorted ordering of the task executors that satisfies the stopping criteria $c_k \leq  \frac{\lfloor\mathcal{B}_i/n_i \rfloor}{k}$. For any task $\boldsymbol{t}_j^i$ of task requester $\boldsymbol{r}_i$ in any slot $d_i$, we have
\begin{equation*}
 {p}_j^i =  \displaystyle\sum_{\boldsymbol{e}_i \in \mathcal{W}_j^i} {p}_i \leq \displaystyle\sum_{\boldsymbol{e}_i \in \mathcal{W}_j^i} \frac{\lfloor \mathcal{B}_i/n_i\rfloor}{k} 
\end{equation*}
\begin{equation*}
 = \frac{\lfloor \mathcal{B}_i/n_i\rfloor}{k} \cdot k =  {\lfloor \mathcal{B}_i/n_i\rfloor}
\end{equation*}
So, the total payment made to the task executors as winners for the task $\boldsymbol{t}_j^i$ of the task requester $\boldsymbol{r}_i$ in any $d_i$ slot is at most the budget $i.e.$ $\frac{\lfloor\mathcal{B}_i/n_i \rfloor}{k}$. As it is true for one task of $i^{th}$ task requester, in similar manner it will be true for all the tasks of $i^{th}$ task requester and for all the tasks of other task requesters. Hence, BULINC is budget feasible. 
\end{proof}

\section{Experimental Analysis}
\label{sec:ef}
To support the theoretical analysis carried out in section \ref{sec:atppm} above, in this section the simulations are carried out independently for both the tiers of the proposed framework. First tier \emph{estimates on the number of task requesters receiving the Government fund (GF) from among the available ones} is made. For this purpose, the simulation is carried out that shows the task requesters (TRs) getting the Government fund using BULINC for both, the synthetic data and the real time participatory democracy data (RTPDD) \cite{Som2023}.\\
\indent In the second tier, BULINC is compared with the benchmark mechanisms, namely greedy mechanism (GM) [], Jalaly et. al \cite{Jalaly2017SimpleAE}, and Chen et. al \cite{Chen3801} in terms of \emph{truthfulness}, \emph{budget feasibility}, and \emph{scalability}. The greedy mechanism is written as ‘GM’ in the simulations. For \emph{truthfulness} and \emph{budget feasibility}, BULINC is compared with GM, Jalaly et. al \cite{Jalaly2017SimpleAE}, and Chen et. al \cite{Chen3801} on the metrics \emph{sum of utility of TEs} and \emph{budget utilized} respectively. The metric namely, \emph{sum of utility of TEs} help us to show that in case of BULINC, Jalaly et. al \cite{Jalaly2017SimpleAE}, and  Chen et. al \cite{Chen3801} the participating TEs are not able to gain by misreporting their true bid value. The metric \emph{budget utilized} will help us to show that the payment received by the winning TEs is at most the budget and hence the mechanisms are \emph{budget feasible}. Further, through simulation it is shown that BULINC and the benchmark mechanism are \emph{scalable}. For scalability the running time of each of the mechanisms is determined and plotted. The unit of valuation of the task executors is taken as $\$$. 

\begin{figure*}
        \centering
\begin{subfigure}{0.18\textwidth}
\includegraphics[scale = 0.15]{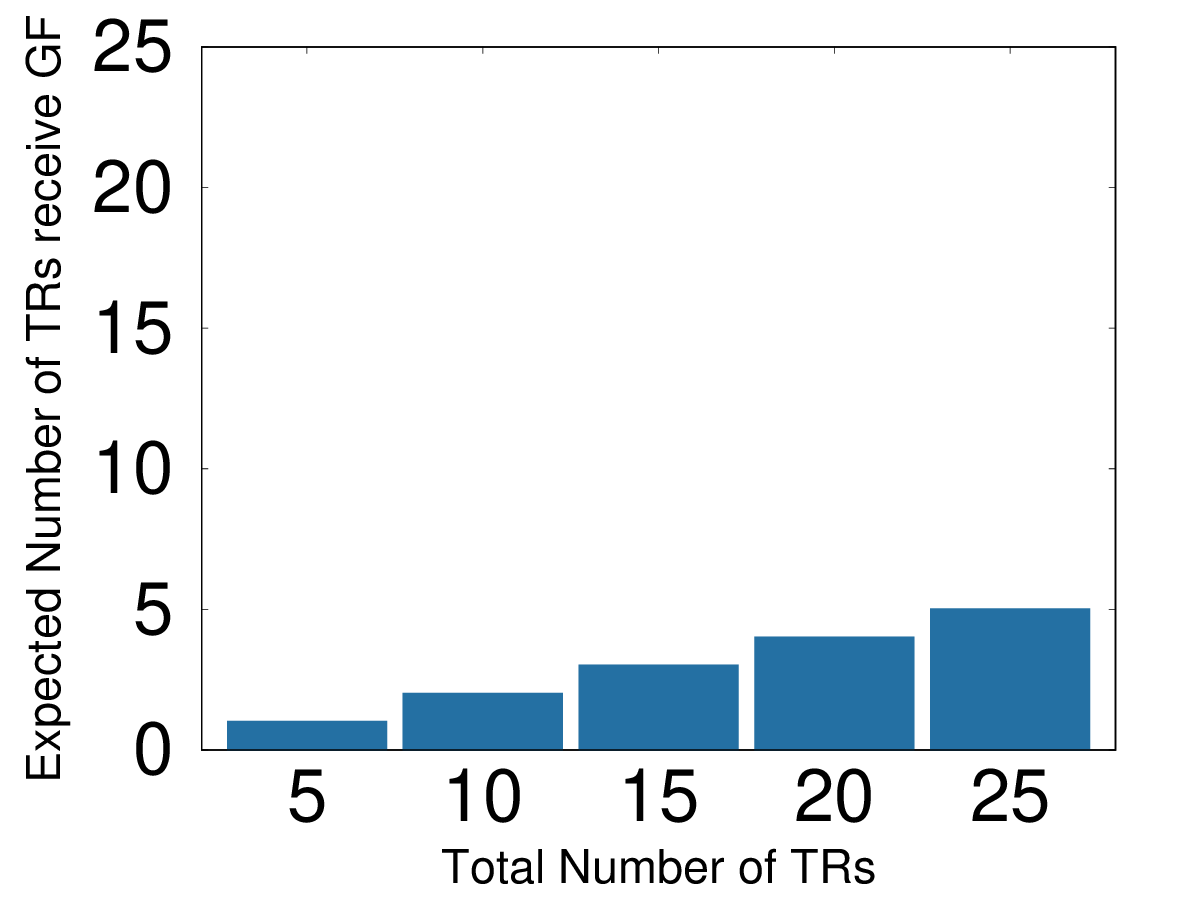}
\subcaption{$Pr\{\mathcal{Z}_i^* = 1\} = 0.20$}
\label{sim:1b1}
\end{subfigure}%
\begin{subfigure}{0.18\textwidth}
\includegraphics[scale = 0.15]{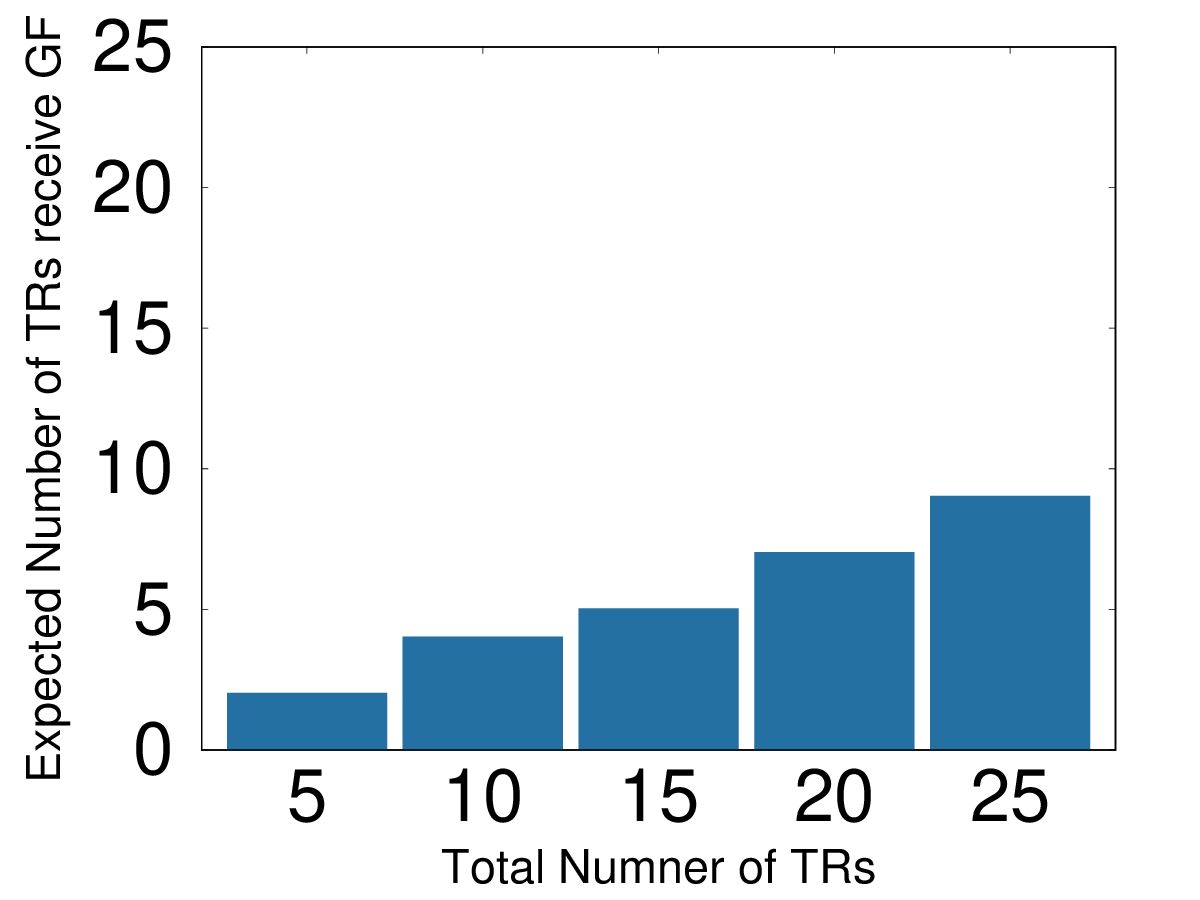}
\subcaption{$Pr\{\mathcal{Z}_i^* = 1\} = 0.33$}
\label{sim:1b2}
\end{subfigure}%
\begin{subfigure}{0.18\textwidth}
\includegraphics[scale = 0.15]{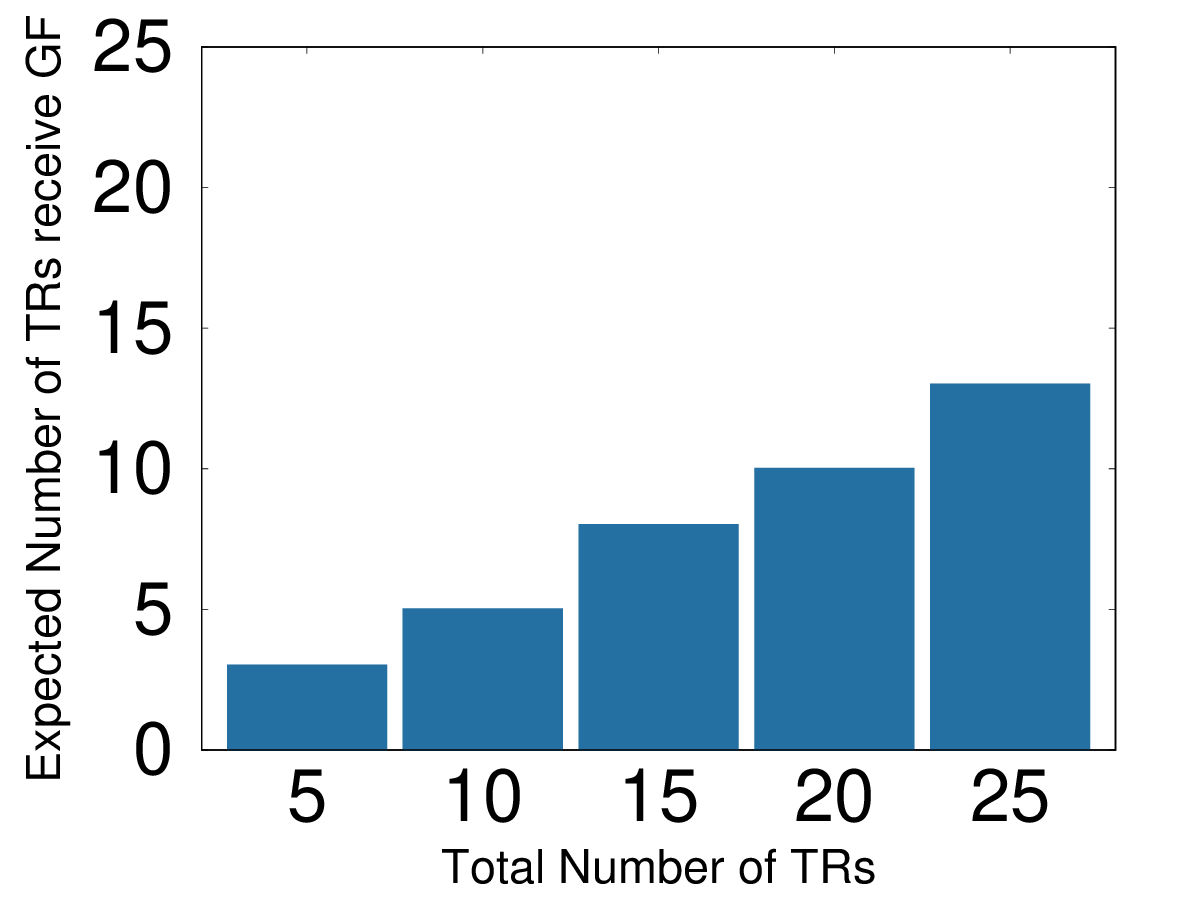}
\subcaption{$Pr\{\mathcal{Z}_i^* = 1\} = 0.50$}
\label{sim:1b3}
\end{subfigure}%
\begin{subfigure}{0.18\textwidth}
\includegraphics[scale = 0.15]{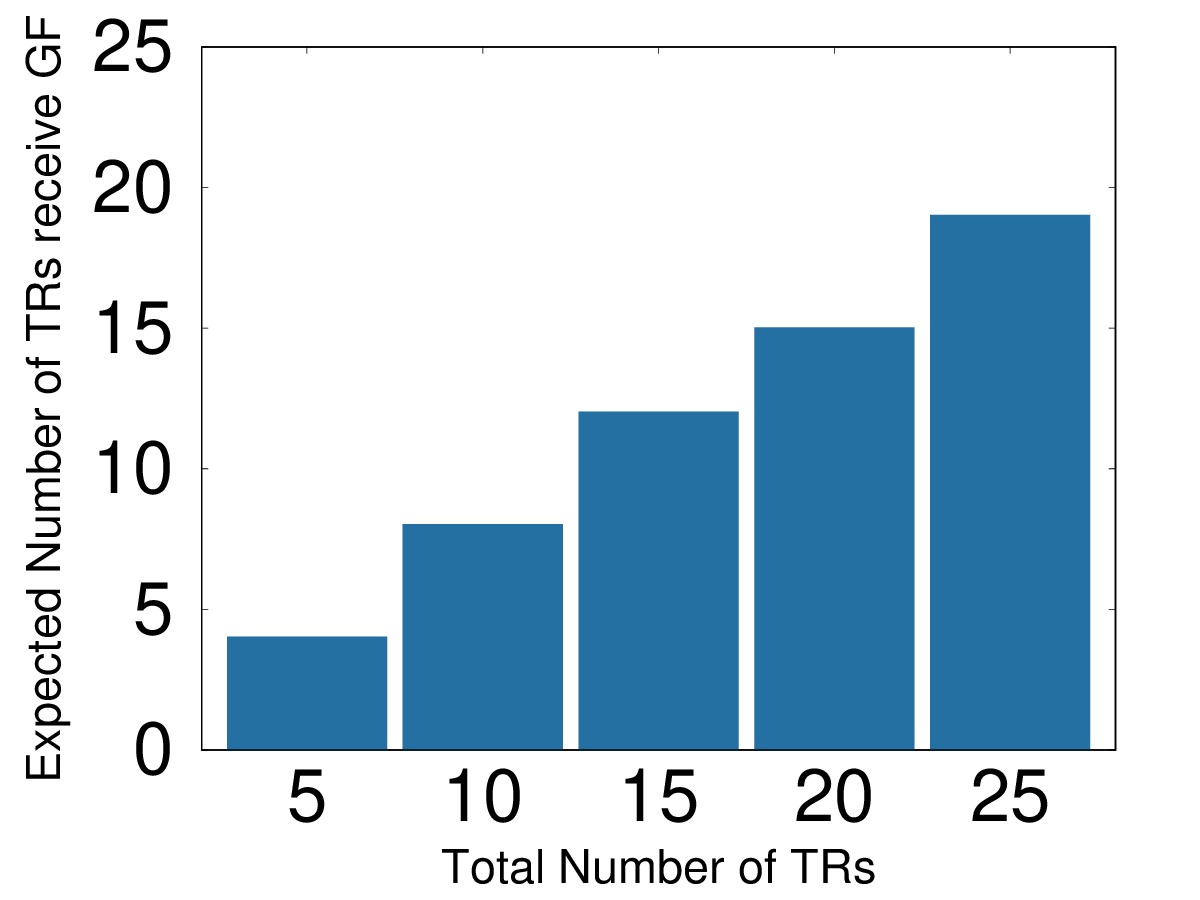}
\subcaption{$Pr\{\mathcal{Z}_i^* = 1\} = 0.75$}
\label{sim:1b4}
\end{subfigure}%
\begin{subfigure}{0.18\textwidth}
\includegraphics[scale = 0.15]{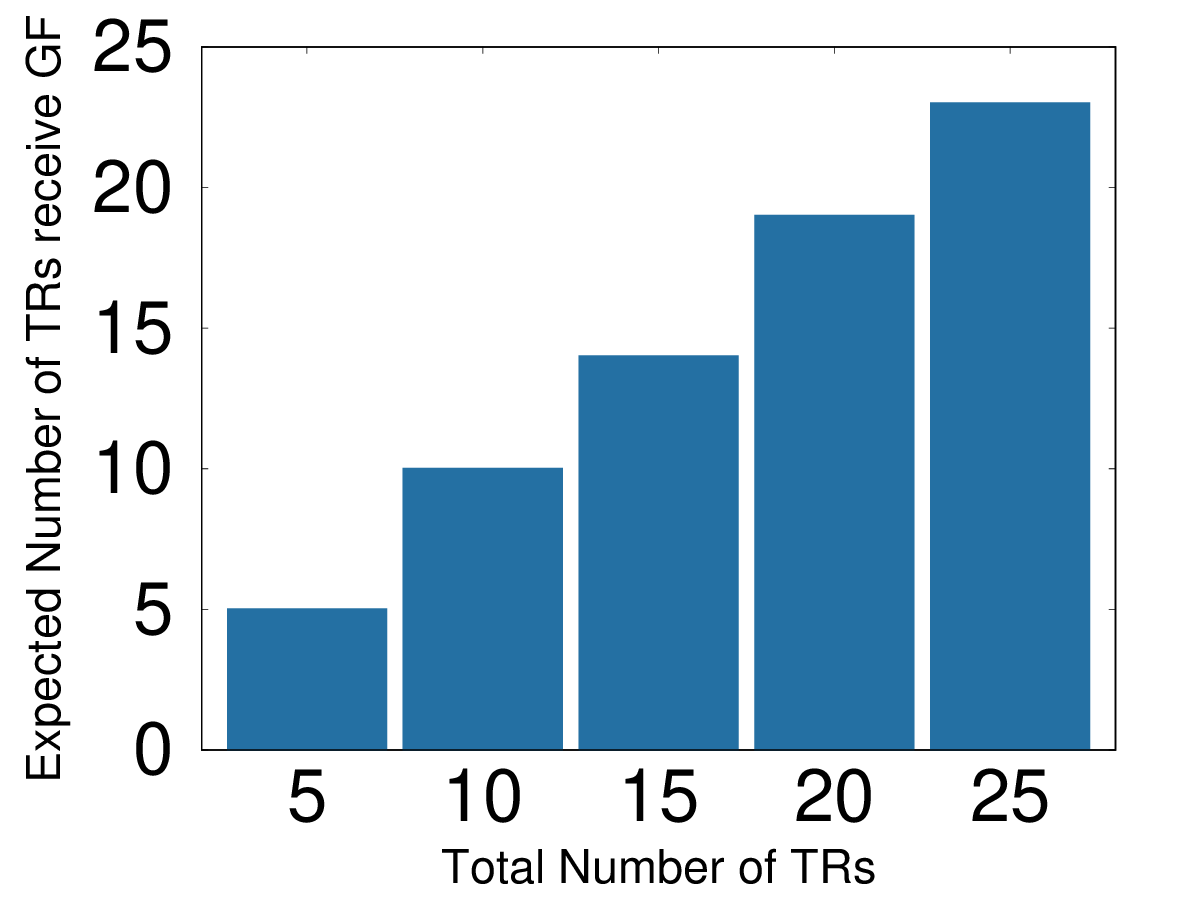}
\subcaption{$Pr\{\mathcal{Z}_i^* = 1\} = 0.92$}
\label{sim:1b5}
\end{subfigure}
        \caption{Comparison of expected number of TRs receive the Government fund based on SPDD}\label{fig:1}
\end{figure*}

 \begin{figure*}
        \centering
\begin{subfigure}{0.18\textwidth}
\includegraphics[scale = 0.15]{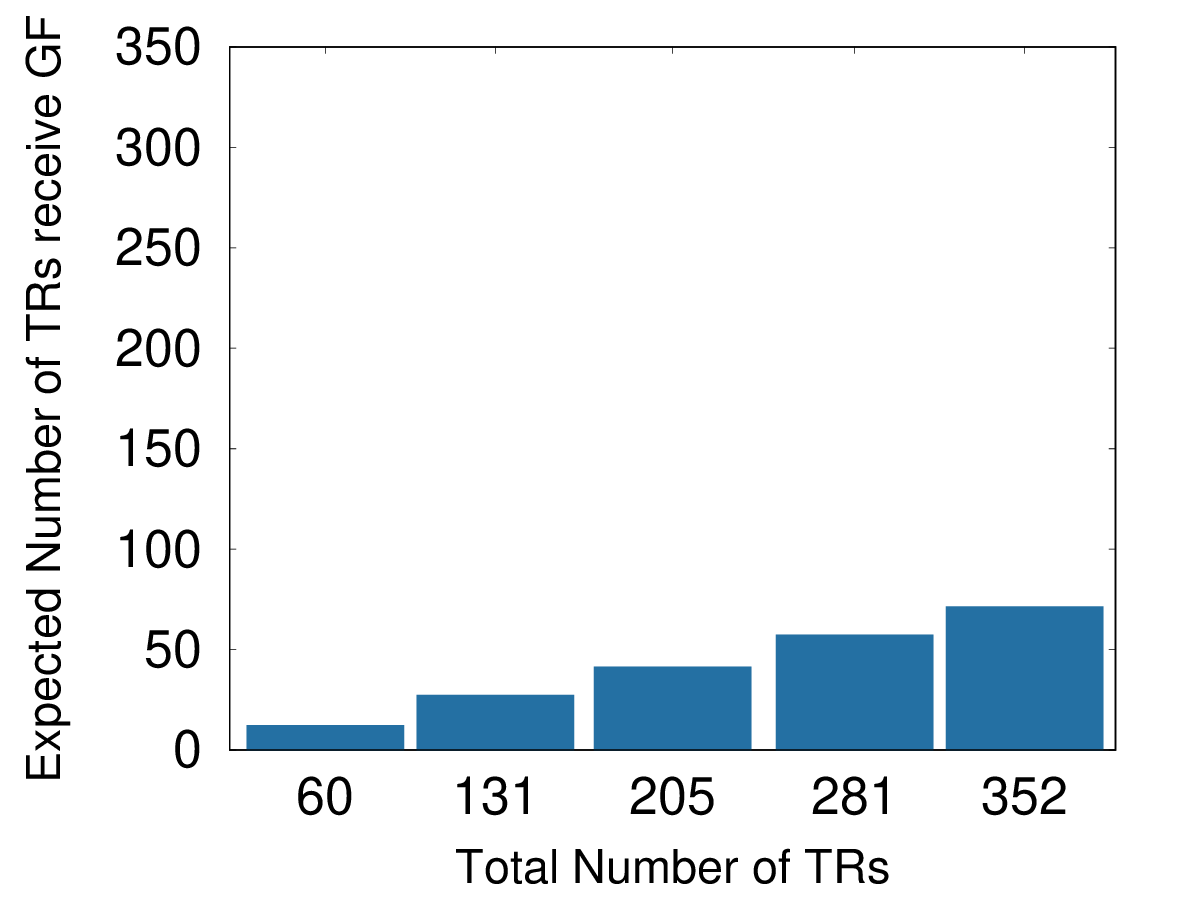}
\subcaption{$Pr\{\mathcal{Z}_i^* = 1\} = 0.20$}
\label{sim:2b1}
\end{subfigure}%
\begin{subfigure}{0.18\textwidth}
\includegraphics[scale = 0.15]{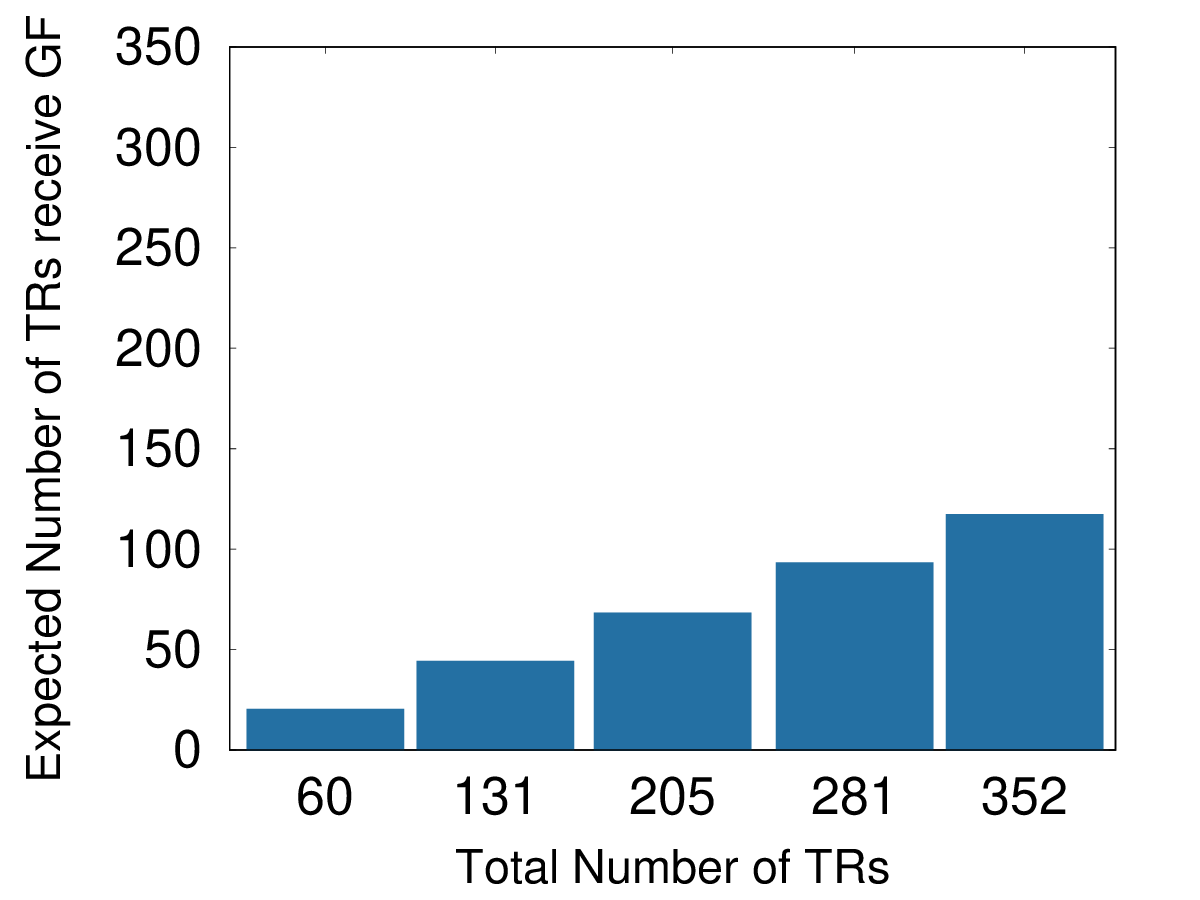}
\subcaption{$Pr\{\mathcal{Z}_i^* = 1\} = 0.33$}
\label{sim:2b2}
\end{subfigure}%
\begin{subfigure}{0.18\textwidth}
\includegraphics[scale = 0.15]{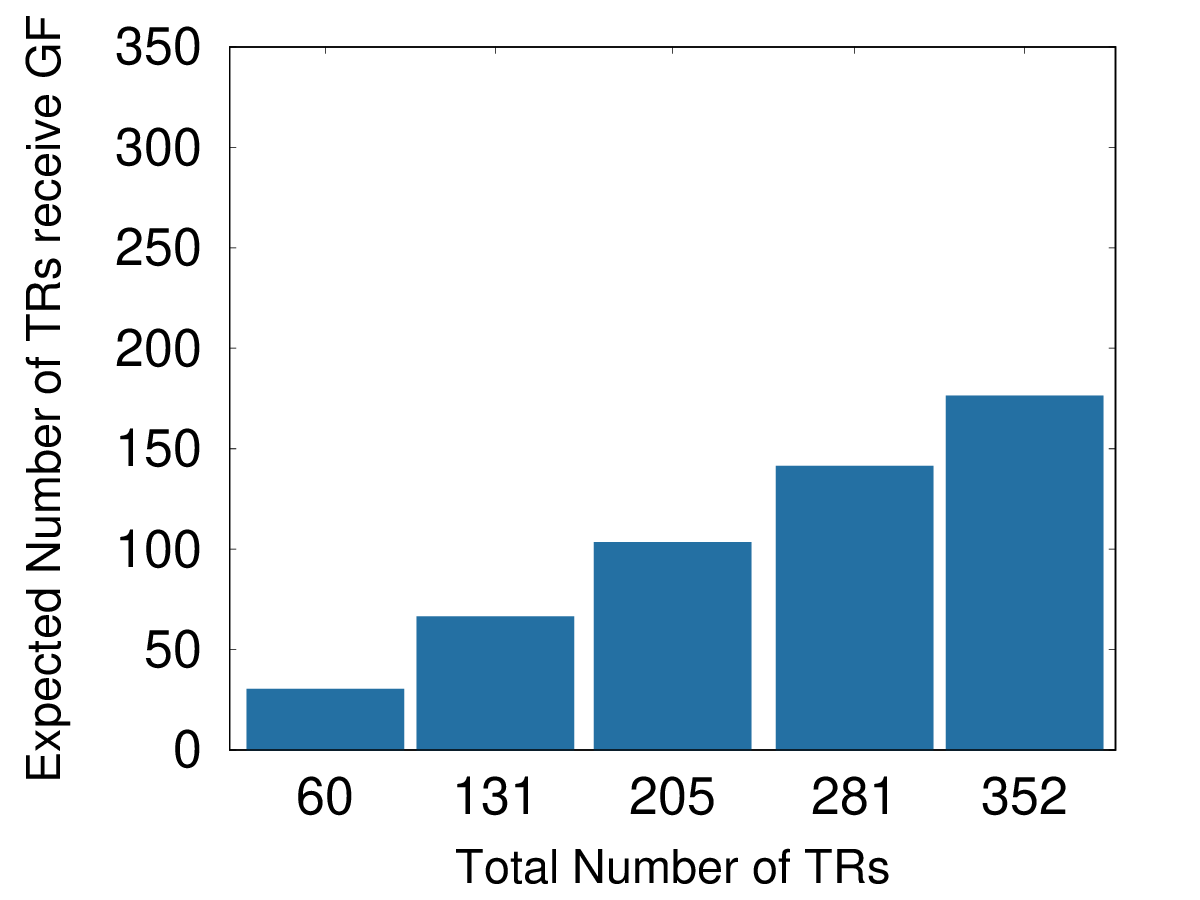}
\subcaption{$Pr\{\mathcal{Z}_i^* = 1\} = 0.50$}
\label{sim:2b3}
\end{subfigure}%
\begin{subfigure}{0.18\textwidth}
\includegraphics[scale = 0.15]{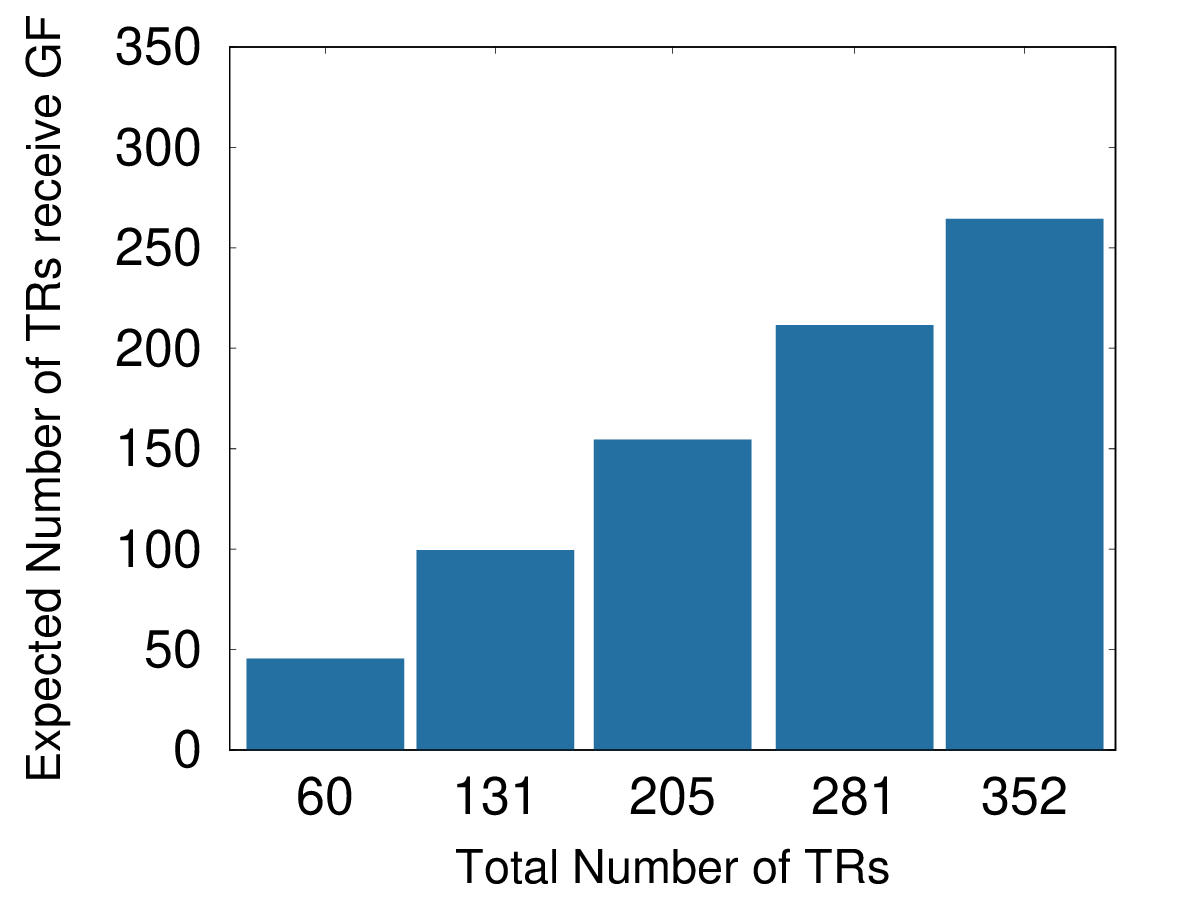}
\subcaption{$Pr\{\mathcal{Z}_i^* = 1\} = 0.75$}
\label{sim:2b4}
\end{subfigure}%
\begin{subfigure}{0.18\textwidth}
\includegraphics[scale = 0.15]{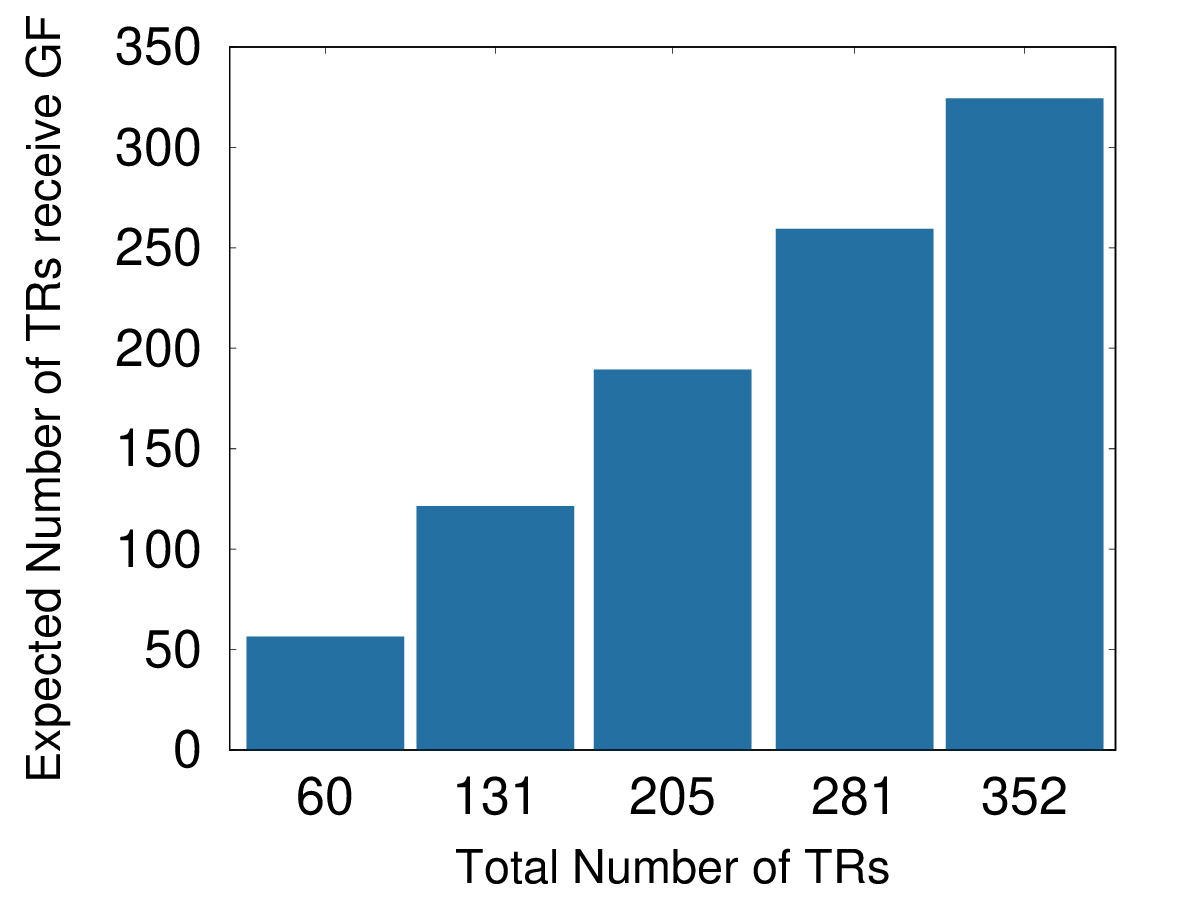}
\subcaption{$Pr\{\mathcal{Z}_i^* = 1\} = 0.92$}
\label{sim:2b5}
\end{subfigure}
        \caption{Comparison of expected number of TRs receive the Government fund based on RTPDD \cite{Som2023}}\label{fig:2}
\end{figure*} 

\subsection{Simulation Set-up}
\label{sec:Sim}
In this section, the simulation set-up utilized in this paper for the two independent tiers are discussed separately. 
\begin{table}[H]
\caption{Synthetic participatory democracy data}
\label{tab: fifth_table}
\centering
\begin{tabular}{c|c|c}
\hline
\textbf{SI. No.} & \textbf{Pr\{$\mathbb{Z}_i^\ast$ = 1\}} & \textbf{No. of TRs}  \\
\hline
 1 & 0.20 & 5, 10, 15, 20, 25 \\
 2 & 0.33 & 5, 10, 15, 20, 25 \\
 3 & 0.50 & 5, 10, 15, 20, 25 \\
 4 & 0.75 & 5, 10, 15, 20, 25 \\
 5 & 0.92 & 5, 10, 15, 20, 25 \\
\hline
\end{tabular}
\end{table}

\begin{itemize}
\item \textbf{Simulation set-up for tier 1 -} For the simulation purpose we have utilized the synthetic participatory democracy data (SPDD) and the real time participatory democracy data (RTPDD) as shown in TABLE \ref{tab: fifth_table}, and TABLES \ref{tab: fourth_table} and \ref{tab: third_table} respectively. As a synthetic data, TABLE \ref{tab: fifth_table} represents two attributes, namely, the probability that any $i^{th}$ task requester receives the Government fund and the number of available task requesters. The simulation is performed for 5 different probability values. For comparison purpose, the available number of task requesters is kept same for 5 different probability values. The real-time participatory data is presented in TABLES \ref{tab: fourth_table} and \ref{tab: third_table}. TABLES \ref{tab: fourth_table} depicts the number of TRs competing in each category (or types of task) for the Government fund. Once the number of competing TRs in each category is determined, then TABLE \ref{tab: third_table} represents the data that we are interested in for the simulation purpose.
\begin{table}[H]
\caption{Number of TRs in each category in real-time participatory democracy data}
\label{tab: fourth_table}
\centering
\begin{tabular}{c|c|c}
\hline
\textbf{SI. No.} & \textbf{Category} & \textbf{No. of TRs}  \\
\hline
 1 & Arts, Culture, and Community & 131 \\
 2 & Education & 352 \\
 3 & Park, Health, and Environment & 281 \\
 4 & Street, Sidewalk, and Safety & 60 \\
 5 & Housing & 205 \\
\hline
\end{tabular}
\end{table}

\begin{table}[H]
\caption{Real-time participatory democracy data}
\label{tab: third_table}
\centering
\begin{tabular}{c|c|c}
\hline
\textbf{SI. No.} & \textbf{Pr\{$\mathbb{Z}_i^\ast$ = 1\}} & \textbf{No. of TRs}  \\
\hline
 1 & 0.20 & 60, 131, 205, 281, 352 \\
 2 & 0.33 & 60, 131, 205, 281, 352 \\
 3 & 0.50 & 60, 131, 205, 281, 352 \\
 4 & 0.75 & 60, 131, 205, 281, 352 \\
 5 & 0.92 & 60, 131, 205, 281, 352 \\
\hline
\end{tabular}
\end{table}

\item \textbf{Simulation set-up for tier 2 -} For the simulation purpose the data sets given in TABLES \ref{fig:tab1} and \ref{fig:tab2} for random distribution (RD) and normal distribution (ND) respectively are utilized. The tables show the \emph{slot number}, \emph{number of TEs}, \emph{bid range}, and \emph{budget}. The budget value mentioned in each row of the tables is the total budget available in respective slots. For RD the bid range is considered in the range 10 to 25. In case of normal distribution the mean ($\mu$) and the standard deviation ($\sigma$) values for the bid range are taken as 17 and 5 and is given in TABLE \ref{fig:tab2}. Each mechanism executes for 10 times on the RD and ND data depicted in TABLES \ref{fig:tab1} and \ref{fig:tab2}. After that an average is taken over 10 rounds and the graphs are plotted. The unit of cost and budget is taken as $\$$. For comparing BULINC with GM, Chen et. al, and Jalaly et. al on the \emph{truthfulness} property it is considered that the subset of TEs are increasing their valuation by $30\%$ of their true value in case of GM and is given as MGM in the simulation figures.     

\begin{table}[H]
\caption{Data set used in tier 2 for RD case}
\label{fig:tab1}
\centering
\begin{tabular}{c|c|c|c}
\hline
\textbf{Slot No.} & \textbf{No. of TEs} & \textbf{Bid range} & \textbf{Budget}  \\
\hline
 1 & 50 & [10, 25] & 134\\
 2 & 45 & [10, 25] & 98\\
 3 & 60 & [10, 25] & 153 \\
 4 & 58 & [10, 25] & 138\\
\hline
\end{tabular}
\end{table}

\begin{table}[H]
\caption{Data set used in tier 2 for ND case}
\label{fig:tab2}
\centering
\begin{tabular}{c|c|c|c}
\hline
\textbf{Slot No.} & \textbf{No. of TEs} & \textbf{Bid range} & \textbf{Budget}  \\
\hline
 1 & 50 & [17, 5] & 120\\
 2 & 45 & [17, 5] & 98\\
 3 & 60 & [17, 5] & 141 \\
 4 & 58 & [17, 5] & 125\\
\hline
\end{tabular}
\end{table}
\end{itemize} 

\subsection{Result Analysis}
\label{sec:RA}
Firstly, the result analysis of tier 1 is carried out, after that the result analysis of tier 2 is done on the basis of parameters mentioned in Section \ref{sec:ef} above. For tier 1, FIGURES \ref{fig:1} and \ref{fig:2} represent the estimate on the number of task requesters getting the Government funds among the available task requesters. The simulation is carried out on both SPDD and RTPDD. In FIGURES \ref{fig:1} and \ref{fig:2}, the $x$-axis of the graphs represent the total number of task requesters and $y$-axis of the graphs represent the expected number of task requesters receiving the Government fund. It can be seen in FIGURE \ref{fig:1} that for a smaller value of Pr$\{Z_i^* =1\}$ (i.e. for  0.20, see FIGURE \ref{sim:1b1}) the number of task requesters receiving the Government fund among the available task requesters is very less. Once the Pr$\{Z_i^* =1\}$ value is increased from 0.20 to 0.33 (see FIGURE \ref{sim:1b2}), the expected number of TRs receiving the GF gets increased. It is due to the reason that the probability that any task requester $\boldsymbol{r}_i$ receives the fund from the Government gets increased. The similar nature of the graphs can be seen for Pr$\{Z_i^* =1\} = 0.50$, Pr$\{Z_i^* =1\} = 0.75$, and Pr$\{Z_i^* =1\} = 0.92$ and the reason is same as above. So, it can be inferred that lower the value of Pr$\{Z_i^* =1\}$ lower will be the number of task requesters receiving the Government fund for their tasks, and higher the value of Pr$\{Z_i^* =1\}$ higher will be the number of task requesters receiving the Government fund for their tasks. Similar nature of the BULINC can be seen in Lemma \ref{the1} and Observation \ref{th:obs} of Section \ref{sec:atppm}. After getting the positive results of BULINC on the synthetic data, we performed the simulation on RTPDD. In case of RTPDD, the number of available TRs is large in number and is shown in TABLE \ref{tab: third_table}. The Pr$\{Z_i^* =1\}$ values are kept same as it was in case of synthetic data. For RTPDD also, with the increase in $Pr\{Z_i = 1\}$ value, the number of TRs receiving the GF gets increased. So, similar nature of BULINC can be seen for the RTPDD and is depicted in FIGURE \ref{fig:2}. The simulation results support the claim made using probabilistic analysis in Section \ref{sec:atppm}.
\begin{figure*}
\begin{subfigure}{0.27\textwidth}
\includegraphics[scale = 0.22]{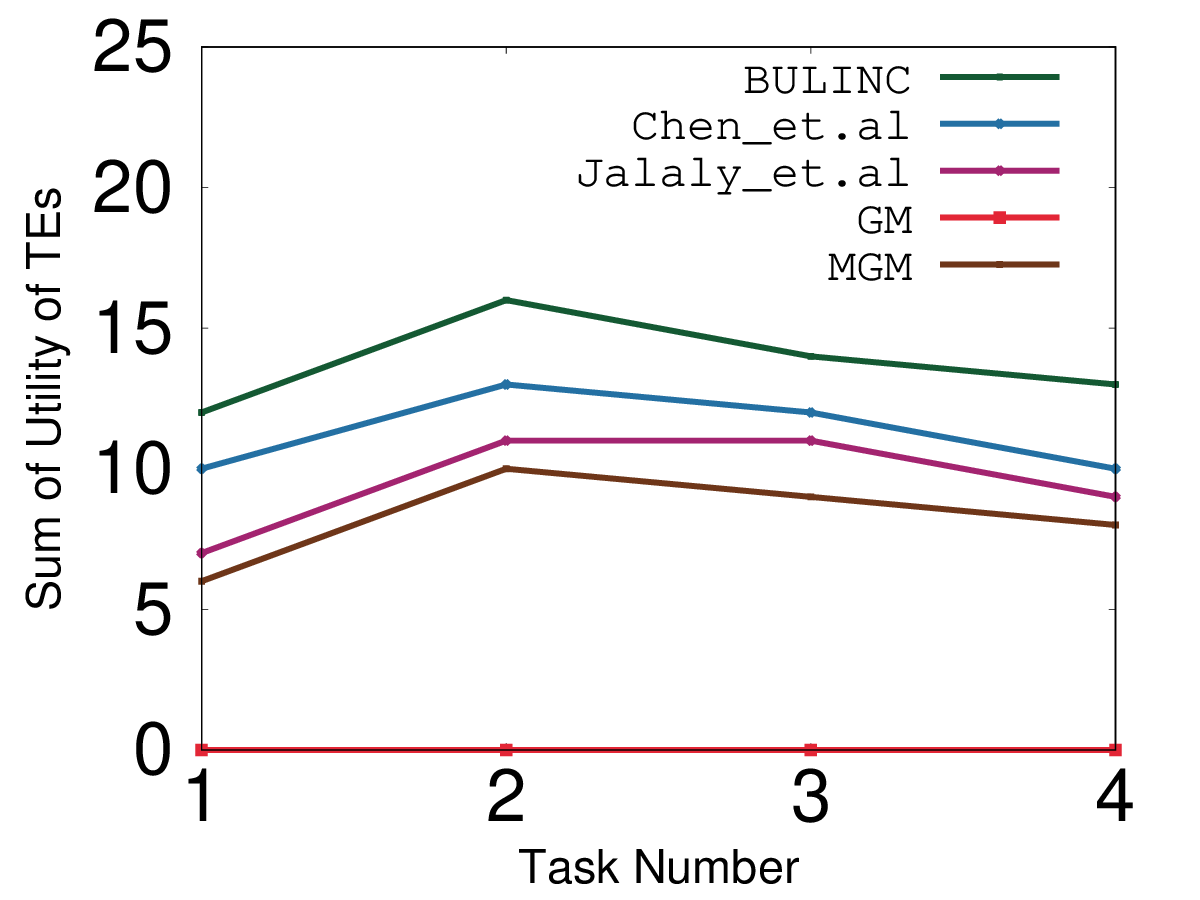}
\subcaption{Utility of TEs in slot 1}
\label{sim:1aslrd}
\end{subfigure}%
\begin{subfigure}{0.27\textwidth}
\includegraphics[scale = 0.22]{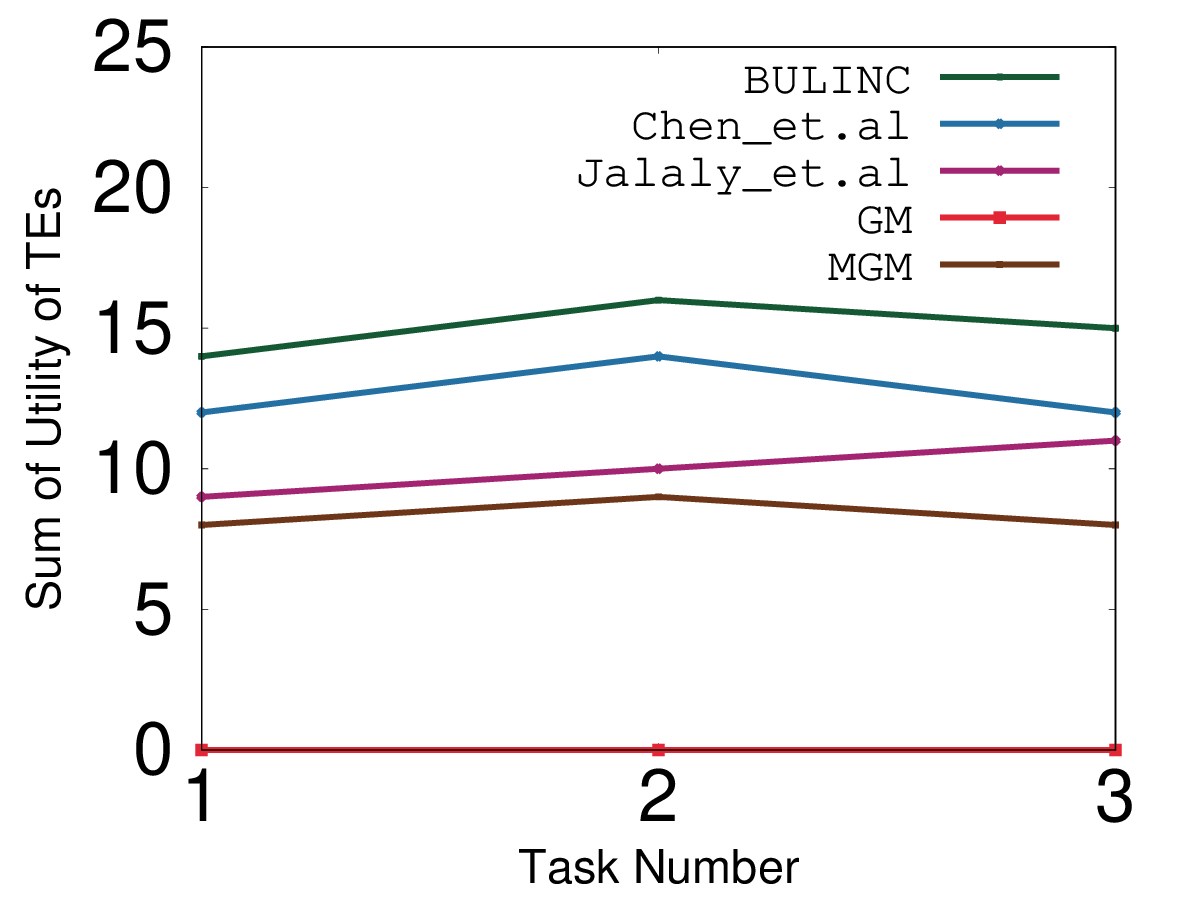}
\subcaption{Utility of TEs in slot 2}
\label{sim:2aslrd}
\end{subfigure}%
\begin{subfigure}{0.27\textwidth}
\includegraphics[scale = 0.22]{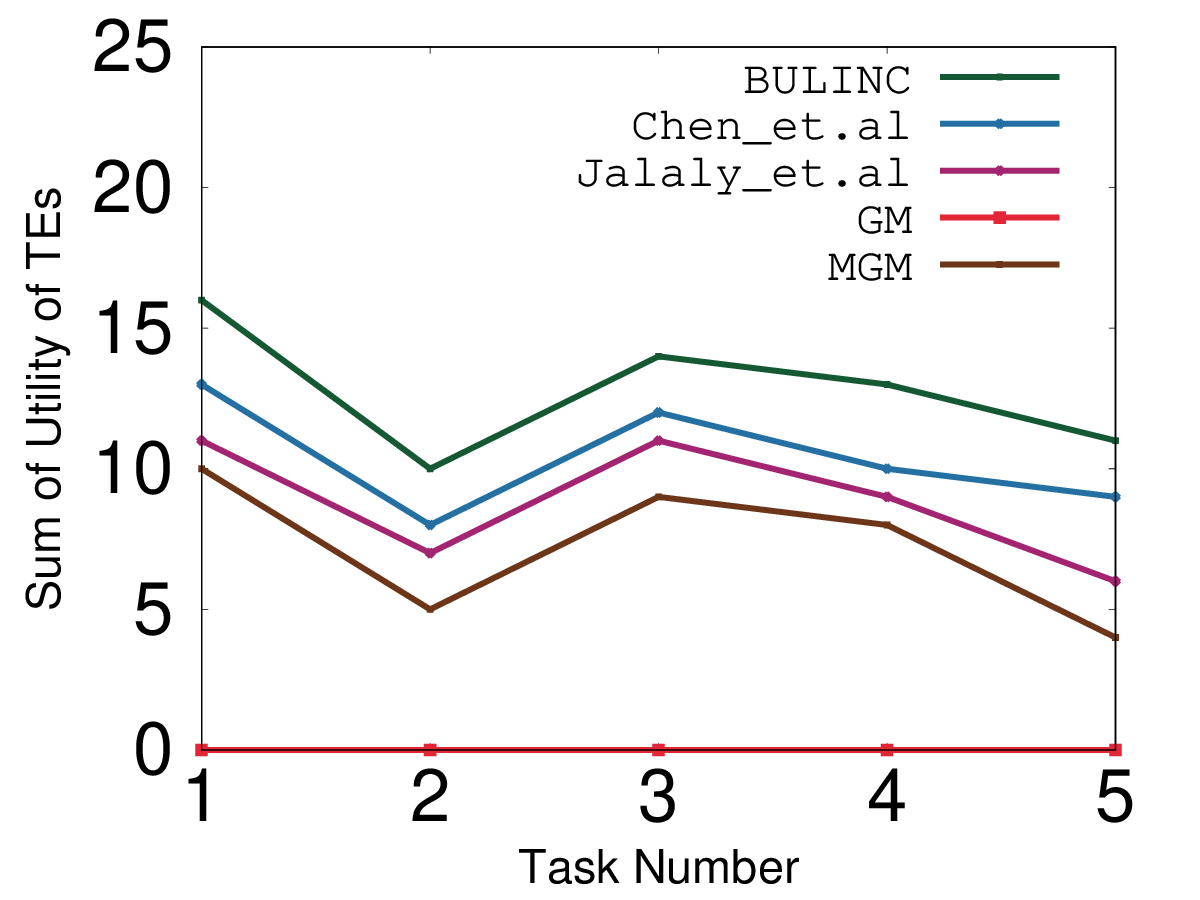}
\subcaption{Utility of TEs in slot 3}
\label{sim:3aslrd}
\end{subfigure}%
\begin{subfigure}{0.27\textwidth}
\includegraphics[scale = 0.22]{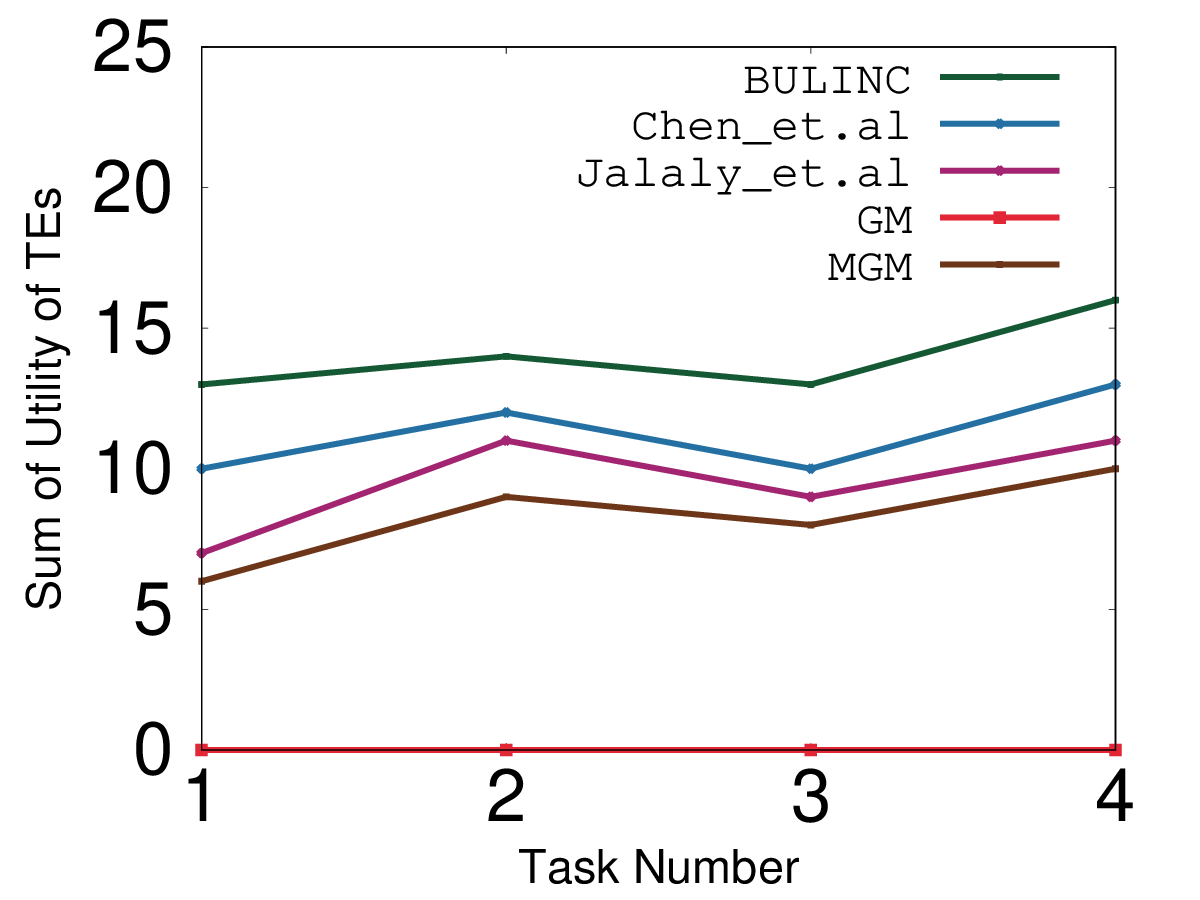}
\subcaption{Utility of TEs in slot 4}
\label{sim:4aslrd}
\end{subfigure}
\caption{Comparison of Utility of TEs for RD case in slot 1 to slot 4 (from left to right)}
\label{Sim:1rd}
\end{figure*}

\begin{figure*}
\begin{subfigure}{0.27\textwidth}
\includegraphics[scale = 0.22]{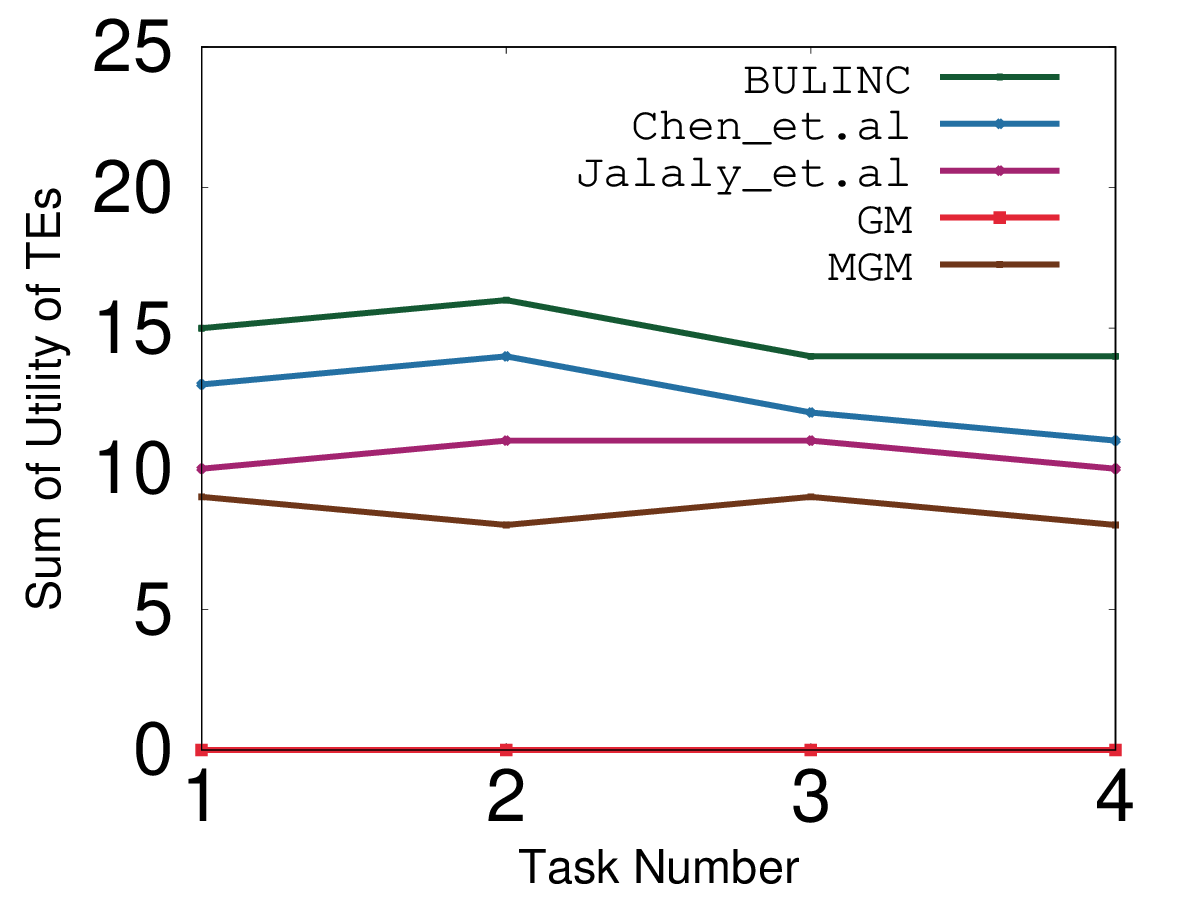}
\subcaption{Utility of TEs in slot 1}
\label{sim:1aslnd}
\end{subfigure}%
\begin{subfigure}{0.27\textwidth}
\includegraphics[scale = 0.22]{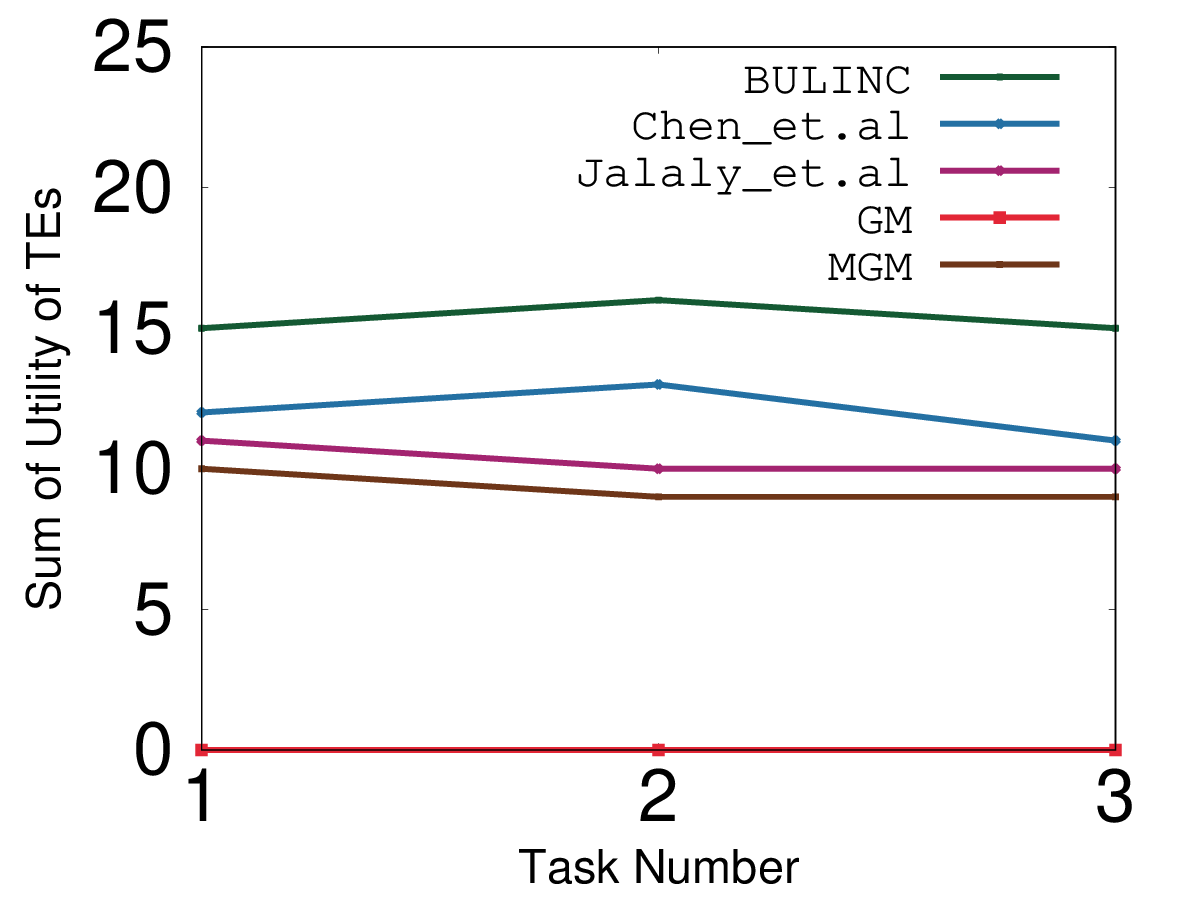}
\subcaption{Utility of TEs in slot 2}
\label{sim:1aslnd}
\end{subfigure}%
\begin{subfigure}{0.27\textwidth}
\includegraphics[scale = 0.22]{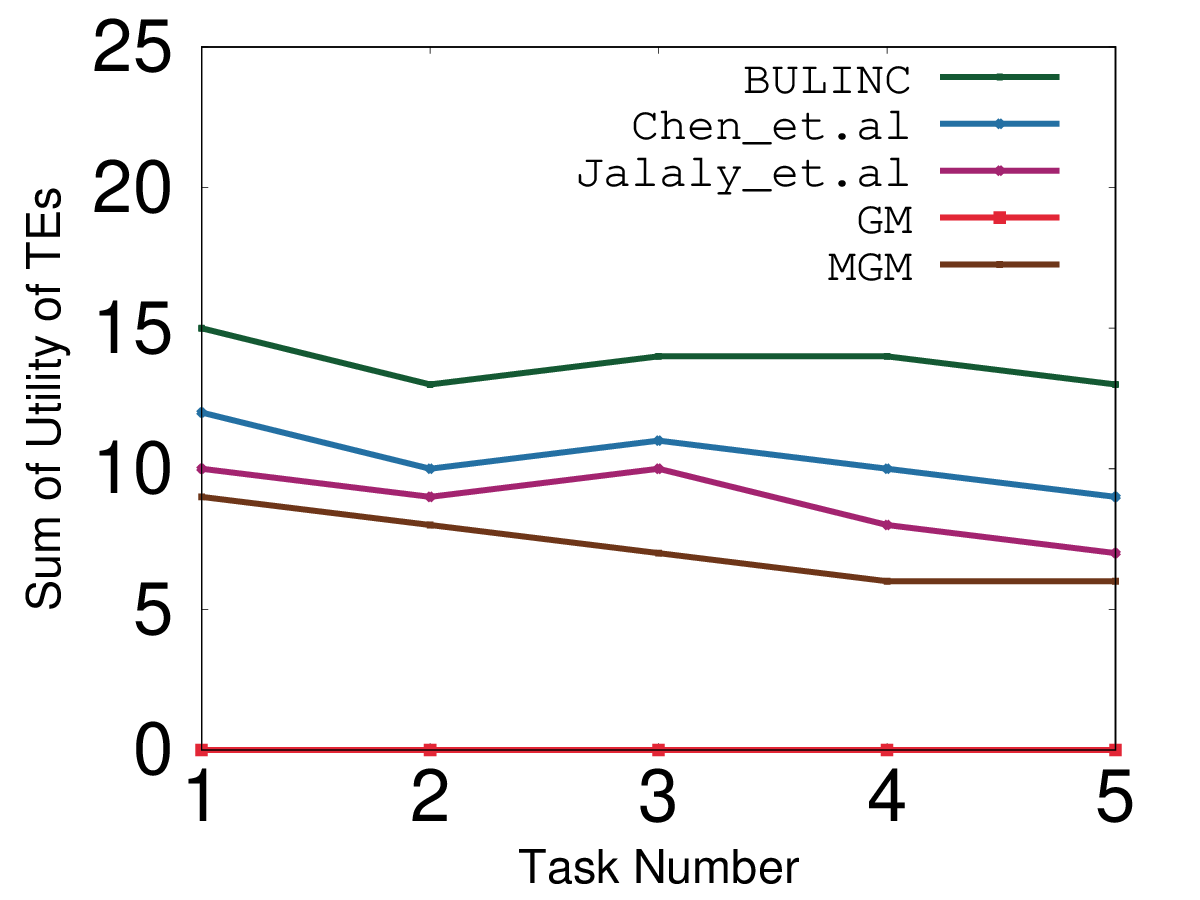}
\subcaption{Utility of TEs in slot 3}
\label{sim:1aslnd}
\end{subfigure}%
\begin{subfigure}{0.27\textwidth}
\includegraphics[scale = 0.22]{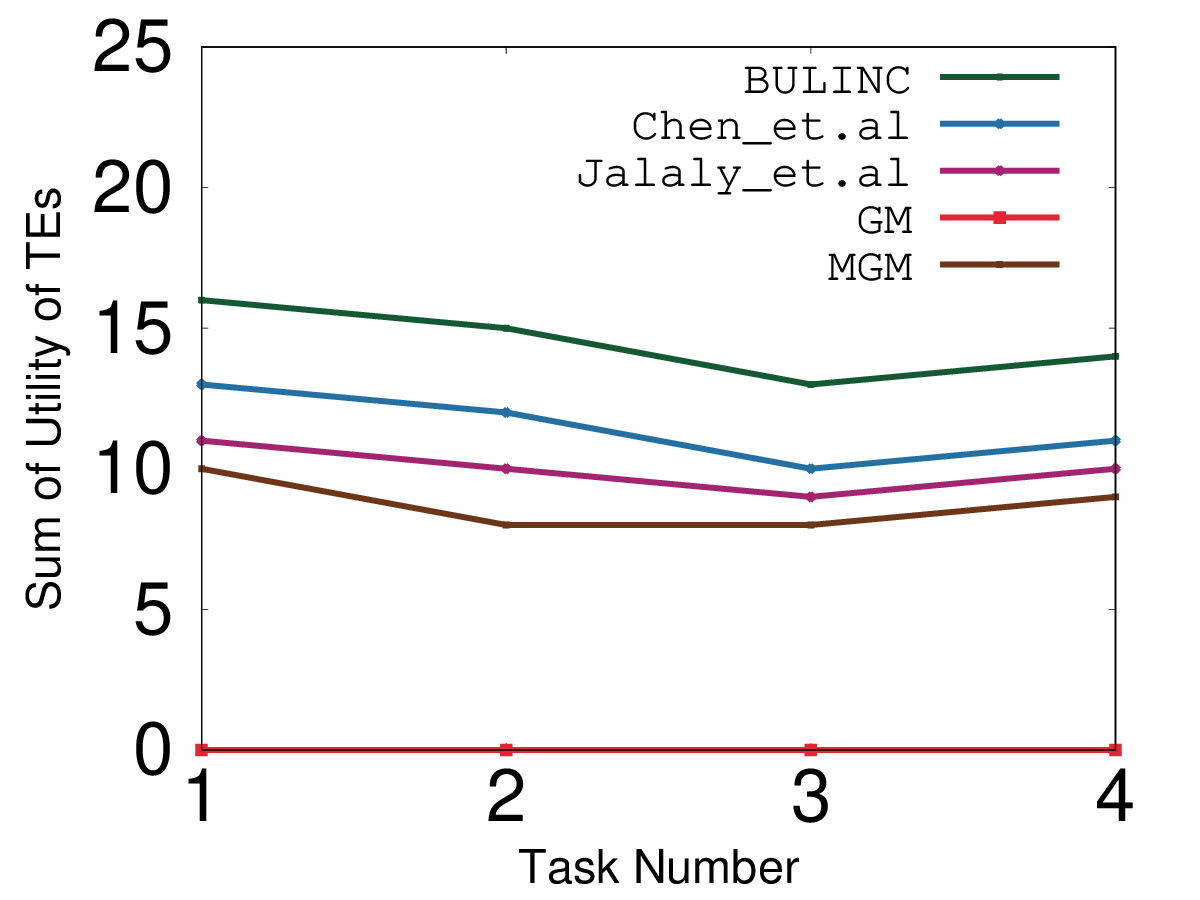}
\subcaption{Utility of TEs in slot 4}
\label{sim:1aslnd}
\end{subfigure}
\caption{Comparison of Utility of TEs for ND case in slot 1 to slot 4 (from left to right)}
\label{Sim:1nd}
\end{figure*}

\noindent In case of tier 2, the simulation is carried out by utilizing the following metrics: (1) \emph{truthfulness}, (2) \emph{budget feasibility}, and (3) \emph{running time} for two different probability distributions. These two probability distributions are RD and ND. The simulation results shown in FIGURES \ref{Sim:1rd} and \ref{Sim:1nd} compares BULINC with the benchmark mechanisms (GM, Chen et. al \cite{Chen3801}, and Jalaly et. al \cite{Jalaly2017SimpleAE}) on the basis of \emph{utility of the task executors}. For the simulation purpose, 4 different slots are considered. It can be seen from FIGURES \ref{Sim:1rd} and \ref{Sim:1nd} that the overall profit of task executors in case of BULINC is large than in case of Chen et. al and is larger than in case of Jalaly et. al and is much higher than in case of GM (in case of GM the utility of all the winning task executors is 0) for both RD and ND cases. The reason for the above mentioned nature of BULINC is that whatever incentive received by the TEs in BULINC that is larger than the incentive received by TEs in Chen et. al and is larger than the incentive received by the TEs in case of Jalaly et. al and is larger than the incentive received by the TEs in case of GM. Further, in case of GM, the utility of TEs is 0 because the incentive received by the task executors is equal as their bid value. However, GM can be manipulated by strategic TEs. It means that the task executors will make some profit if they misreport their privately held true bid value. In order to show the manipulative behavior of GM, in our simulation, for the subset of TEs the bid value is increased by $30\%$ from their true value. It can be seen in FIGURES \ref{Sim:1rd} and \ref{Sim:1nd} that, in case of GM, the utility of tasks executors are higher when they are not reporting in a truthful manner (it is depicted as MGM in the graphs) than in case when they are reporting truthfully. From simulation results, it can be inferred that GM can be manipulated by the strategic TEs in the crowdsensing system.

\begin{figure*}
\begin{subfigure}{0.52\textwidth}
\includegraphics[scale = 0.40]{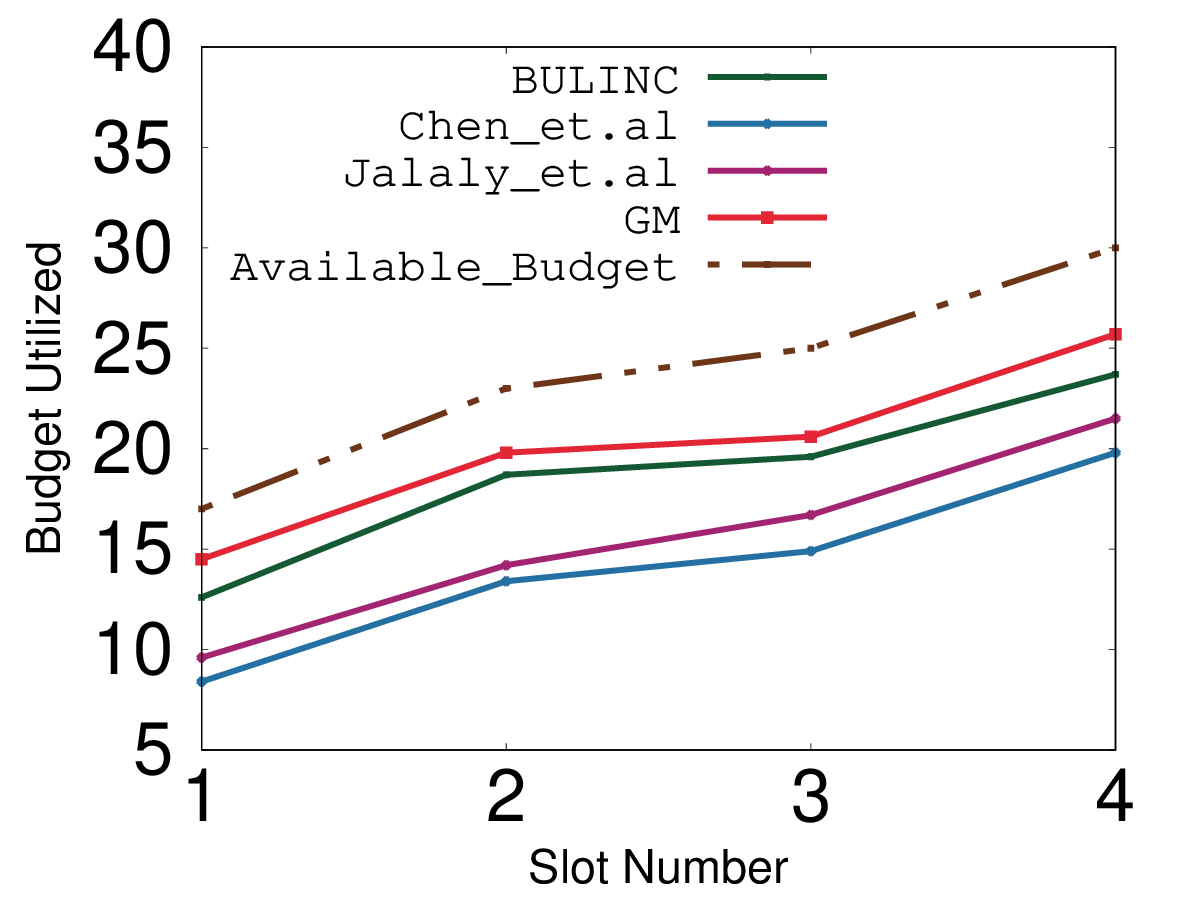}
\subcaption{Budget utilized by BULINC and benchmark mechanisms in RD}
\label{sim:1rda}
\end{subfigure}%
\begin{subfigure}{0.52\textwidth}
\includegraphics[scale = 0.40]{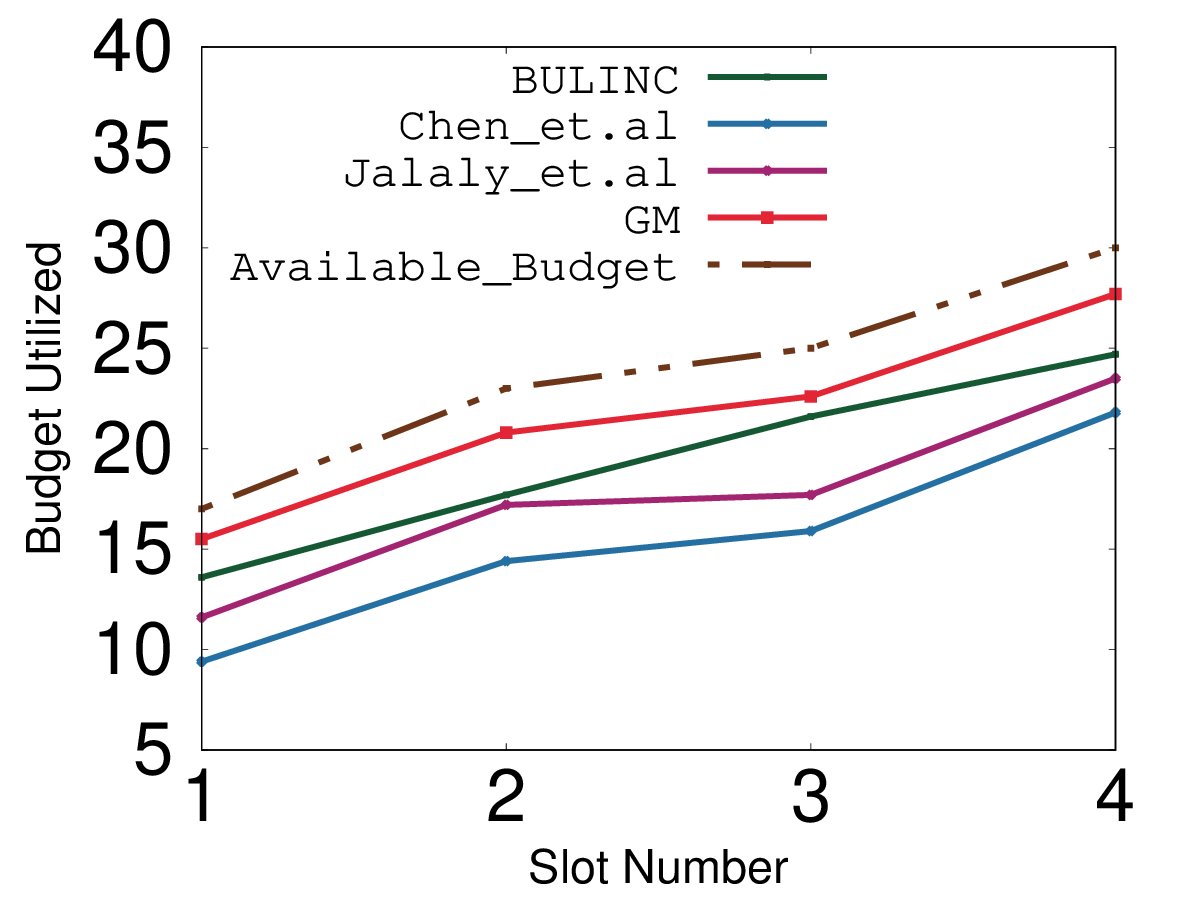}
\subcaption{Budget utilized by BULINC and benchmark mechanisms in ND}
\label{sim:1nda}
\end{subfigure}
\caption{Comparison of budget utilized by BULINC and benchmark mechanisms.}
\label{sim:3}
\end{figure*}
\begin{figure*}
\begin{subfigure}{0.52\textwidth}
\includegraphics[scale = 0.40]{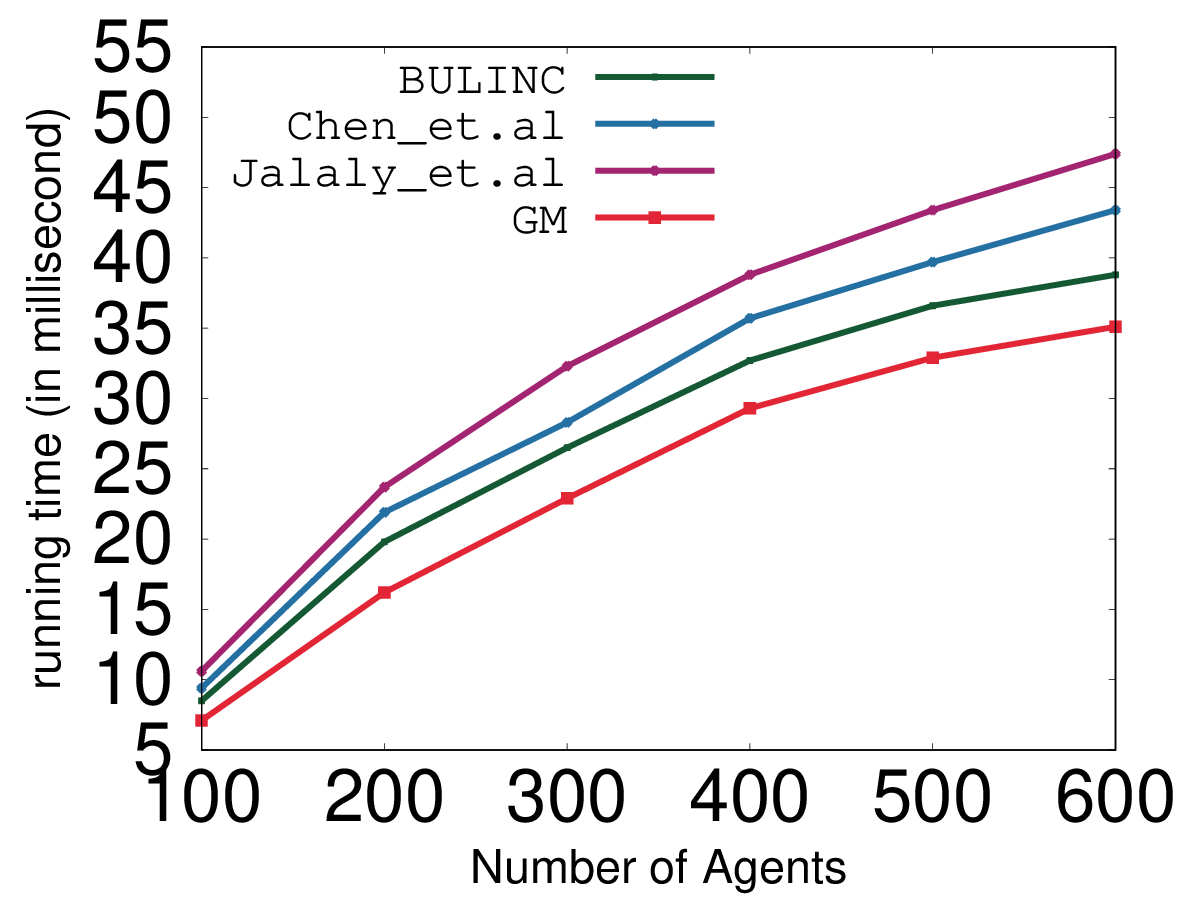}
\subcaption{Running time of BULINC and benchmark mechanisms in RD}
\label{sim:3rda}
\end{subfigure}%
\begin{subfigure}{0.52\textwidth}
\includegraphics[scale = 0.40]{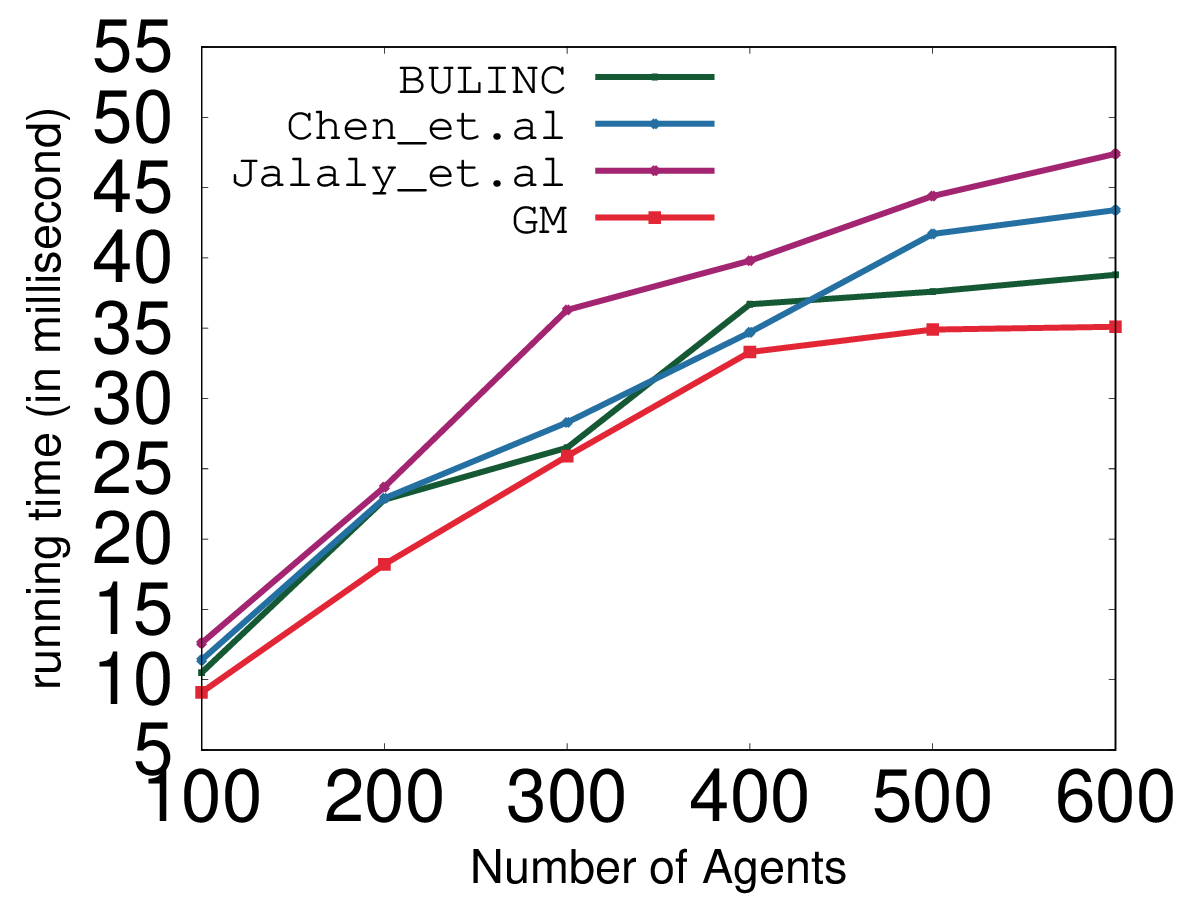}
\subcaption{Running time of BULINC and benchmark mechanisms in ND}
\label{sim:3nda}
\end{subfigure}
\caption{Comparison of running time of BULINC and benchmark mechanisms}
\label{sim:3b}
\end{figure*}
\indent FIGURE \ref{sim:3} tells us about the amount of budget utilized in each of the slots. It is evident from FIGURE \ref{sim:3} that the overall budget utilized as an incentive of TEs in case of BULINC is higher than the overall budget utilized as an incentive of the TEs in case of Jalaly et. al and is higher than the overall budget utilized as an incentive of the TEs in case of Chen et. al. However, the overall budget utilized as the payment of TEs in case of BULINC is lower than the overall budget utilized as an incentive of the TEs in case of GM. The reason behind this is that the incentive given to the TEs in case of BULINC is higher than the incentives provided to the TEs in case of Jalaly et. al and Chen et. al. Also, the overall incentives given to the TEs in case of BULINC and the benchmark mechanisms is less than or equal to the total allotted quota of budget in each slot. So, the BULINC and the benchmark mechanisms are \emph{budget feasible}.\\
\indent Talking in terms of our third metric $i.e.$ \emph{running time} that is depicted in FIGURE \ref{sim:3b} for both RD and ND cases. It is evident from the graphs shown in FIGURE \ref{sim:3b} that the $x$-axis of the graph represents the \emph{number of agents} and $y$-axis of the graph represents the running time in millisecond. Here, the number of agents means the sum of the number of task executors and the number of task requesters available in the crowdsensing market. From simulation results, it is evident that the running time of BULINC is lower than \emph{Chen et. al} and \emph{Jalaly et. al} but is higher than the GM. The reason behind GM having lowest running time is that the \emph{allocation rule} and {payment rule} of GM is taking less time for execution. Further, the reason behind higher running time of \emph{Chen et. al} and \emph{Jalaly et. al} is that the \emph{allocation rule} and {payment rule} of these mechanism takes more time for execution than BULINC and GM. So, BULINC is scalable.

\section{Conclusion and Future Works}
\label{sec13}
In this article, one crowdsensing scenario is investigated as a two-tiered framework in strategic setting. As the task requesters are not having the budget required for completing the tasks and is to be supplied by the Government, for that reason the scenario is investigated as a two tiered process. In the tier 1, it is decided by the city dwellers (or local people) that which of the available task requesters will be receiving the budget from the Government? Once such task requesters are filtered out, in the second tier, firstly the tasks are distributed to multiple slots in order to ensure that the tasks with different time frame is lying in the same slot. After that for each slot the task executors are hired such that the total incentive given to the TEs is at most the allotted quota of budget for the respective tasks.\\
\indent Through analysis it is shown that BULINC is \emph{truthful}, \emph{budget feasible}, and is \emph{individually rational}. Further, through probabilistic analysis it is estimated that the number the number of task requesters receiving the Government fund is $\frac{\mathcal{N}}{\lambda}$. The results obtained from the simulation of the first tier shows an estimate on the number of task requesters that receives the GF from among the available once for both SPDD and RTPDD. The results show that if $Pr\{\mathcal{Z}_i=1\}$ value is high then, the number of task requesters receiving GF will be high but it that is low then, the number of task requesters receiving GF will be low. The simulation results of the second tier shows that BULINC is \emph{truthful}, \emph{budget feasible}, and \emph{computationally efficient}. Further results shows that BULINC and the benchmark mechanisms are budget feasible and scalable.\\
\indent  As a future works, the following possible directions could be explored:
\begin{enumerate}
\item Design an incentive compatible mechanism by extending the set-up discussed in this article with the constraint that the participating executors follow general valuation (not specifically only submodular valuation).
\item Further in the proposed framework, the quality of the task executors are not taken into consideration and can be considered in our future work. So, designing a truthful budget feasible mechanism will be challenging that will also take care of the quality of the task executors.        
\end{enumerate}
\indent To reproduce the results obtained in this paper, the code, the synthetic data, and the real time participatory democracy data will be made available in the following public repository  

\section*{Acknowledgment}
The authors would like to thank the faculties and PhD students of School of Computer Science and Engineering, VIT-AP University, Amaravati, India for providing valuable suggestions and support.

\bibliographystyle{IEEEtran}
\bibliography{bare_jrnl_new_sample4}

\vspace{11pt}

\begin{IEEEbiographynophoto}{Chattu Bhargavi}
 is a Ph.D. student in the School of Computer Science and Engineering at VIT-AP University, Amaravati, Andhra Pradesh, India. Her research interests include Algorithmic Game Theory and its Applications, Crowdsourcing, and Mobile crowsourcing. She can be reached at bhargavi.21phd7122@vitap.ac.in.
\end{IEEEbiographynophoto}

\vspace{11pt}

\begin{IEEEbiographynophoto}{Vikash Kumar Singh} is an Assistant Professor in the School of Computer Science and Engineering at VIT-AP University, Amaravati, Andhra Pradesh, India. He received his Ph.D. in Engineering and M.Tech in Computer Science and Engineering in 2019 and 2014, respectively from the Department of Computer Science and Engineering of National Institute of Technology (NIT) Durgapur, West Bengal, India. He received his B.Tech in Information Technology from Gautam Buddh Technical University, Lucknow, Uttar Pradesh in 2011. He has published papers in reputed conferences and journals such as IEEE HealthCom, IEEE Access, IEEE AINA, LNCS Transactions on Computational Collective Intelligence (Springer), etc. He is a reviewer for the journals like Internet of Things (Elsevier), Concurrency and Computation: Practice and Experience (Wiley), and International Journal for Grid and Utility Computing (Inderscience). His research interests include game theory, mechanism design, crowdsourcing, and the matching market.

\end{IEEEbiographynophoto}

\vfill

\end{document}